\begin{document}
\AtEndDocument{%
  \par
  \medskip
  \begin{tabular}{@{}l@{}}%
    \textsc{Gabriel Coutinho}\\
    \textsc{Dept. of Computer Science} \\ 
    \textsc{Universidade Federal de Minas Gerais, Brazil} \\
    \textit{E-mail address}: \texttt{gabriel@dcc.ufmg.br} \\ \ \\
    \textsc{Frederico Cançado} \\
    \textsc{Dept. of Mathematics} \\ 
    \textsc{Universidade Federal de Minas Gerais, Brazil} \\
    \textit{E-mail address}: 
    \texttt{fredericocancado@ufmg.br}\\ \ \\
    \textsc{Thomás Jung Spier} \\
    \textsc{Dept. of Combinatorics and Optimization} \\ 
    \textsc{University of Waterloo, Canada} \\
    \textit{E-mail address}: \texttt{tjungspier@uwaterloo.ca}
  \end{tabular}}

\title{Orthogonal polynomials, quantum walks and the Prouhet–Tarry–Escott problem}
\author{Frederico Cançado \and Gabriel Coutinho\footnote{gabriel@dcc.ufmg.br --- remaining affiliations in the end of the manuscript.} \and Thomás Jung Spier}
\date{\today}
\maketitle
\author
\vspace{-0.8cm}

\begin{abstract} 
    This paper is motivated by the following problem. Define a quantum walk on a positively weighted path (linear chain). Can the weights be tuned so that perfect state transfer occurs between the first vertex and any other position? We do not fully answer this question --- in fact, we show that a particular case of this problem is equivalent to a solution of a particular case of the well known Prouhet–Tarry–Escott problem, deeming our original task certainly harder than anticipated. In our journey, we prove new results about sequences of orthogonal polynomials satisfying three-term recurrences. In particular, we provide a full characterization of when two polynomials belong to such a sequence, which (as far as we were able to ascertain) was known only for when their degrees differ by one. 
\end{abstract}

\begin{center}
\textbf{Keywords}
perfect state transfer ; strong cospectrality ; orthogonal polynomials ; Prouhet–Tarry–Escott problem.
\end{center}


\section{Introduction}\label{sec:introduction}

In this paper we are concerned with a chain of qubits subject to a nearest neighbor interaction regime. The coupling strengths must be positive, and we allow for a possible external magnetic field at each qubit. In mathematical terms, we are modeling the qubit network by an undirected, weighted graph, with positive edge weights, possibly with weighted loops at each vertex, whose adjacency matrix is tridiagonal. The main problem we study is that of completely transferring the state at the excited site to another site in the network, a phenomenon known as ``perfect state transfer". According to Schrödinger's equation, this is equivalent to the following mathematical statement:

\begin{itemize}
	\item If $A$ is symmetric and tridiagonal of dimension $d+1$, with positive off-diagonal terms, when is it the case that for some value of $t$ an off-diagonal entry of $\exp(\ii t A)$ has absolute value equal to $1$?
\end{itemize}

Since the seminal work in~\cite{ChristandlPSTQuantumSpinNet2}, there has been a long list of works devoted to the study of perfect state transfer in all sorts of graphs. We refer to the works in~\cite{KayPerfectcommunquantumnetworks,CoutinhoGodsilSurvey} and references therein for detailed introductions to the topic.

The connection between tridiagonal matrices and sequences of orthogonal polynomials is well known, and in the context of quantum walks it has been successfully exploited (see for instance~\cite{VinetZhedanovHowTo, vinet2012dual, VinetZhedanovAlmost}). In~\cite{VinetZhedanovHowTo} a complete characterization of perfect state transfer between extremal sites of weighted paths was obtained. These examples also admit state transfer between any pair of symmetric positions, though they do not provide in this latter case an exhaustive list. What, however, seems harder and somewhat unintuitive is the problem of characterizing perfect state transfer between asymmetric positions. In~\cite{Coutinho2019a}, an infinite family of weighted paths admitting perfect state transfer between asymmetric sites was obtained, also by exploring the connection to orthogonal polynomials. 

We raise the question of whether it is possible to completely describe all possible cases of perfect state transfer in weighted chains. While this still seems out of reach, in this paper we present several significant advances towards this quest. Again, assume the chain contains $d+1$ sites, and all coupling strengths are positive.

Our first main result is a characterization of when two monic polynomials with distinct real roots belong to an orthogonal polynomial sequence corresponding to a tridiagonal matrix~\footnote{It is plausible this result is already known, however our best efforts could not locate a reference.} --- which should be of interest on its own.

\begin{theorem*}
Let $q$ and $p$ be two monic polynomials with distinct real zeros of degrees $\ell$ and $m$, respectively, where $\ell < m$. Then there exists a tridiagonal matrix with positive off-diagonal entries determining a sequence of orthogonal polynomials containing $p$ and $q$, if and only if, the following property holds:
\begin{itemize}
	\item[] In any open interval determined by two roots of $q$, there is a root of $p$, and all the zeros of $q$ lie in the open interval determined by the smallest and largest zeros of $p$. 
\end{itemize}
\end{theorem*}

This theorem is restated as Theorem~\ref{thm:construction_OPS}, and proved in Section~\ref{sec:orthogonal_polynomial_sequence}.

When studying perfect state transfer, an important concept is that of cospectral vertices. Loosely speaking, two vertices are cospectral if quantum walks~\footnote{This also applies to classical random walks.} of any duration and starting at each vertex have the same probability of return, for both vertices. This concept has been studied in classical contexts since the 1970s, and its connection to quantum walks was studied in \cite{godsil2017strongly}. In particular, perfect state transfer between two vertices is possible only if they are cospectral.

Our second main result is a characterization of which positions in a weighted path might admit a pair of cospectral vertices. As usual, we assume the edges connecting neighbouring vertices have positive weight.

\begin{theorem*}
	Let $0\leq \ell<m\leq d$. There exists a weighted path with $d+1$ vertices and positive edge weights in which (vertices in positions) $\ell$ and $m$ are cospectral if, and only if, $\ell<\frac{d}{2}<m$.
\end{theorem*}

We remark that this result should also have implications in the study of classical random walks and Markov chains. We restate this as Theorem~\ref{thm:cospectral_characterization} and prove it in Section~\ref{sec:cospectral_vertices_positions}.

Perfect state transfer has been characterized between the extremal positions of linear chains, and there are known infinite families admitting state transfer between the first and the second-to-last positions. In light of the result above, it is natural to ask if perfect state transfer might happen between vertices at the first position and at the closest possible, which will be right after the middle of the chain. If that is the case, then the two vertices are cospectral and periodic, as we will soon explain. Our third main result shows that such examples of pairs of vertices will imply the existence of ideal solutions of the Prouhet--Tarry-Escott problem, which seems to have been first proposed in the 1750s and for which solutions are known only for a few cases~\footnote{Cases corresponding to $n$ integers with $3 \leq n \leq 10$, and for $n = 12$.} (see \cite{coppersmith2024ideal} for a recent exposition).

\begin{theorem*}
	In a linear chain with vertices $\{0,\cdots,d\}$, if there are periodic and cospectral vertices in positions $0$ and $n = \left\lceil\frac{d+1}{2}\right\rceil$, then there is an ideal solution to the Prouhet--Tarry--Escott problem on $n$ integers.
\end{theorem*}

A slightly more general version of this theorem is stated as Theorem~\ref{thm:PTE_equivalence} and proved in Section~\ref{sec:periodic_cospectral_vertices}.

Finally, we finish the paper discussing some consequences to the study of perfect state transfer in linear chains in Section~\ref{sec:PST}, along with some open questions in the end.


\subsection{A quick recap of orthogonal polynomials and tridiagonal matrices}\label{sec:preliminaries1}

In this section we briefly introduce the notation we use throughout the paper, while recalling the main results and definitions from the literature. We refer the reader to~\cite{chihara2011introduction} for a complete introduction to orthogonal polynomials, and to~\cite{VinetZhedanovHowTo} for an elementary introduction suited for an application to quantum walks.

Consider a spin chain on $d+1$ sites with Jacobi matrix 
\begin{equation}\label{eq:jacobi_matrix}J := \left[ {\begin{array}{cccc}
	a_0 & \lambda_1 & & \\
    \lambda_1 & a_1 & \ddots & \\
	   & \ddots & \ddots & \lambda_d \\ 
	   &  &  \lambda_d & a_d \\ 
	\end{array} } \right],
\end{equation}
where we assume that $a_i,\lambda_i \in \mathbb{R}$ and $\lambda_i > 0$ for all $i$. In this paper, we shall refer to these as \textbf{$\mathbf{d}$--chains} (with $d+1$ sites).

Its eigenvalues are simple \cite{chihara2011introduction}. We shall adhere to the convention that they are ordered largest to smallest, by $\theta_0 > \cdots > \theta_d$. Observe that $J$ is the adjacency matrix of a weighted path $P$, shown in Figure~\ref{fig:wightedpath}.

\begin{figure}[H]
    \centering
\begin{tikzpicture}
\draw (0,0) node[draw, circle](q1){};
\draw (2,0) node[draw, circle](q2){};
\draw (4,0) node[draw, circle](q3){};
\draw (6,0) node[](qd){$\dots$};
\draw (8,0) node[draw, circle](qn){};

\draw[-] (q2) edge node[below, midway, fill=white]{$\lambda_1$} (q1);
\draw[-] (q3) edge node[below, fill=white]{$\lambda_2$} (q2);
\draw[-] (qd) edge node[below, fill=white]{$\lambda_3$} (q3);
\draw[-] (qn) edge node[below, fill=white]{$\lambda_d$} (qd);
\draw[->] (q1) edge[loop above] node[above,midway, fill=white]{$a_0$} (q1);
\draw[->] (q2) edge[loop above] node[above,midway, fill=white]{$a_1$} (q2);
\draw[->] (q3) edge[loop above] node[above,midway, fill=white]{$a_2$} (q3);
\draw[->] (qn) edge[loop above] node[above,midway, fill=white]{$a_d$} (qn);
\end{tikzpicture}
    \caption{Weighted path corresponding to tridiagonal matrix, or a $d$--chain.}
    \label{fig:wightedpath}
\end{figure}
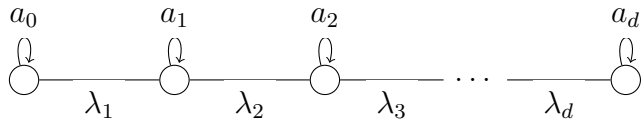

Let $p_{-1}\equiv 0$, $p_0 \equiv 1$, $p_1(x):=x-a_0$ and $p_{k+1}(x):=(x-a_k)p_k-\lambda_k^2 p_{k-1}(x)$ for $k \in [d]$ be the orthogonal polynomial sequence associated with $J$. In this case, $p_k(x)$ has degree $k$ and is the characteristic polynomial of the principal submatrix of $J$ formed by the first $k$ rows and columns. Therefore,
\begin{align}
 p_{d+1}=(x-\theta_d)\cdots(x-\theta_0).   \label{eq:charpol}
\end{align}

By rearranging the recurrence relation defining the orthogonal polynomial sequence and observing that $p_{d+1}(\theta_s)=0$, it also follows that

\[\ket{\theta_s}=\pmat{ p_0(\theta_s) \\[5pt] \dfrac{p_1(\theta_s)}{\lambda_1} \\ \vdots \\ \dfrac{p_d(\theta_s)}{\lambda_1\cdots \lambda_d}}\]

\noindent is the eigenvector of $J$ corresponding to the eigenvalue $\theta_s$. Note that, $\{\ket{\theta_s}\}_{s=0,\dots, d}$ is an orthogonal basis of eigenvectors of $J$.

\subsection{Quantum walks and cospectral vertices}\label{sec:preliminaries2}

We say perfect state transfer occurs from site $\ell$ to site $m$ if, for some real number $\tau$, 
\begin{align}
    \exp(\ii \tau J) \ket \ell = \gamma \ket m, \label{eq:pst}
\end{align}
where $\gamma$ is a complex number in the unit circle, and $\ket{k}$ is the vector with zero on all positions except for the $k^{th}$ position, which has a one.

It is straightforward to verify that if there is perfect state transfer from $\ell$ to $m$, then there is also from $m$ to $\ell$ at the same time (and with the same $\gamma$). Because we allow $J$ to be weighted, it is without loss of generality that we can take $\tau = \pi$, since we can just rescale the matrix by an overall multiplicative factor, and $\gamma = 1$, since we can always add a multiple of the identity matrix $I$ to $J$. Thus, Equation~\eqref{eq:pst} is, up to rescaling and translation, equivalent to having:
\begin{align}
\forall s,\ \mathrm{e}^{\ii \pi \theta_s} \braket{\theta_s}{\ell} = \braket{\theta_s}{m}. \label{eq:pst2}
\end{align}

If $\ell = 0$ and $m = d$, a complete and useful description of all solutions to this problem was obtained in~\cite{KayReviewPST} with a direct method, and in~\cite{VinetZhedanovHowTo} applying orthogonal polynomials. For the case $\ell = 0$ and $m = d-1$, examples were constructed in~\cite{KayPerfectcommunquantumnetworks} and~\cite{coutinho2019perfect}.

Using the orthogonal polynomial sequence we can write out conditions depending only on the polynomials and the eigenvalues that are equivalent to perfect state transfer between any two vertices in a $d$--chain.

\begin{theorem}\label{thm:PST_characterization_paths} Consider a $d$--chain. There is perfect state transfer between vertices in positions $\ell$ and $m$ if and only if there exists a scaled and translated transformed $d$--chain corresponding to polynomials $\{p_r\}_{r=0}^{d+1}$, with integer eigenvalues $\{\theta_s\}_{s=0}^d$, so that
\begin{itemize}
    \item For some $C>0$ we have that $p_\ell(\theta_s)=(-1)^{\theta_0-\theta_s}Cp_m(\theta_s)$,
for each eigenvalue $\theta_s$.
\end{itemize}
This transformed $d$--chain admits perfect state transfer at time $\pi$.
\end{theorem}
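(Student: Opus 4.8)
The plan is to translate the perfect state transfer condition~\eqref{eq:pst2} into a statement about the orthogonal polynomials, and then observe that the freedom to rescale and translate $J$ lets us normalise the eigenvalues to be integers. First I would recall, from the eigenvector formula in Section~\ref{sec:preliminaries1}, that
\[
\braket{\theta_s}{k} = \frac{p_k(\theta_s)}{\lambda_1\cdots\lambda_k\,\|\ket{\theta_s}\|}\,,
\]
so that $\braket{\theta_s}{m}/\braket{\theta_s}{\ell}$ equals $p_m(\theta_s)/p_\ell(\theta_s)$ up to a factor $(\lambda_1\cdots\lambda_\ell)/(\lambda_1\cdots\lambda_m)$ that is independent of $s$. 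Substituting into~\eqref{eq:pst2}, perfect state transfer between positions $\ell$ and $m$ at time $\tau$ is equivalent to $\mathrm{e}^{\ii\tau\theta_s}\,p_\ell(\theta_s) = C^{-1}p_m(\theta_s)$ for all $s$, where $C>0$ is the fixed ratio of the partial products of the $\lambda_i$ (this is positive since all $\lambda_i>0$). Here I would note that $\ket{\ell}$ and $\ket{m}$ are real, so the phases $\mathrm{e}^{\ii\tau\theta_s}$ must all be real, i.e. $\mathrm{e}^{\ii\tau\theta_s}=\pm 1$; thus $\tau\theta_s \in \pi\mathbb{Z}$ for every $s$.

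Next I would handle the rescaling and translation. Replacing $J$ by $\beta J + \alpha I$ (with $\beta>0$) replaces each $\theta_s$ by $\beta\theta_s+\alpha$ and replaces $p_k(x)$ by $\beta^k p_k\bigl((x-\alpha)/\beta\bigr)$, which scales the required identity by the $s$-independent constant $\beta^{m-\ell}$ and hence only changes the value of $C$, not the existence of such a $C$. Since $\tau\theta_s\in\pi\mathbb{Z}$ for all $s$, choosing $\beta=\tau/\pi$ (and any convenient $\alpha$, say one making the eigenvalues integers — e.g. $\alpha=0$ already works, or one can shift to make them nonnegative) yields a transformed chain whose eigenvalues $\tilde\theta_s := (\tau/\pi)\theta_s$ are integers, with the transfer time now exactly $\pi$. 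With integer eigenvalues ordered $\tilde\theta_0>\cdots>\tilde\theta_d$, the sign $\mathrm{e}^{\ii\pi\tilde\theta_s}=(-1)^{\tilde\theta_s}$, and normalising the largest one, the condition becomes $p_\ell(\tilde\theta_s) = (-1)^{\tilde\theta_0-\tilde\theta_s}\,C\,p_m(\tilde\theta_s)$ for each $s$, with $C>0$; I would absorb the constant $(-1)^{\tilde\theta_0}$ into $C$'s sign only if needed, but writing it with the $(-1)^{\tilde\theta_0-\tilde\theta_s}$ prefactor keeps $C>0$ honest because $\tilde\theta_0-\tilde\theta_s\geq 0$ and the $\ell=0$, $m=d$ case (where $p_0\equiv 1$ and $p_d$ alternates in sign on the spectrum) shows the prefactor is the natural one. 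The converse direction runs the same equivalences backwards: given such polynomials and integer eigenvalues, reconstruct the Jacobi matrix from the three-term recurrence (the $\lambda_i$ recovered as the positive square roots from the recurrence coefficients), verify~\eqref{eq:pst2} with $\tau=\pi$, and conclude perfect state transfer at time $\pi$.

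The main obstacle I anticipate is bookkeeping the signs and the constant $C$ consistently: making sure that the parity factor $(-1)^{\tilde\theta_0-\tilde\theta_s}$ is exactly what the eigenvalue phases produce (as opposed to $(-1)^{\tilde\theta_s}$ up to a global sign), and that $C$ can genuinely be taken positive rather than merely nonzero. This is pinned down by checking the extremal case or by observing that $p_\ell$ and $p_m$, being characteristic polynomials of leading principal submatrices, interlace the spectrum and therefore have a controlled sign pattern on $\{\tilde\theta_s\}$; matching those two sign patterns forces the stated parity prefactor. A secondary, more routine point is confirming that the transformation $J\mapsto \beta J+\alpha I$ really does act on the polynomial sequence as claimed and preserves positivity of off-diagonal entries (it does, since $\beta>0$), so that the ``transformed $d$-chain'' in the statement is again a legitimate $d$-chain.
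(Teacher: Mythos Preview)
Your approach is essentially the one the paper takes: both arguments unwind~\eqref{eq:pst2} using the eigenvector formula to get a polynomial identity, then use the rescaling/translation freedom to force integer eigenvalues and the correct parity prefactor. The paper's proof is terser because it packages the constant-$C$ issue into Lemma~\ref{lem:orthonormality}.

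One point to tighten: in your converse direction you say ``verify~\eqref{eq:pst2} with $\tau=\pi$'', but the hypothesis only gives you \emph{some} $C>0$, not the specific value $C=(\lambda_1\cdots\lambda_\ell)/(\lambda_1\cdots\lambda_m)$ that makes the phase $\gamma$ unimodular. That these coincide is exactly the content of Lemma~\ref{lem:orthonormality} (the unit-row argument for the orthogonal matrix $U$), which the paper invokes explicitly. Your write-up flags the constant bookkeeping as an anticipated obstacle but does not actually supply this step; citing Lemma~\ref{lem:orthonormality} (or reproducing its one-line normalisation argument) closes the gap.
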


While this theorem is implicit in previous works (see \cite{VinetZhedanovHowTo} for instance), we will provide a proof shortly.

An important concept in the study of quantum walks is that of cospectral vertices.~\footnote{For graphs that admit eigenvalues with higher multiplicity, one must distinguish between cospectral and strongly cospectral vertices. We refer to \cite{godsil2017strongly} for a complete treatment. When the eigenvalues are simple, both definitions are equivalent, so we may as well adopt a simplified approach.}. We say two vertices $\ell$ and $m$ are \textit{cospectral} in a $d$--chain if
\[
    \forall s,\ |\braket{\theta_s}{\ell}| = |\braket{\theta_s}{m}|.
\]

It follows that the two vertices are cospectral if, and only if,  

\[\forall s,\ \left|\dfrac{p_\ell(\theta_s)}{\lambda_1\cdots \lambda_\ell}\right| = \left|\dfrac{p_m(\theta_s)}{\lambda_1\cdots \lambda_m}\right|.\]

We can do slightly better:

\begin{lemma}\label{lem:orthonormality}
    The vertices $\ell$ and $m$ of a $d$--chain are cospectral if, and only if, for some constant $C>0$, $|p_\ell(\theta_s)|=C|p_m(\theta_s)|$ for every $s=0,\dots,d$.
\end{lemma}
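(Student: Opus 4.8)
The plan is to start from the displayed characterization just above the lemma, namely that $\ell$ and $m$ are cospectral if and only if
\[
\left|\frac{p_\ell(\theta_s)}{\lambda_1\cdots\lambda_\ell}\right| = \left|\frac{p_m(\theta_s)}{\lambda_1\cdots\lambda_m}\right|\quad\text{for every }s,
\]
and to observe that this is literally $|p_\ell(\theta_s)| = C\,|p_m(\theta_s)|$ with the positive constant $C := (\lambda_1\cdots\lambda_\ell)/(\lambda_1\cdots\lambda_m) = (\lambda_{\ell+1}\cdots\lambda_m)^{-1}$ (using $\ell < m$; if $\ell = m$ the statement is trivial with $C=1$). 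So the ``only if'' direction is immediate, and for ``if'' I would simply divide by $\lambda_1\cdots\lambda_\ell$ and absorb the constants to recover the displayed condition, hence cospectrality. In other words the lemma is essentially a restatement, the only content being that a single global constant $C$ suffices rather than an a priori $s$-dependent ratio — and that is automatic because the ratio of norming factors does not depend on $s$.

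First I would recall from Section~\ref{sec:preliminaries1} that the (unnormalized) eigenvector for $\theta_s$ has $k$-th entry $p_k(\theta_s)/(\lambda_1\cdots\lambda_k)$, and that $\{\ket{\theta_s}\}$ is an \emph{orthogonal} but not necessarily orthonormal basis; write $\nu_s := \|\ket{\theta_s}\|$ for the norms. Then $\braket{\theta_s}{\ell}$ as used in the cospectrality definition should be interpreted via the orthonormal eigenbasis $\ket{\theta_s}/\nu_s$, so that $\braket{\theta_s}{\ell} = p_\ell(\theta_s)/(\nu_s\,\lambda_1\cdots\lambda_\ell)$. Thus
\[
|\braket{\theta_s}{\ell}| = |\braket{\theta_s}{m}|
\iff
\frac{|p_\ell(\theta_s)|}{\lambda_1\cdots\lambda_\ell} = \frac{|p_m(\theta_s)|}{\lambda_1\cdots\lambda_m},
\]
since the $\nu_s$ cancels on both sides. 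The point worth spelling out — and the one place a reader might stumble — is precisely this cancellation of the (generally unknown) norming constants $\nu_s$: it is what lets the ``slightly better'' improvement go through, because the only surviving factor is the $s$-independent quotient of the $\lambda$'s.

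From there the two implications are one line each. For the forward direction, given cospectrality, multiply through by $\lambda_1\cdots\lambda_m$ to get $|p_\ell(\theta_s)|\,\lambda_{\ell+1}\cdots\lambda_m = |p_m(\theta_s)|$, i.e. $|p_m(\theta_s)| = C^{-1}|p_\ell(\theta_s)|$ with $C = (\lambda_{\ell+1}\cdots\lambda_m)^{-1} > 0$; equivalently $|p_\ell(\theta_s)| = C|p_m(\theta_s)|$. Conversely, if $|p_\ell(\theta_s)| = C|p_m(\theta_s)|$ for all $s$ with $C>0$, divide by $\lambda_1\cdots\lambda_\ell$ and note $C/(\lambda_1\cdots\lambda_\ell) = $ some positive constant; but to land exactly on the displayed cospectrality condition we also need $C = (\lambda_1\cdots\lambda_\ell)/(\lambda_1\cdots\lambda_m)$, which is not forced by the hypothesis. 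I would resolve this by evaluating at the top eigenvalue $\theta_0$: since $p_\ell$ and $p_m$ are the characteristic polynomials of leading principal submatrices with all eigenvalues strictly below $\theta_0$ (by interlacing, as the eigenvalues of $J$ are simple and those of the submatrices strictly interlace), we have $p_\ell(\theta_0), p_m(\theta_0) > 0$, and more relevantly the constant $C$ is then pinned down — or, cleaner still, I would phrase the lemma's conclusion as ``for \emph{some} $C>0$'' so that no pinning is needed: the existence of \emph{a} positive $C$ with $|p_\ell(\theta_s)| = C|p_m(\theta_s)|$ is equivalent to the existence of \emph{a} positive constant $C'$ with $|p_\ell(\theta_s)|/(\lambda_1\cdots\lambda_\ell) = C'|p_m(\theta_s)|/(\lambda_1\cdots\lambda_m)$ (take $C' = C\lambda_{\ell+1}\cdots\lambda_m$), and the latter, dividing both sides by the same positive number, is equivalent to $|\braket{\theta_s}{\ell}| = C''|\braket{\theta_s}{m}|$ for some $C''>0$; finally $C'' = 1$ is automatic because both $\ket{\ell}$ and $\ket{m}$ are unit vectors expanded in the orthonormal eigenbasis, so $\sum_s |\braket{\theta_s}{\ell}|^2 = 1 = \sum_s|\braket{\theta_s}{m}|^2$ forces $C'' = 1$. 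That last normalization step — deducing $C''=1$ from unit norms — is the only genuinely non-cosmetic point, and it is the part I would write out carefully.
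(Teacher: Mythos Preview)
Your proposal is correct and lands on essentially the same argument as the paper: after absorbing the $\lambda$-factors you get $|\braket{\theta_s}{\ell}| = C''|\braket{\theta_s}{m}|$ for some $C''>0$, and then the normalization $\sum_s|\braket{\theta_s}{\ell}|^2 = \sum_s|\braket{\theta_s}{m}|^2 = 1$ forces $C''=1$. The paper phrases this identically but in matrix language --- the coefficients form an orthogonal matrix $U$, so its rows have unit norm --- which is exactly your Parseval step. Your interlacing detour via $p_\ell(\theta_0),p_m(\theta_0)>0$ is unnecessary (as you yourself realize), and you could streamline by going straight to the normalization argument.
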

\begin{proof} The necessity of the statement is clear, so let us prove the sufficiency. Consider the matrix $U$ with entries 
\[
U_{\ell,s}=\frac{p_\ell(\theta_s)}{\lambda_1\cdots\lambda_\ell u_s},
\]
\noindent where $u_s$ is the norm of $|\theta_s\rangle$. Note that the column vectors of $U$ form the orthonormal basis of eigenvectors of $J$. Therefore $U$ is orthogonal, and, as a consequence, its rows also form an orthonormal basis of $\mathbb{R}^{d+1}$. The hypothesis that $|p_\ell(\theta_s)|=C|p_m(\theta_s)|$ for every $s$ and some constant $C>0$ implies that the $\ell$-th and $m$--th rows of $U$ with squared coordinates are multiple of each other. Since the $\ell$-th and $m$--th rows of $U$ both have norm $1$, it follows that the squared coordinates of these rows are in fact equal. In other words, we have
\[
\left|\dfrac{p_\ell(\theta_s)}{\lambda_1\cdots \lambda_\ell u_s}\right| = \left|\dfrac{p_m(\theta_s)}{\lambda_1\cdots \lambda_m u_s}\right|,\qquad s=0,\dots,d,
\]
\noindent and hence $\ell$ and $m$ are cospectral.
\end{proof}

\begin{proof}[Proof of Theorem~\ref{thm:PST_characterization_paths}] 
    Equation \eqref{eq:pst2} is equivalent to having both vertices cospectral, and for all $\theta_i$ in the support of $\ell$ (and $m$), $\theta_i$ even if the signs of $p_\ell(\theta_s)$ and $p_m(\theta_s)$ agree, and odd otherwise. From Lemma~\ref{lem:orthonormality}, this is equivalent to the stated condition.
\end{proof}

A key technique to study cospectral vertices consists of examining the characteristic polynomial of vertex-deleted subgraphs. If $P$ is a weighted path, we denote by $\phi^P$ the characteristic polynomial of the adjacency matrix of $P$. For instance, if $P$ is as in Figure~\ref{fig:wightedpath}, with adjacency matrix $J$ as in \eqref{eq:jacobi_matrix}, then $\phi^P = p_{d+1}$ from \eqref{eq:charpol}.

We let $P \setminus \ell$ denote the graph (which may now be disconnected) obtained upon deleting $\ell$ (and its incident edges) from $P$. It is well-known that two vertices $\ell$ and $m$ are cospectral in $P$ if and only if $\phi^{P\setminus \ell}=\phi^{P\setminus m}$ (see \cite[Theorem 3.1]{godsil2017strongly} or \cite[Chapter 4]{GodsilAlgebraicCombinatorics} for a complete treatment). 

Giving a step further, we define the rational function
\[
\alpha^P_\ell \defeq \dfrac{\phi^P}{\phi^{P \setminus \ell}}.
\]
It follows that $\phi^{P\setminus \ell}=\phi^{P\setminus m}$ if and only if $\alpha^P_\ell = \alpha^P_m$, and also equivalently, $\alpha^{P\setminus m}_\ell=\alpha^{P\setminus \ell}_m$. See \cite{coutinho2023strong} for the first appearance of this terminology, along with applications to study cospectrality in graphs. In this paper we will make use of this approach, so for later reference, we state this as a lemma.

\begin{lemma}\label{lem:cospectral_alpha} Let $\ell$ and $m$ be vertices in a weighted path $P$ with eigenvalues $\theta_0 > \cdots > \theta_d$. Then, the following are equivalent:
\begin{enumerate}[(a)]
    \item $\ell$ and $m$ are cospectral, that is, for all $s$, $|\braket{\ell}{\theta_s}| = |\braket{m}{\theta_s}|$;
    \item $\phi^{P\setminus \ell}=\phi^{P\setminus m}$;
    \item $\alpha^P_\ell=\alpha^P_m$;
    \item $\alpha^{P\setminus m}_\ell=\alpha^{P\setminus \ell}_m$.
\end{enumerate}
\end{lemma}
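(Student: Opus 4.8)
The final statement to prove is Lemma~\ref{lem:cospectral_alpha}, which asserts the equivalence of four conditions characterizing cospectrality of two vertices $\ell$ and $m$ in a weighted path $P$. Most of the heavy lifting has already been flagged in the surrounding text via citations, so the plan is to assemble the pieces into a clean cyclic (or near-cyclic) chain of implications rather than to prove everything from scratch.

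\medskip

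\noindent\textbf{Plan of proof.} I would establish the equivalences in the order (a) $\Leftrightarrow$ (b) $\Leftrightarrow$ (c) $\Leftrightarrow$ (d). For (a) $\Leftrightarrow$ (b): recall the standard spectral-decomposition identity for the Jacobi matrix $J$ with simple eigenvalues $\theta_0 > \cdots > \theta_d$, namely
\[
\alpha^P_\ell = \frac{\phi^P}{\phi^{P\setminus\ell}} = \sum_{s=0}^d \frac{\braket{\ell}{\theta_s}^2}{x-\theta_s},
\]
which follows from writing $\phi^{P\setminus\ell}$ as the characteristic polynomial of the principal submatrix of $J$ obtained by deleting row and column $\ell$, together with Cramer's rule / the cofactor expansion of $(xI-J)^{-1}$. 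Since the eigenvalues are simple, each residue $\braket{\ell}{\theta_s}^2$ is strictly positive precisely when $\theta_s$ is in the support of $\ket\ell$, and these residues are determined by — and determine — the rational function $\phi^P/\phi^{P\setminus\ell}$. Hence $|\braket{\ell}{\theta_s}| = |\braket{m}{\theta_s}|$ for all $s$ iff the two partial-fraction expansions coincide iff $\alpha^P_\ell = \alpha^P_m$ iff (since $\phi^P$ is common and nonzero) $\phi^{P\setminus\ell} = \phi^{P\setminus m}$. This simultaneously gives (a) $\Leftrightarrow$ (b) $\Leftrightarrow$ (c).

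\medskip

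\noindent\textbf{The step (c) $\Leftrightarrow$ (d).} This is the one genuinely algebraic point. Here I would use the path structure: for a weighted path, if $\ell < m$ then deleting vertex $m$ from $P$ leaves a component containing $\ell$, and vice versa, and one has the multiplicativity of characteristic polynomials over connected components together with the ``path interlacing'' identity
\[
\phi^{P\setminus\ell}\cdot\phi^{P\setminus\ell\setminus m} = \phi^{P\setminus m}\cdot \phi^{P\setminus\ell\setminus m} \quad\text{(trivially)},
\]
but more usefully the fact that $\phi^P \cdot \phi^{P\setminus\ell\setminus m}$ relates to $\phi^{P\setminus\ell}\cdot\phi^{P\setminus m}$ via the weights on the sub-path between $\ell$ and $m$ (a consequence of the cofactor/Jacobi identity, sometimes called the Dodgson or Desnanot–Jacobi identity applied to the resolvent). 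From $\alpha^P_\ell = \alpha^P_m$, i.e.\ $\phi^P/\phi^{P\setminus\ell} = \phi^P/\phi^{P\setminus m}$, dividing through and using $\phi^{P\setminus\ell\setminus m}$ as the common ``bridge'' factor yields
\[
\alpha^{P\setminus m}_\ell = \frac{\phi^{P\setminus m}}{\phi^{P\setminus\ell\setminus m}} = \frac{\phi^{P\setminus\ell}}{\phi^{P\setminus\ell\setminus m}} = \alpha^{P\setminus\ell}_m,
\]
and the computation reverses, giving (c) $\Leftrightarrow$ (d). I expect this identity-juggling — making sure that $P\setminus\ell\setminus m$ behaves correctly when $\ell$ and $m$ are or are not adjacent, and that no characteristic polynomial in a denominator vanishes identically — to be the main obstacle, though it is a routine application of the Jacobi identity for minors of $xI - J$. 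Alternatively, since the literature (\cite{godsil2017strongly}, \cite{coutinho2023strong}) already records (c) $\Leftrightarrow$ (d) in this generality, I would at minimum cite it and sketch the one-line manipulation above, keeping the proof short.
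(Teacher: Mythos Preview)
The paper does not actually prove this lemma: the surrounding text cites \cite[Theorem~3.1]{godsil2017strongly} for (a)\,$\Leftrightarrow$\,(b), notes that (b)\,$\Leftrightarrow$\,(c) is immediate from the definition of $\alpha$, and refers to \cite{coutinho2023strong} for (c)\,$\Leftrightarrow$\,(d), then records the statement ``for later reference.'' So your proposal is considerably more explicit than what the paper offers, and in substance it is correct. Two small points are worth fixing, though.

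First, your displayed identity is inverted: $\alpha^P_\ell = \phi^P/\phi^{P\setminus\ell}$ has numerator degree one larger than the denominator, so it cannot equal a pure sum of simple poles. The correct identity (via Cramer's rule for the $(\ell,\ell)$ entry of $(xI-J)^{-1}$) is
\[
\frac{1}{\alpha^P_\ell}=\frac{\phi^{P\setminus\ell}}{\phi^P}=\sum_{s=0}^d \frac{|\!\braket{\ell}{\theta_s}\!|^2}{x-\theta_s},
\]
with $\ket{\theta_s}$ normalized. Your argument for (a)\,$\Leftrightarrow$\,(b)\,$\Leftrightarrow$\,(c) goes through unchanged once this is corrected.

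Second, you over-engineer (c)\,$\Leftrightarrow$\,(d). No Desnanot--Jacobi identity or path structure is needed: since $(P\setminus m)\setminus\ell = (P\setminus\ell)\setminus m = P\setminus\{\ell,m\}$, the two rational functions in (d) share the denominator $\phi^{P\setminus\{\ell,m\}}$, so (d) is literally the statement $\phi^{P\setminus m}=\phi^{P\setminus\ell}$, i.e.\ (b). Your displayed chain of equalities already captures this; the preceding discussion of cofactor identities and adjacency cases can simply be dropped.
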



\section{Orthogonal polynomial sequences} \label{sec:orthogonal_polynomial_sequence}

The goal of this Section is to prove Theorem~\ref{thm:construction_OPS}, which provides a characterization of when two given arbitrary polynomials belong to a $d$--chain.

\subsection{Necessary conditions for an OPS}\label{sec:necessary_conditions}

We say that two polynomials with real zeros of different degrees (\emph{strongly}) \emph{interlace} if, in any (\emph{open}) \emph{closed} interval determined by two roots of the polynomial of lower degree, there is a root of the polynomial of higher degree, and if all the zeros of the polynomial of lower degree lie in the open interval determined by the smallest and largest zeros of the polynomial of higher degree. Note that if two polynomials strongly interlace, then all their zeros are simple.

First, notice that if $p_m$ and $p_{d+1}$ are in an orthogonal polynomial sequence, then $p_m$ and $p_{d+1}$ strongly interlace~\cite[Thm. 6.2]{chihara2011introduction}. This is a known result, but for completeness, we present a new proof using our techniques in Corollary~\ref{cor:strongly_interlace}. 

Denote by $\hat{p}_{d-m}$ the characteristic polynomial of the principal submatrix of $J$, defined in Equation~\eqref{eq:jacobi_matrix}, consisting of the last $d-m$ rows and columns. We also denote by $\overline{p}_{m-1}$  the characteristic polynomial of the principal submatrix of $J$ consisting of the rows and columns $\{d-m+1,\dots, d-1\}$.

We now present a result that is a direct consequence of the Christoffel-Darboux formula~\cite[Cor. 2.2]{GodsilAlgebraicCombinatorics} and establishes a connection between the polynomials $p_m$, $\hat{p}_m$, and $\overline{p}_{m-1}$.

\begin{theorem}\label{thm:cd_formula} Let $p_0,\dots, p_{d+1}$ be an orthogonal polynomial sequence. Then, for any $m$ in $\{0,\dots,d\}$, we have:

\begin{equation}\label{eq:cd1}\hat{p}_{d-m} p_d-\overline{p}_{m-1} p_{d+1} = \left(\prod_{t=m+1}^d\lambda_t^2\right)p_m,\end{equation}
\noindent which can also be rewritten as
\begin{equation}\label{eq:cd2}\dfrac{p_d}{p_{d+1}} = \left(\prod_{t=m+1}^d\lambda_t^2\right)\dfrac{p_m}{\hat{p}_{d-m} p_{d+1}} + \dfrac{\overline{p}_{m-1}}{\hat{p}_{d-m}}.\end{equation}
\end{theorem}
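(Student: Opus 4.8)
The plan is to recognize Equation~\eqref{eq:cd1} as an instance of the Christoffel--Darboux-type identity relating a tridiagonal matrix, a principal submatrix, and the complementary principal submatrix, and to prove it by a cofactor/Schur-complement expansion of the relevant characteristic polynomials. Write $J_k$ for the principal submatrix of $J$ on rows and columns $\{0,\dots,k-1\}$ (so $\phi^{J_k} = p_k$), and for the ``reversed'' blocks let $\hat J$ be the submatrix on rows $\{m+1,\dots,d\}$ (characteristic polynomial $\hat p_{d-m}$) and $\bar J$ the submatrix on rows $\{m+1,\dots,d-1\}$ (characteristic polynomial $\bar p_{m-1}$). The key structural fact is that $J - xI$ is tridiagonal, so its determinant and the determinants of its trailing minors satisfy a two-term relation along the $(m,m+1)$ cut: expanding $\det(xI - J_{d+1}) = p_{d+1}$ and $\det(xI - J_d) = p_d$ along the boundary edge $\lambda_{m+1}$ separating the first $m+1$ coordinates from the last $d-m$ (resp.\ $d-1-m$) coordinates expresses each as a combination of $p_m$, $p_{m+1}$, $\hat p_{d-m}$, $\hat p_{d-m-1}$, $\bar p_{m-1}$.

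Concretely, I would first record the standard block-tridiagonal determinant expansions
\begin{align*}
p_{d+1} &= \hat p_{d-m}\, p_{m+1} - \lambda_{m+1}^2\, \hat p_{d-m-1}\, p_m,\\
p_{d} &= \bar p_{m-1}\, p_{m+1} - \lambda_{m+1}^2\, \hat p_{d-m-1}'\, p_m,
\end{align*}
where the second one uses that deleting the last row and column of $J$ turns $\hat J$ into $\bar J$-plus-a-tail; here $\hat p'_{d-m-1}$ denotes the characteristic polynomial of the submatrix on rows $\{m+2,\dots,d\}$. Then I would form the combination $\hat p_{d-m}\, p_d - \bar p_{m-1}\, p_{d+1}$ so that the $p_{m+1}$ terms cancel, leaving a multiple of $p_m$; identifying the resulting coefficient as $\prod_{t=m+1}^{d}\lambda_t^2$ is a short induction (or a second application of the same expansion from the other side, peeling off $\lambda_d^2,\lambda_{d-1}^2,\dots$). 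Equation~\eqref{eq:cd2} is then immediate: divide \eqref{eq:cd1} through by $\hat p_{d-m}\, p_{d+1}$, noting these polynomials are not identically zero.

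An alternative, perhaps cleaner route is to invoke the Christoffel--Darboux formula directly as cited: the CD kernel $\sum_{k} p_k(x)p_k(y)/(\text{norms})$ telescopes to $(p_{d+1}(x)p_d(y) - p_d(x)p_{d+1}(y))/(x-y)$ up to constants, and evaluating an appropriate truncated version of this identity — or rather its ``dual'' built from the reversed recurrence starting at the bottom of the chain — at matching arguments yields \eqref{eq:cd1} after bookkeeping the $\lambda_t$ factors. Either way, the main obstacle is purely bookkeeping: correctly tracking which principal submatrix each of $\hat p_{d-m}$, $\hat p_{d-m-1}$, $\bar p_{m-1}$ refers to under the index conventions of the paper (the last $d-m$ rows versus rows $\{d-m+1,\dots,d-1\}$, etc.), and pinning down the exact product $\prod_{t=m+1}^{d}\lambda_t^2$ rather than an off-by-one version of it. I would handle this by checking the base cases $m=d$ (where \eqref{eq:cd1} reads $\hat p_0 p_d - \bar p_{d-1} p_{d+1} = p_d$, i.e.\ $\hat p_0 = 1$ and $\bar p_{d-1} = $ the trivial identity $0$, consistent with an empty product) and $m=d-1$, then running the downward induction on $m$.
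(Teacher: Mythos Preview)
Your route is sound but different from the paper's. The paper does not expand along the $(m,m{+}1)$ cut at all: it invokes the graph Christoffel--Darboux identity as a black box (from \cite[Cor.~2.2]{GodsilAlgebraicCombinatorics}) in the form
\[
\phi^{P\setminus m}\,\phi^{P\setminus d}\;-\;\phi^{P\setminus\{m,d\}}\,\phi^{P}\;=\;\Bigl(\prod_{t=m+1}^{d}\lambda_t^{2}\Bigr)\bigl(\phi^{P\setminus\{m,\dots,d\}}\bigr)^{2},
\]
then substitutes the factorizations $\phi^{P\setminus m}=p_m\hat p_{d-m}$, $\phi^{P\setminus\{m,d\}}=p_m\bar p_{m-1}$, $\phi^{P\setminus d}=p_d$, $\phi^{P}=p_{d+1}$, $\phi^{P\setminus\{m,\dots,d\}}=p_m$, and divides out the common factor $p_m$. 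Your tridiagonal block expansion is effectively a self-contained inductive proof of that cited identity; it trades a one-line citation for explicit bookkeeping, which is fine and arguably more instructive.

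One slip to correct: in your expansion of $p_d$, the tail polynomial you call $\hat p'_{d-m-1}$ must be the characteristic polynomial of rows $\{m+2,\dots,d-1\}$ (the tail inside $J_d$, which has already lost row $d$), not of $\{m+2,\dots,d\}$ as you wrote --- with the latter it would coincide with $\hat p_{d-m-1}$ and the combination $\hat p_{d-m}p_d-\bar p_{m-1}p_{d+1}$ would not reduce to a scalar multiple of $p_m$. With the correct polynomial (say $\bar p'$), after the $p_{m+1}$ cancellation you are left with $\lambda_{m+1}^{2}\,p_m\,(\bar p_{m-1}\hat p_{d-m-1}-\hat p_{d-m}\bar p')$, and the bracket is exactly the same identity on the shorter tail chain $\{m+1,\dots,d\}$, so the induction closes and produces $\prod_{t=m+1}^{d}\lambda_t^{2}$ as claimed. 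Your $m=d$ base-case remark is also muddled, but that edge case is degenerate and not needed to anchor the induction.
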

\begin{proof} Consider the $d$--chain as a weighted path $P$, and let $\phi^P$ denote its characteristic polynomial. Note that, by the Christoffel-Darboux formula~\cite[Cor. 2.2]{GodsilAlgebraicCombinatorics},

\[\phi^{P\setminus m}\phi^{P\setminus d}-\phi^{P\setminus \{m, d\}}\phi^P = \left(\prod_{t=m+1}^d\lambda_t^2\right)(\phi^{P\setminus \{m, \dots, d\}})^2.\]

Observe that $\phi^{P\setminus m} = p_m\hat{p}_{d-m}$, represents the characteristic polynomial of the matrix $J$ with the $m$-th row and column removed. This matrix is block-diagonal, containing the matrices corresponding to $p_m$ and $\hat{p}_{d-m}$. Similarly, $\phi^{P\setminus \{ m,d\}}=p_m\overline{p}_{m-1}$. Substituting these expressions into the formula above yields:
\[p_m\hat{p}_{d-m} p_d-p_m\overline{p}_{m-1} p_{d+1} = \left(\prod_{t=m+1}^d\lambda_t^2\right) p_m^2,\]
\noindent Finally, dividing both sides of the equation by $p_m$ results in the desired expression.
\end{proof}

\begin{corollary}\label{cor:equal_gcds}
     Let $p_0$, $p_1$, $\cdots$, $p_{d+1}$ be an orthogonal polynomial sequence. Then, $\gcd(p_m,p_{d+1})=\gcd(\hat{p}_{d-m},p_{d+1})=\gcd(p_m,\hat p_{d-m})$.
\end{corollary}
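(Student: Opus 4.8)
The plan is to exploit Equation~\eqref{eq:cd1} from Theorem~\ref{thm:cd_formula}, namely
\[
\hat{p}_{d-m}\, p_d - \overline{p}_{m-1}\, p_{d+1} = \left(\prod_{t=m+1}^d \lambda_t^2\right) p_m,
\]
together with the standard three-term recurrence $p_{d+1} = (x-a_d)p_d - \lambda_d^2 p_{d-1}$, which shows $\gcd(p_d, p_{d+1}) = 1$ (any common root would force, by downward induction, all $p_k$ to vanish there, contradicting $p_0 \equiv 1$); similarly the ``reversed'' recurrence defining $\hat{p}_{d-m}$ gives $\gcd(\hat p_{d-m}, \hat p_{d-m-1})=1$, and one should note that the characteristic polynomial of the full last-$(d-m+1)$-block factors (as in the proof of Theorem~\ref{thm:cd_formula}) through $\hat p_{d-m}$ and $\overline p_{m-1}$, so $\gcd(\hat p_{d-m}, \overline p_{m-1}) = 1$ as well.

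First I would establish $\gcd(p_m, p_{d+1}) = \gcd(\hat p_{d-m}, p_{d+1})$. From~\eqref{eq:cd1}, reducing modulo $p_{d+1}$ gives $\hat p_{d-m}\, p_d \equiv (\prod \lambda_t^2)\, p_m \pmod{p_{d+1}}$ (as a congruence of polynomials, or by comparing values at the simple roots $\theta_s$ of $p_{d+1}$). Since $p_d$ is invertible modulo $p_{d+1}$ (because $\gcd(p_d,p_{d+1})=1$), a prime/irreducible factor of $p_{d+1}$ divides $p_m$ if and only if it divides $\hat p_{d-m}$; as $p_{d+1}$ is squarefree, this yields the equality of the two gcds. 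The cleanest way to phrase this is at the level of roots: for each eigenvalue $\theta_s$, since $p_d(\theta_s)\neq 0$, we have $p_m(\theta_s)=0 \iff \hat p_{d-m}(\theta_s)=0$, and since $p_{d+1}$ has simple roots, $\gcd(p_m,p_{d+1}) = \prod_{s:\, p_m(\theta_s)=0}(x-\theta_s) = \prod_{s:\, \hat p_{d-m}(\theta_s)=0}(x-\theta_s) = \gcd(\hat p_{d-m}, p_{d+1})$.

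Next I would show this common polynomial equals $\gcd(p_m, \hat p_{d-m})$. Clearly $\gcd(p_m,p_{d+1})$ (call it $g$) divides both $p_m$ and, by the previous paragraph, $\hat p_{d-m}$, so $g \mid \gcd(p_m,\hat p_{d-m})$. Conversely, suppose an irreducible $\pi$ divides both $p_m$ and $\hat p_{d-m}$; then~\eqref{eq:cd1} forces $\pi \mid \overline p_{m-1}\, p_{d+1}$, and since $\gcd(\hat p_{d-m}, \overline p_{m-1}) = 1$ we get $\pi \nmid \overline p_{m-1}$, hence $\pi \mid p_{d+1}$, so $\pi \mid g$; accounting for multiplicities (all these polynomials are squarefree, since $p_m$ and $\hat p_{d-m}$ are orthogonal polynomials hence have simple roots), $\gcd(p_m,\hat p_{d-m}) \mid g$. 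Combining the two divisibilities, and noting all gcds are monic, gives the triple equality. The main obstacle is essentially bookkeeping: making sure the squarefreeness / simple-roots facts are invoked correctly so that divisibility of gcds upgrades to equality, and confirming the coprimality statements $\gcd(p_d,p_{d+1})=1$ and $\gcd(\hat p_{d-m}, \overline p_{m-1})=1$ — both of which follow from the respective recurrences and the normalization $p_0 \equiv 1$.
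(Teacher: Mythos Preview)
Your proposal is correct and follows essentially the same route as the paper: use Equation~\eqref{eq:cd1} together with the two coprimality facts $\gcd(p_d,p_{d+1})=1$ and $\gcd(\hat p_{d-m},\overline p_{m-1})=1$, then read off the three gcd equalities. The paper justifies the coprimalities by citing strong interlacing of consecutive polynomials in an OPS (Chihara, Thm.~5.3), whereas you justify them via the three-term recurrence and downward induction --- these are equivalent. One small wording issue: your phrase about the last-$(d-m+1)$-block ``factoring through'' $\hat p_{d-m}$ and $\overline p_{m-1}$ is not quite accurate; the clean statement is simply that $\overline p_{m-1}$ is the characteristic polynomial of the block obtained from the $\hat p_{d-m}$-block by deleting its last row and column, so the two are consecutive in the reversed OPS and hence coprime.
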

\begin{proof} 
Note that 
$\gcd(p_d, p_{d+1}) = 1$ and $\gcd(\overline{p}_{m-1}, \hat{p}_{d-m})=1$, since $p_d$ and $p_{d+1}$, as well as $\overline{p}_{m-1}$ and $\hat{p}_{d-m}$, are consecutive polynomials in an orthogonal polynomial sequence and therefore strongly interlace~\cite[Thm. 5.3]{chihara2011introduction}. This observation, together with Equation~\eqref{eq:cd1}, immediately implies the result.
\end{proof}

As a corollary, we obtain that two polynomials in an orthogonal polynomial sequence strongly interlace~\cite[Thm. 6.2]{chihara2011introduction}.

\begin{corollary}\label{cor:strongly_interlace}
 Let $p_0$, $p_1$, $\cdots$, $p_{d+1}$ be an orthogonal polynomial sequence. Then, $p_m$ and $p_{d+1}$ strongly interlace.
\end{corollary}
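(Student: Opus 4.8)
The plan is to read off both parts of strong interlacing from the behaviour of a single auxiliary function, namely the diagonal resolvent entry
\[
F(x) \;:=\; \frac{\phi^{P\setminus m}(x)}{\phi^{P}(x)} \;=\; \frac{p_m(x)\,\hat{p}_{d-m}(x)}{p_{d+1}(x)}.
\]
First I would invoke the spectral decomposition of $J$ to write $F(x)=\sum_{s=0}^{d}\frac{w_s}{x-\theta_s}$, where $w_s$ is the $(m,m)$-entry of the orthogonal projector onto the $\theta_s$-eigenspace (so $w_s\ge 0$ and $\sum_s w_s=1$), and where — by the eigenvector formula recalled in Section~\ref{sec:preliminaries1} — $w_s=0$ exactly when $p_m(\theta_s)=0$. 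Thus $F$ is of Stieltjes type: its derivative is strictly negative wherever $F$ is defined, so on each interval between two consecutive poles it runs monotonically from $+\infty$ to $-\infty$ and has exactly one zero there, while it is positive above its largest pole and negative below its smallest pole. If $F$ has $k$ poles (necessarily simple, since $p_{d+1}$ has simple zeros), this already produces $k-1$ real simple zeros, one per gap; and a degree count — the numerator of $F$ in lowest terms has degree $d-(d+1-k)=k-1$ — shows these are all the zeros of $F$. So the zeros of $F$ are real, simple, and strictly interlace its poles, which are exactly the $\theta_s$ with $p_m(\theta_s)\ne 0$.

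The second step, and the only place where more than soft arguments are needed, is to check that every zero of $p_m$ actually appears among the zeros of $F$ (a zero common to $p_m$ and $p_{d+1}$ could a priori be cancelled). Here I would use Theorem~\ref{thm:cd_formula}: evaluating \eqref{eq:cd1} at a zero $\theta_s$ of $p_{d+1}$ gives $\hat{p}_{d-m}(\theta_s)\,p_d(\theta_s)=\bigl(\prod_{t=m+1}^{d}\lambda_t^2\bigr)\,p_m(\theta_s)$, and since $p_d$ and $p_{d+1}$ are consecutive in the sequence (hence coprime) we have $p_d(\theta_s)\ne 0$; therefore $p_m(\theta_s)=0$ if and only if $\hat{p}_{d-m}(\theta_s)=0$. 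Consequently a zero $\mu$ of $p_m$ that is also a zero of $p_{d+1}$ is automatically a zero of $\hat{p}_{d-m}$, so it survives with multiplicity at least $1+1-1=1$ in $F=p_m\hat{p}_{d-m}/p_{d+1}$, while a zero of $p_m$ that is not a zero of $p_{d+1}$ is trivially a zero of $F$. Hence every zero of $p_m$ is a zero of $F$; combined with the first step, $p_m$ has $m$ distinct real zeros and they form a subset of the $k-1$ zeros of $F$.

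Finally I would assemble the conclusion. Since the zeros of $F$ all lie strictly between its smallest and largest poles, and those poles lie among $\theta_0,\dots,\theta_d$, every zero of $p_m$ lies in the open interval $(\theta_d,\theta_0)$. And if $\mu'<\mu$ are two consecutive zeros of $p_m$, both are zeros of $F$; since the zeros and poles of $F$ strictly alternate, some pole of $F$ lies strictly between $\mu'$ and $\mu$, and that pole is a zero of $p_{d+1}$. These are precisely the two conditions defining strong interlacing of $p_m$ and $p_{d+1}$. I expect the one genuinely delicate point to be the multiplicity bookkeeping around common zeros of $p_m$, $\hat{p}_{d-m}$ and $p_{d+1}$ in the second step: it is exactly there that Theorem~\ref{thm:cd_formula} (equivalently, Corollary~\ref{cor:equal_gcds}) is indispensable — without it a shared zero of $p_m$ and $p_{d+1}$ could vanish from $F$ and the alternation argument in the last step would collapse. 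Everything else is the routine fact that a diagonal entry of $(xI-J)^{-1}$ behaves like a Stieltjes function.
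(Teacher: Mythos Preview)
Your proof is correct and rests on the same two ingredients as the paper's: the interlacing of $p_m\hat p_{d-m}$ with $p_{d+1}$ (the paper invokes the Cauchy interlacing theorem; you obtain the same alternation from the Stieltjes partial-fraction form of the diagonal resolvent entry $F$), and the gcd identity of Corollary~\ref{cor:equal_gcds} --- equivalently your evaluation of \eqref{eq:cd1} at $\theta_s$ --- to guarantee that zeros common to $p_m$ and $p_{d+1}$ are not cancelled in $F$. The one packaging difference is that the paper first uses the chain of consecutive strong interlacings $p_m,p_{m+1},\dots,p_{d+1}$ to place the zeros of $p_m$ inside $(\theta_d,\theta_0)$, and only then applies Cauchy interlacing and the gcd step; your monotonicity argument for $F$ delivers this open-interval condition automatically as part of the zero--pole alternation, so your route is marginally more self-contained, at the cost of spelling out the resolvent analysis that the paper's citation of Cauchy interlacing hides.
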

\begin{proof}
First, observe that for every $k$ in $\{0, \dots, d\}$, as $p_k$ and $p_{k+1}$ strongly interlace~\cite[Thm. 5.3]{chihara2011introduction}, the zeros of $p_k$ are in the open interval determined by the smallest and largest zeros of $p_{k+1}$. It follows that the zeros of $p_m$ are in the open interval determined by the smallest and largest zeros of $p_{d+1}$.

Next, note that $\phi^{P \setminus m} = p_m \hat{p}_{d-m}$ and $\phi^P = p_{d+1}$ interlace by the Cauchy Interlacing Theorem. By Corollary~\ref{cor:equal_gcds}, let $g$ be defined as $\gcd(p_m, p_{d+1}) = \gcd(\hat{p}_{d-m}, p_{d+1})$. It follows that the polynomials $p_m \hat{p}_{d-m}/g$ and $p_{d+1}/g$ interlace, as the same zeros are removed from both polynomials. In particular, we conclude that $p_m$ and $p_{d+1}/g$ interlace, and, as these polynomials are coprime, we obtain that they strongly interlace.
\end{proof}

\begin{corollary}\label{cor:partial_fraction} Let $p_0$, $p_1$, $\dots$, $p_{d+1}$ be an orthogonal polynomial sequence. Then,
\begin{equation*}\dfrac{p_d}{p_{d+1}}(x) = \displaystyle\sum_{s=0}^d \dfrac{\tau_s}{x-\theta_s}
\end{equation*}
\noindent where, for every $s$ in $\{0,\dots, d\}$,
\begin{equation}
\tau_s\defeq\dfrac{p_m}{\hat{p}_{d-m}}(\theta_s)\dfrac{\prod_{t=m+1}^d\lambda_t^2}{\prod_{r\neq s}(\theta_s-\theta_r)}+\displaystyle\lim_{x\to \theta_s}\dfrac{\overline{p}_{m-1}(x)(x-\theta_s)}{\hat{p}_{d-m}(x)}.\end{equation}
\end{corollary}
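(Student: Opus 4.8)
The plan is to treat this as a routine evaluation of residues, the only genuine input being the Christoffel--Darboux identity~\eqref{eq:cd2}, which is already available. First I would observe that $p_d$ has degree $d$ while $p_{d+1}$ has degree $d+1$, so $\dfrac{p_d}{p_{d+1}}$ is a proper rational function; moreover, since the eigenvalues $\theta_0,\dots,\theta_d$ of a $d$--chain are simple, $p_{d+1}=\prod_{r=0}^d(x-\theta_r)$ has only simple zeros. Hence the standard partial fraction theorem gives $\dfrac{p_d}{p_{d+1}}=\sum_{s=0}^d\dfrac{\tau_s}{x-\theta_s}$ with $\tau_s=\lim_{x\to\theta_s}(x-\theta_s)\dfrac{p_d(x)}{p_{d+1}(x)}$, and the whole task reduces to computing this residue.

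To do so I would multiply both sides of~\eqref{eq:cd2} by $(x-\theta_s)$ and let $x\to\theta_s$. The left-hand side tends to $\tau_s$. On the right-hand side I handle the two summands separately. For the term $\dfrac{\overline{p}_{m-1}}{\hat{p}_{d-m}}$: if $\hat{p}_{d-m}(\theta_s)\neq 0$ then $\lim_{x\to\theta_s}(x-\theta_s)\dfrac{\overline{p}_{m-1}(x)}{\hat{p}_{d-m}(x)}=0$, while if $\hat{p}_{d-m}(\theta_s)=0$ this zero is simple and $\overline{p}_{m-1}(\theta_s)\neq 0$ (as $\overline{p}_{m-1}$ and $\hat{p}_{d-m}$ are consecutive in an orthogonal polynomial sequence, hence coprime), so the limit is the residue of $\overline{p}_{m-1}/\hat{p}_{d-m}$ at $\theta_s$; in both cases the limit is exactly $\lim_{x\to\theta_s}\dfrac{\overline{p}_{m-1}(x)(x-\theta_s)}{\hat{p}_{d-m}(x)}$, the second term in the statement. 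Since this limit exists and the left side converges, the limit of the first summand exists as well, which legitimizes the splitting.

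For the first summand the key point is that $\dfrac{p_m}{\hat{p}_{d-m}}$ is regular at every $\theta_s$: a zero of $\hat{p}_{d-m}$ that coincides with some $\theta_s$ lies in $\gcd(\hat{p}_{d-m},p_{d+1})$, which by Corollary~\ref{cor:equal_gcds} equals $\gcd(p_m,\hat{p}_{d-m})$, so it is also a (simple) zero of $p_m$, and the singularity of $p_m/\hat{p}_{d-m}$ at $\theta_s$ is removable with value $\dfrac{p_m}{\hat{p}_{d-m}}(\theta_s)$. Using that $p_{d+1}$ is monic with a simple zero at $\theta_s$, so $\lim_{x\to\theta_s}\dfrac{x-\theta_s}{p_{d+1}(x)}=\dfrac{1}{p_{d+1}'(\theta_s)}=\dfrac{1}{\prod_{r\neq s}(\theta_s-\theta_r)}$, the first summand converges to $\left(\prod_{t=m+1}^d\lambda_t^2\right)\dfrac{p_m}{\hat{p}_{d-m}}(\theta_s)\Big/\prod_{r\neq s}(\theta_s-\theta_r)$, the first term in the statement. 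Adding the two contributions yields the claimed expression for $\tau_s$.

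The computation is essentially mechanical; the one place that demands care---and the only conceptual point---is the bookkeeping around common zeros of $p_m$, $\hat{p}_{d-m}$ and $p_{d+1}$, since each of the two summands of~\eqref{eq:cd2} taken in isolation could be ill-defined at such an eigenvalue. Corollary~\ref{cor:equal_gcds} is precisely what makes the individual terms (and hence the stated formula) meaningful there, so I expect that to be the ``hard'' part, such as it is.
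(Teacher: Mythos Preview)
Your proposal is correct and follows exactly the same approach as the paper: invoke the existence of the partial fraction expansion, compute the residues by multiplying~\eqref{eq:cd2} through by $(x-\theta_s)$ and passing to the limit, and use Corollary~\ref{cor:equal_gcds} to guarantee that the first summand is finite at each $\theta_s$. The paper's proof is extremely terse (three sentences), and you have simply written out the routine residue computation that it leaves implicit; in particular your careful case analysis for the second summand and your justification that the splitting of limits is legitimate are exactly the details one would supply when unpacking the paper's sketch.
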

\begin{proof} First, observe that a partial fraction expansion exists for $p_d/p_{d+1}$~\cite[p. 29]{chihara2011introduction}. To determine the numerators, we simply use Equation~\eqref{eq:cd2}. Note that the first term on the RHS cannot be $\infty$, because by Corollary~\ref{cor:equal_gcds}, $\gcd(\hat{p}_{d-m},p_{d+1})=\gcd(\hat{p}_{d-m},p_m)$.
\end{proof}

Before we proceed to the next section, we observe some further properties of the coefficients $\tau_s$ in Corollary~\ref{cor:partial_fraction}. An immediate consequence is that $\tau_s > 0$ for every $s$ and $\sum_{s=0}^d \tau_s = 1$.

Let $I$ be the subset of indices $s$ such that $\theta_s$ is a common zero of $p_m$ and $p_{d+1}$. Since $\overline{p}_{m-1}$ and $\hat{p}_{d-m}$ strongly interlace, and by Corollary \ref{cor:equal_gcds}, we have
\[
\lim_{x \to \theta_s} \frac{\overline{p}_{m-1}(x)(x - \theta_s)}{\hat{p}_{d-m}(x)} \geq 0,
\]
for every $s$, with equality if and only if $s$ is not in $I$. Furthermore, observe that if $s$ is not in $I$, then, since $\tau_s > 0$, it follows that
\[
\frac{p_m}{\hat{p}_{d-m}}(\theta_s)\frac{\prod_{t=m+1}^d \lambda_t^2}{\prod_{r \neq s} (\theta_s - \theta_r)} > 0.
\]

It is also possible to prove that this expression is negative if $s$ is in $I$.


\subsection{Construction of orthogonal polynomial sequence}\label{sec:construction_OPS}

In this section, we present our first main result, which provides an algorithm that constructs an orthogonal polynomial sequence with two given polynomials $q_m$ and $q_{d+1}$ that strongly interlace. This is a known result when $m$ equals $d$ (see~\cite{VinetZhedanovHowTo} or~\cite[Lem. 5.1]{GodsilAlgebraicCombinatorics}). We manage to fully extend it for any $m$ by showing that strong interlacing is also sufficient.

\begin{theorem}\label{thm:construction_OPS}
Let $q_m$ and $q_{d+1}$ be two monic polynomials with distinct real zeros of degrees $m$ and $d + 1$, respectively, where $m < d + 1$. Then, there exists an orthogonal polynomial sequence $(p_k)_k$ with $p_m$ and $p_{d+1}$ equal to $q_m$ and $q_{d+1}$, respectively, if and only if $q_m$ and $q_{d+1}$ strongly interlace.
\end{theorem}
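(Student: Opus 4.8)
The forward direction (if $p_m$ and $p_{d+1}$ belong to an OPS then they strongly interlace) is exactly Corollary~\ref{cor:strongly_interlace}, so the real content is the converse: given monic $q_m, q_{d+1}$ with distinct real zeros that strongly interlace, we must build a $d$--chain — equivalently positive off-diagonal weights $\lambda_1,\dots,\lambda_d$ and diagonal entries $a_0,\dots,a_d$ — whose associated orthogonal polynomial sequence has $p_m = q_m$ and $p_{d+1}=q_{d+1}$. The plan is to run the known $m=d$ construction (from~\cite{VinetZhedanovHowTo} or~\cite[Lem.~5.1]{GodsilAlgebraicCombinatorics}) in two stages, using Theorem~\ref{thm:cd_formula} and Corollary~\ref{cor:partial_fraction} as the bridge. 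Concretely, write $q_{d+1}=\prod_{s}(x-\theta_s)$ and form the candidate rational function $p_d/p_{d+1} = \sum_s \tau_s/(x-\theta_s)$; the classical result says that any choice of positive residues $\tau_s$ summing to $1$ (Markov/Stieltjes data) yields a genuine $d$--chain with characteristic polynomial $q_{d+1}$, and conversely every $d$--chain with that spectrum arises this way. So the task reduces to: choose the $\tau_s$ so that the resulting chain additionally has $p_m = q_m$.

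The key step is to use Equation~\eqref{eq:cd2}, $\frac{p_d}{p_{d+1}} = \big(\prod_{t=m+1}^d\lambda_t^2\big)\frac{p_m}{\hat p_{d-m}\,p_{d+1}} + \frac{\overline p_{m-1}}{\hat p_{d-m}}$, read in reverse. First I would build, from $q_m$ alone, an auxiliary short chain: since $q_m$ has distinct real roots it is $p_m$ of \emph{some} $m$--chain, and I can take any such chain, giving weights $\lambda_1,\dots,\lambda_m$ and $a_0,\dots,a_{m-1}$; this handles the ``left half'' of $J$. Then I must produce the ``right half'': a $(d-m-1)$--chain supplying $\hat p_{d-m}$ (on coordinates $m+1,\dots,d$), the coupling $\lambda_{m+1}>0$ joining the two halves, and $a_m$, chosen so that the global characteristic polynomial equals $q_{d+1}$. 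The interlacing hypothesis is precisely what makes this gluing possible: expanding $\det(xI-J)$ along the $m$-th row/column gives $q_{d+1} = p_m\hat p_{d-m}\cdot(\text{stuff}) - \lambda_{m+1}^2 p_m \overline p_{m-1} - \lambda_m^2(\dots)$, and matching this to the prescribed $q_{d+1}$ amounts to solving for the right-half data. I expect the cleanest route is to define $\hat p_{d-m}$ and $\overline p_{m-1}$ directly: set $r(x) := p_d/p_{d+1}$ for a yet-unknown choice of residues, demand via~\eqref{eq:cd2} that $q_m \mid \big(\hat p_{d-m}\,(\,p_{d+1} r - \overline p_{m-1}\,)\big)$ with the quotient a positive constant times $q_m$, and show this system has a solution with all the requisite positivity (positive $\tau_s$, positive $\lambda_t^2$).

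The main obstacle is verifying that strong interlacing delivers exactly the sign/positivity conditions needed — i.e. that the $\hat p_{d-m}$ and $\overline p_{m-1}$ forced by the construction really do strongly interlace each other and are coprime, that $\prod_{t=m+1}^d\lambda_t^2 > 0$, and that each $\tau_s > 0$. This is where the remarks closing Section~\ref{sec:necessary_conditions} should do the work: there it is shown that $\frac{p_m}{\hat p_{d-m}}(\theta_s)\frac{\prod\lambda_t^2}{\prod_{r\neq s}(\theta_s-\theta_r)}$ is positive when $\theta_s$ is not a common root of $p_m,p_{d+1}$ and negative when it is, while the limit term $\lim_{x\to\theta_s}\overline p_{m-1}(x)(x-\theta_s)/\hat p_{d-m}(x)$ is $\geq 0$, with equality off the common-root set — and these sign patterns are exactly equivalent to strong interlacing of $q_m$ and $q_{d+1}$. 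I would therefore organize the proof as: (i) reduce to constructing $\hat p_{d-m}$, $\overline p_{m-1}$ and the junction weights; (ii) from the common zeros of $q_m$ and $q_{d+1}$ peel off a common factor $g$ and reduce to the coprime case; (iii) in the coprime case, use strong interlacing to place the $d-m-1$ roots of $\hat p_{d-m}$ (one strictly between each consecutive pair of roots of $q_{d+1}/q_m$-type data, forced by a sign-change/intermediate-value argument) so that all residues come out positive; (iv) invoke the $m=d$ theorem on each half and check the glued tridiagonal matrix reproduces $q_{d+1}$ via the determinant expansion; (v) read off $p_m = q_m$ from~\eqref{eq:cd1}. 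Step (iii) — extracting the precise location of the auxiliary roots and the positivity of residues purely from the interlacing pattern — is the crux and the step I expect to require the most care.
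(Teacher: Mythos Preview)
Your proposal shares the right core ingredients with the paper's argument---construct $\hat p_{d-m}$ by placing its roots via the interlacing pattern, exploit the positivity structure in Corollary~\ref{cor:partial_fraction}, and ultimately invoke the known $m=d$ case---but your organizing framework of ``build the left half of $J$ from $q_m$ arbitrarily, then glue on a right half to hit $q_{d+1}$'' contains a genuine gap. The claim ``I can take any such chain'' is false: already at $m=d$ the one-step extension $q_{d+1}=(x-a_d)q_d-\lambda_d^2\, p_{d-1}$ forces $a_d$ and then $p_{d-1}$ uniquely, so the left half is pinned down, not free; more generally a parameter count gives $2(d-m+1)$ free entries against $d+1$ constraints once the left half is fixed, which is overdetermined whenever $m>(d+1)/2$. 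You supply no mechanism for selecting the correct left half, and step~(iv) (``invoke the $m=d$ theorem on each half'') does not by itself ensure the glued matrix has characteristic polynomial $q_{d+1}$.

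The paper avoids splitting $J$. After building $\hat q_{d-m}$ essentially as in your step~(iii) (Proposition~\ref{prop:q_hat}), it directly manufactures positive residues $\tau_s$ summing to $1$ (Proposition~\ref{prop:tau_s}), defines $q_d$ by $q_d/q_{d+1}=\sum_s \tau_s/(x-\theta_s)$, and applies the $m=d$ case \emph{once}, to the pair $(q_d,q_{d+1})$, obtaining the whole chain at a stroke. The final verification that the resulting $p_m$ equals $q_m$ is not merely ``read off from~\eqref{eq:cd1}'' as your step~(v) suggests: one compares the two expressions for the residues $\tau_s$ (the one from Corollary~\ref{cor:partial_fraction} via $p_m,\hat p_{d-m}$, and the one from the construction via $q_m,\hat q_{d-m}$), obtains a polynomial identity holding at more points than its degree, and then uses monicity and coprimality of the reduced factors to conclude $p_m=q_m$ and $\hat p_{d-m}=\hat q_{d-m}$. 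Your outline would need this degree-and-coprimality argument to close.
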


We note that our construction in the previous theorem produces all possible orthogonal polynomial sequences containing the polynomials $q_m$ and $q_{d+1}$. As our proof shows, there are $d-m$ degrees of freedom in the construction of such a sequence.

Assume that $q_{d+1} = \prod_{s=0}^d (x - \theta_s)$ and let $J$ be the set of indices $s$ such that $\theta_s$ is a zero of $q_m$. We will proceed backward from Section~\ref{sec:necessary_conditions}. We start with the following result.

\begin{proposition}\label{prop:q_hat} There exists a monic polynomial $\hat{q}_{d-m}$ with distinct real zeros and degree $d - m$, such that $\hat{q}_{d-m}(\theta_s) = 0$ for every $s$ in $J$, and
\[
\frac{q_m}{\hat{q}_{d-m}}(\theta_s) ~\frac{1}{\displaystyle\prod_{r \neq s}(\theta_s - \theta_r)}>0
\]
\noindent for $s$ not in $J$.
\end{proposition}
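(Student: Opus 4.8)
The plan is to construct $\hat{q}_{d-m}$ by first placing the $|J|$ forced roots at the common zeros $\{\theta_s : s \in J\}$ of $q_m$ and $q_{d+1}$, and then distributing the remaining $d-m-|J|$ roots so as to control the sign of the displayed expression. For $s \notin J$, the quantity $\prod_{r\neq s}(\theta_s-\theta_r)$ is nonzero with a sign that alternates as $s$ ranges over $0,\dots,d$: it is positive when the number of $\theta_r$ exceeding $\theta_s$ is even, and negative otherwise. Likewise $q_m(\theta_s)$ is nonzero for $s\notin J$ and, because $q_m$ and $q_{d+1}$ strongly interlace, its sign pattern across the consecutive roots $\theta_0>\dots>\theta_d$ that are \emph{not} common zeros is predictable from the interlacing structure. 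So the content of the proposition is to choose $\hat q_{d-m}$, monic with distinct real roots including all $\theta_s$, $s\in J$, so that $\mathrm{sign}\,\hat q_{d-m}(\theta_s)$ matches $\mathrm{sign}\big(q_m(\theta_s)/\prod_{r\neq s}(\theta_s-\theta_r)\big)$ for every $s\notin J$.

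The key steps, in order: (1) Factor out the common part. Write $q_{d+1} = g\cdot \tilde q$ and $q_m = g\cdot \tilde r$ where $g=\gcd(q_m,q_{d+1})=\prod_{s\in J}(x-\theta_s)$; by strong interlacing the remaining factor $\tilde r$ of degree $m-|J|$ strongly interlaces $\tilde q$ of degree $d+1-|J|$, and in particular $\tilde r$ and $\tilde q$ are coprime. We will take $\hat q_{d-m} = g\cdot h$ for a to-be-chosen monic $h$ of degree $d-m$ with distinct real roots, disjoint from the roots of $g$ and of $\tilde q$. (2) Reduce the sign condition. For $s\notin J$, since $\theta_s$ is a simple root of $q_{d+1}$ we have $\prod_{r\neq s}(\theta_s-\theta_r) = \tilde q'(\theta_s)\,g(\theta_s)$, and the condition becomes $\tilde r(\theta_s)/(h(\theta_s)\,\tilde q'(\theta_s)) > 0$, i.e. $h(\theta_s)$ must have the same sign as $\tilde r(\theta_s)/\tilde q'(\theta_s)$ for every root $\theta_s$ of $\tilde q$. (3) Identify the target sign pattern. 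Since $\tilde r$ strongly interlaces $\tilde q$, between consecutive roots of $\tilde q$ the value $\tilde r(\theta_s)/\tilde q'(\theta_s)$ changes sign exactly once (this is the standard interlacing sign alternation, essentially the positivity of the partial-fraction residues of $\tilde q/\tilde r$ as already used in Corollary~\ref{cor:partial_fraction}); more precisely the signs of $\tilde r(\theta_0)/\tilde q'(\theta_0), \dots$ along the roots of $\tilde q$ form an alternating $\pm$ sequence, and the number of sign changes equals $\deg \tilde q - 1$. (4) Realize that pattern with a polynomial of the right degree. A monic $h$ of degree $d-m$ whose sign alternates on the $d+1-|J|$ ordered roots of $\tilde q$ exists precisely when $d-m \ge d+1-|J|-1 = d-|J|$, i.e. when $|J|\ge m$; more generally we need $h$ to have the prescribed sign at each of the $d+1-|J|$ points and to be monic of degree $d-m$. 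Place one root of $h$ strictly inside each interval between consecutive roots of $\tilde q$ where the target sign flips — there are $(\deg\tilde q - 1) = d-|J|$ such gaps — wait, that requires $d-m \ge d-|J|$. When $d-m$ exceeds the number of needed flips, put the extra roots far to the left (below $\theta_d$) in pairs, or singly with a compensating overall consideration, so monicity is preserved and no extra sign change is introduced among the $\theta_s$; when $d - m$ is exactly right, one root per flip-gap suffices; the monic leading coefficient then fixes the sign at $\theta_0$, which we must check matches. If it does not, we instead shift all roots of $h$ below $\theta_d$ by one gap, or add a root below $\theta_d$, toggling the sign at every $\theta_s$ simultaneously — this is where the count $d-m$ versus $d-|J|$ must be reconciled.

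The main obstacle I expect is precisely this parity/counting bookkeeping in step (4): making sure the degree $d-m$ is large enough to host the $d-|J|$ required sign alternations among the roots of $\tilde q$ \emph{and} to fix the correct sign at the top root $\theta_0$ by monicity, while the leftover roots (if $d-m$ is strictly larger) are parked outside $[\theta_d,\theta_0]$ without disturbing any sign. The inequality that must fall out of strong interlacing is $\deg q_m - \deg\gcd(q_m,q_{d+1}) \le$ (number of roots of $q_{d+1}$ strictly between consecutive common roots) — equivalently the strong interlacing hypothesis should force $d - |J| \le d - m$ in the relevant normalization, i.e. at most one "free" root of $q_{d+1}$ per gap of $q_m$; I would prove this by comparing, gap by gap, the roots of $q_m$ and $q_{d+1}$ and invoking that each open interval between consecutive roots of $q_m$ contains a root of $q_{d+1}$ together with the extremal containment condition. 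Once that inequality is in hand, the explicit placement of roots of $h$ is routine, and one verifies the strict inequality in the displayed formula holds at every $s\notin J$ by construction, while $\hat q_{d-m}$ vanishes at every $\theta_s$, $s\in J$, and has distinct real roots by design.
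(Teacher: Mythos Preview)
Your proposal has the right instinct---place the free roots of $\hat q_{d-m}$ so as to control a sign pattern---but the execution contains several errors that prevent it from going through. First, since $\hat q_{d-m}=g\cdot h$ with $\deg g=|J|$, the polynomial $h$ must have degree $d-m-|J|$, not $d-m$. Second, in step (2) you drop a factor: $q_m(\theta_s)=g(\theta_s)\tilde r(\theta_s)$ and $\hat q_{d-m}(\theta_s)=g(\theta_s)h(\theta_s)$ together with $\prod_{r\neq s}(\theta_s-\theta_r)=g(\theta_s)\tilde q'(\theta_s)$ give the reduced condition $\tilde r(\theta_s)/\bigl(h(\theta_s)\,g(\theta_s)\,\tilde q'(\theta_s)\bigr)>0$, not $\tilde r(\theta_s)/\bigl(h(\theta_s)\,\tilde q'(\theta_s)\bigr)>0$. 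Third, the claim in step (3) that $\tilde r(\theta_s)/\tilde q'(\theta_s)$ has alternating signs is false once $\deg\tilde q-\deg\tilde r>1$: those residues are all positive only in the codimension-one case; in general the sign flips exactly across the gaps of $\tilde q$ that contain no root of $\tilde r$. This is precisely why your count in step (4) comes out as $d-|J|$ required sign changes against a polynomial $h$ of (even after correction) degree $d-m-|J|$, leading to the unresolved ``parity/counting'' obstacle you flag.

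The paper sidesteps all of this bookkeeping with one observation: the displayed positivity condition is equivalent (after multiplying by $\hat q_{d-m}(\theta_s)^2>0$) to saying that $q_m\hat q_{d-m}$, a degree-$d$ polynomial, weakly interlaces $q_{d+1}$. Strong interlacing of $q_m$ and $q_{d+1}$ forces exactly $d-m-|J|$ of the closed intervals $[\theta_{s+1},\theta_s]$ to contain no root of $q_m$; drop one extra root $\mu_i$ into each such interval and take $\hat q_{d-m}=\prod_i(x-\mu_i)\prod_{s\in J}(x-\theta_s)$. Then every interval $[\theta_{s+1},\theta_s]$ contains a root of $q_m\hat q_{d-m}$, so the product interlaces $q_{d+1}$ and all residues are nonnegative, with equality exactly at $s\in J$. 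No factoring through $g$, no sign-chasing, no parity argument is needed.
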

\begin{proof} Observe that, since $q_m$ and $q_{d+1}$ strongly interlace, there are exactly $d - m - |J|$ closed intervals determined by consecutive zeros of $q_{d+1}$ that do not contain a zero of $q_m$. Choose $d - m - |J|$ real numbers $\mu_1, \dots, \mu_{d - m - |J|}$ in the interior of these closed intervals, one for each interval. Let $\hat{q}_{d-m}$ be the polynomial
\[
(x - \mu_1) \cdots (x - \mu_{d - m - |J|}) \prod_{s \in J} (x - \theta_s).
\]
Note that $q_m \hat{q}_{d-m}$ has degree $d$ and interlaces $q_{d+1}$, therefore
\[
\lim_{x\to \theta_s}\dfrac{q_m \hat{q}_{d-m}}{q_{d+1}}(x)~(x-\theta_s)\geq 0,
\]
\noindent for every $s$, with equality if and only if $s$ is in $J$. If $s$ is not in $J$, then $\hat{q}_{d-m}(\theta_s)\neq 0$, and therefore
\[
\frac{q_m}{\hat{q}_{d-m}}(\theta_s) ~\frac{1}{\displaystyle\prod_{r \neq s}(\theta_s - \theta_r)} = \dfrac{1}{\hat{q}^2_{d-m}(\theta_s)}\lim_{x\to \theta_s}\dfrac{q_m \hat{q}_{d-m}}{q_{d+1}}(x)~(x-\theta_s)>0.
\]
\end{proof}

\begin{proposition}\label{prop:tau_s} Let $q_m$ and $\hat{q}_{d-m}$ as in the statement of Proposition~\ref{prop:q_hat}. Then, there exists $\Lambda > 0$ and $\rho_s \geq 0$ for $s$ in $\{0, \dots, d\}$ such that $\rho_s > 0$ if and only if $s$ is in $J$, and such that,
\[
\tau_s \defeq  \frac{q_m}{\hat{q}_{d-m}}(\theta_s)~ \frac{\Lambda}{\prod_{r \neq s} (\theta_s - \theta_r)} + \rho_s  > 0 \text{ for every $s$, and } \sum_{s=0}^d \tau_s = 1.
\]
\end{proposition}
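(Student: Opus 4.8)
The plan is to pick the free parameters $\rho_s$ for $s\in J$ and the scaling $\Lambda>0$ so that the prescribed quantities $\tau_s$ are positive and sum to $1$. Proposition~\ref{prop:q_hat} already guarantees that for $s\notin J$ the ``Lagrange'' term
\[
c_s \defeq \frac{q_m}{\hat q_{d-m}}(\theta_s)\,\frac{1}{\prod_{r\neq s}(\theta_s-\theta_r)}
\]
is \emph{strictly positive}, so for $s\notin J$ we have no choice but to set $\rho_s=0$, and then $\tau_s=\Lambda c_s>0$ automatically for any $\Lambda>0$. For $s\in J$ we have $q_m(\theta_s)=0$, hence the Lagrange term vanishes and $\tau_s=\rho_s$; thus we just need $\rho_s>0$ for these indices, which is allowed. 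So for \emph{any} choice of $\Lambda>0$ and any positive values $\rho_s$ ($s\in J$) the positivity of all $\tau_s$ is immediate; the only real constraint is the normalization $\sum_s \tau_s = 1$.

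To handle the normalization I would argue as follows. Write $S_1 \defeq \sum_{s\notin J} c_s$ and $S_2 \defeq \sum_{s\in J}\rho_s$, so that $\sum_s\tau_s = \Lambda S_1 + S_2$. Here $S_1>0$ by Proposition~\ref{prop:q_hat} (it is a sum of strictly positive terms, and the index set $\{s:s\notin J\}$ is nonempty since $|J|=m< d+1$). First fix \emph{any} positive numbers $\rho_s$ for $s\in J$ with $S_2<1$ — for instance $\rho_s = \tfrac{1}{2(|J|+1)}$ for each $s\in J$, giving $S_2 = \tfrac{|J|}{2(|J|+1)}<1$ (if $J=\varnothing$ this step is vacuous and $S_2=0$). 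Then set
\[
\Lambda \defeq \frac{1-S_2}{S_1}>0,
\]
which is legitimate precisely because $S_1>0$ and $S_2<1$. With these choices $\Lambda S_1 + S_2 = 1$, i.e.\ $\sum_{s=0}^d\tau_s=1$, and all the positivity and vanishing-pattern requirements on $\rho_s$ hold by construction. This proves the proposition.

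The only delicate point is making sure the two ``denominators'' one divides by are nonzero: $S_1\neq 0$ and $1-S_2\neq 0$. The first is exactly the content of Proposition~\ref{prop:q_hat} together with the observation that there is at least one index outside $J$; the second is arranged by hand in choosing the $\rho_s$ small. (Alternatively, one could phrase this as a one-parameter family: for each admissible choice of the ratios among the $\rho_s$, $\Lambda$ is then forced, recording the $d-m$ degrees of freedom mentioned after Theorem~\ref{thm:construction_OPS} — the $d-m-|J|$ points $\mu_i$ from Proposition~\ref{prop:q_hat} plus the $|J|-1$ free ratios among the $\rho_s$, which indeed sum to $d-m-1$, with the last degree of freedom being an overall scaling that the normalization $\sum\tau_s=1$ removes.) I do not expect any genuine obstacle here; the work was already done in Proposition~\ref{prop:q_hat}, and this step is bookkeeping.
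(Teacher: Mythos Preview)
Your argument is correct when $J=\varnothing$, but it breaks for $s\in J$. You write ``for $s\in J$ we have $q_m(\theta_s)=0$, hence the Lagrange term vanishes and $\tau_s=\rho_s$''. This overlooks that, by the very construction in Proposition~\ref{prop:q_hat}, $\hat q_{d-m}(\theta_s)=0$ for every $s\in J$ as well. So the expression $\dfrac{q_m}{\hat q_{d-m}}(\theta_s)$ is a $0/0$; interpreted (as it must be, and as the paper does) as the value of the rational function after cancelling the common simple factor $(x-\theta_s)$, it is a \emph{nonzero} real number, and in fact its contribution $c_s$ to $\tau_s$ is typically negative (cf.\ the remark following Corollary~\ref{cor:partial_fraction}). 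Concretely, with $d=3$, $m=1$, $q_1(x)=x$, $q_4$ having zeros $\{-2,-1,0,1\}$ and $\hat q_2(x)=x(x+3/2)$, one gets $c_1=-1/3$ for the index $s\in J$.

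Because of this, your identity $\sum_s\tau_s=\Lambda S_1+S_2$ is false when $J\neq\varnothing$: the correct sum is $\Lambda\bigl(S_1+\sum_{s\in J}c_s\bigr)+S_2$, so your choice of $\Lambda$ does not normalize, and moreover the positivity constraint $\tau_s>0$ for $s\in J$ becomes $\rho_s>-\Lambda c_s$, which is circular if you insist on fixing the $\rho_s$ first. The paper resolves this by reversing the order: choose $\Lambda>0$ small enough that every Lagrange term (including the negative ones for $s\in J$) has absolute value below $\tfrac{1}{2(d+1)}$, and only then set the $\rho_s$ for $s\in J$ equal to an appropriate common positive value that simultaneously forces $\tau_s>0$ and $\sum_s\tau_s=1$. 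Once you account for the nonvanishing of $c_s$ on $J$, that order of choices is essentially forced.
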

\begin{proof} If $J$ is empty, we set $\rho_s = 0$ for every $s$. Since all the terms are positive according to Proposition~\ref{prop:q_hat}, we only need to choose $\Lambda$ as a normalization constant to satisfy the conditions. 

If $J$ is non-empty, we can first choose $\Lambda > 0$ sufficiently small and then, we choose $\rho_s > 0$ for $s$ in $J$ in such a way that all the conditions are satisfied. For instance, we can take $\Lambda > 0$ small such that,
\[
\left|\frac{q_m}{\hat{q}_{d-m}}(\theta_s)~ \frac{\Lambda}{\prod_{r \neq s} (\theta_s - \theta_r)}\right| < \dfrac{1}{2(d+1)},
\]
for every $s$, and then choose
\[
\rho_s \defeq \dfrac{1}{|J|}\left( 1 - \sum_{s=0}^d \frac{q_m}{\hat{q}_{d-m}}(\theta_s)~ \frac{\Lambda}{\prod_{r \neq s} (\theta_s - \theta_r)} \right)> \dfrac{1}{2|J|}>\dfrac{1}{2(d+1)},
\]
for $s$ in $J$ and $\rho_s=0$ if $s$ is not in $J$. Observe that in this case, by Proposition~\ref{prop:q_hat}, we clearly have $\sum_{s=0}^d \tau_s = 1$ and $\tau_s > 0$ for every $s$ in $\{0,\dots, d\}$.
\end{proof}

Now we are ready for the proof of the main result of this section.
    
\begin{proof}[Proof of Theorem~\ref{thm:construction_OPS}] 
Note that the necessity in the statement follows directly from Corollary~\ref{cor:strongly_interlace}. We will now prove the sufficiency part of the statement.

Consider the polynomial $\hat{q}_{d-m}$ defined in Proposition~\ref{prop:q_hat} and the values $\tau_s > 0$ for $s$ in $\{0, \dots, d\}$ defined in Proposition~\ref{prop:tau_s}.
Now, define the polynomial $q_d$ by
\[
\frac{q_d}{q_{d+1}}(x) = \sum_{s=0}^d \frac{\tau_s}{x - \theta_s}.
\]
Observe that $q_d$ is monic, has degree $d$, distinct real zeros, and strongly interlaces $q_{d+1}$. Thus, we may consider the orthogonal polynomial sequence $p_0 \equiv 1$, $p_1$, $\dots$, $p_{d+1}$ with $p_d := q_d$ and $p_{d+1} := q_{d+1}$. To finish the proof we just need to prove that $p_m$ is equal to $q_m$.

Let $I$ be the set of indices $s$ such that $\theta_s$ is a common zero of $p_m$ and $p_{d+1}$. Consider the polynomials $P_m$ and $\hat{P}_{d-m}$ obtained from $p_m$ and $\hat{p}_{d-m}$ by removing common zeros. Observe, that $\deg{P_m}=m-|I|$ and $\deg{\hat{P}_{d-m}}=d-m-|I|$. Similarly, consider the polynomials $Q_m$ and $\hat{Q}_{d-m}$, and observe that $\deg{Q_m}=m-|J|$ and $\deg{\hat{Q}_{d-m}}=d-m-|J|$.

Observe that, by Corollary~\ref{cor:partial_fraction} and Proposition~\ref{prop:tau_s}, if $s$ is not in $I \cup J$, then
\begin{align*}
\frac{P_m}{\hat{P}_{d-m}}(\theta_s)~ \frac{\prod_{t=m+1}^d \lambda_t^2}{\prod_{r \neq s} (\theta_s - \theta_r)} &= \lim_{x \to \theta_s} (x - \theta_s) \frac{p_d}{p_{d+1}}(x) \\
& = \lim_{x \to \theta_s} (x - \theta_s) \frac{q_d}{q_{d+1}}(x) \\ &= \frac{Q_m}{\hat{Q}_{d-m}}(\theta_s)~ \frac{\Lambda}{\prod_{r \neq s} (\theta_s - \theta_r)},
\end{align*}
which leads to the relation
\[
\left( \prod_{t=m+1}^d \lambda_t^2 \right) P_m(\theta_s) \hat{Q}_{d-m}(\theta_s) = \Lambda Q_m(\theta_s) \hat{P}_{d-m}(\theta_s),
\]
for every $s$ not in $I \cup J$. But, since $\deg(P_m \hat{Q}_{d-m})$ and $\deg(Q_m \hat{P}_{d-m})$ are both equal to $d - |I| - |J|$, we have
\[
\deg\left( \left( \prod_{t=m+1}^d \lambda_t^2 \right) P_m \hat{Q}_{d-m} - \Lambda Q_m \hat{P}_{d-m} \right) \leq d - |I| - |J| < |\{0, \dots, d\} \setminus (I \cup J)|.
\]
As a consequence,
\[
\left(\prod_{t=m+1}^d \lambda_t^2 \right) P_m \hat{Q}_{d-m} = \Lambda Q_m\hat{P}_{d-m}.
\]
Since the polynomials are monic, $\gcd(P_m, \hat{P}_{d-m}) = 1$ and $\gcd(Q_m, \hat{Q}_{d-m}) = 1$, it follows that $
\left( \prod_{t=m+1}^d \lambda_t^2 \right) = \Lambda$, $P_m = Q_m$ and $\hat{P}_{d-m} = \hat{Q}_{d-m}$. Thus, we conclude that $I = J$, $p_m = q_m$, and $\hat{p}_{d-m} = \hat{q}_{d-m}$, as we wanted to prove.
\end{proof}


\section{Cospectral vertices positions}\label{sec:cospectral_vertices_positions}

In this section, we study the possible positions of cospectral vertices in a $d$--chain. More precisely, we fully characterize the possible $0\leq \ell<m\leq d$ for which there exists some $d$--chain with vertices at positions $\ell$ and $m$ cospectral. 

\begin{theorem}\label{thm:cospectral_characterization}
Let $0\leq \ell<m\leq d$. There exists a $d$--chain in which $\ell$ and $m$ are cospectral if, and only if, $\ell<\frac{d}{2}<m$.
\end{theorem}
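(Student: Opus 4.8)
The strategy splits into a necessity direction and a sufficiency direction, and the natural language in both is the $\alpha$-function machinery from Lemma~\ref{lem:cospectral_alpha} together with the interlacing/construction results of Section~\ref{sec:orthogonal_polynomial_sequence}.

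For \emph{necessity}, suppose $\ell$ and $m$ are cospectral in a $d$--chain $P$ with $\ell < m$. By Lemma~\ref{lem:cospectral_alpha}(b), $\phi^{P\setminus \ell} = \phi^{P\setminus m}$. Now I would decompose the path along the deleted vertices: deleting $\ell$ splits $P$ into a left block on vertices $\{0,\dots,\ell-1\}$ (characteristic polynomial of degree $\ell$, call it $A_\ell$) and a right block on $\{\ell+1,\dots,d\}$ (degree $d-\ell$); deleting $m$ similarly gives blocks of degrees $m$ and $d-m$. Matching $\phi^{P\setminus\ell}=\phi^{P\setminus m}$ forces a factorization identity between these four polynomials. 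The key is to use strong interlacing (Corollary~\ref{cor:strongly_interlace}, applied to the pair $p_{\ell}=\phi$ of the principal submatrix on $\{0,\dots,\ell-1\}$ inside $p_m$, etc.) to pin down which roots these blocks can share. Concretely, $\phi^{P\setminus\ell}$ has the left block's $\ell$ roots interlaced by the full spectrum, and $\phi^{P\setminus m}$ has $m$ roots from its left block; counting how many roots of $\phi^P$ can lie strictly between consecutive roots of a degree-$\ell$ versus degree-$m$ subpath polynomial and using that the two vertex-deleted polynomials coincide should yield the inequality $\ell < d/2 < m$. I expect the cleanest route is: the left block of $P\setminus \ell$ (degree $\ell$) must strongly interlace $\phi^P$ (degree $d+1$), so it has at most $\ell$ roots; comparing with the factor structure of $\phi^{P\setminus m}$, whose right block has degree $d-m$, and noting $\gcd$ relations à la Corollary~\ref{cor:equal_gcds}, forces $\ell \le d-m-?$ type bounds that, after being made sharp, give $2\ell < d$ and $2m > d$.

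For \emph{sufficiency}, assume $\ell < d/2 < m$ and construct an explicit $d$--chain. Here I would invoke Theorem~\ref{thm:construction_OPS}: it suffices to produce monic polynomials $q_{d+1}$ (degree $d+1$, the target spectrum) and intermediate data so that the resulting OPS has $|p_\ell(\theta_s)| = C|p_m(\theta_s)|$ for all $s$ (then Lemma~\ref{lem:orthonormality} gives cospectrality). A clean way: pick $p_\ell$ and $p_m$ to be, up to sign, "the same polynomial pattern" — e.g. choose $p_m$ so that $p_\ell$ divides a symmetric transform of it, or more directly, choose the spectrum and the two block polynomials $A_\ell$ (degree $\ell$) on the left of $\ell$ and $\hat p_{d-m}$ (degree $d-m$) on the right of $m$ so that $A_\ell \cdot (\text{right of }\ell) = (\text{left of }m)\cdot \hat p_{d-m}$ as the common value $\phi^{P\setminus\ell}=\phi^{P\setminus m}$; the degree bookkeeping $\ell + (d-\ell) = m + (d-m) = d$ always works, and the constraint $\ell < d/2 < m$ is exactly what is needed so that the middle factor (degree $m-\ell$ on one side, degree $(d-\ell)-(d-m)=m-\ell$ on the other) can be chosen with the right interlacing pattern to apply Theorem~\ref{thm:construction_OPS}. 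I would construct $q_{d+1}$ as a product of three interlacing-compatible blocks and verify strong interlacing of the relevant pairs, then read off the chain.

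\textbf{Main obstacle.} The hard part is the necessity direction: turning "$\phi^{P\setminus\ell}=\phi^{P\setminus m}$ with $\ell<m$" into the sharp numerical inequality $\ell < d/2 < m$ rather than a weaker bound. The subtlety is that the two vertex-deleted graphs decompose into blocks of different sizes, and one must argue that no matter how roots are shared between blocks, the degree of the "left-of-$\ell$" block cannot reach $d/2$; I anticipate this needs a careful interlacing count (how many eigenvalues of $\phi^P$ lie between consecutive roots of a subpath characteristic polynomial of a given length), possibly phrased via the $\alpha^P_\ell = \alpha^P_m$ form and the positivity of residues noted after Corollary~\ref{cor:partial_fraction}. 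Establishing that this count is tight — and that $\ell = d/2$ is genuinely impossible, not merely unproven — will be where the real work lies.
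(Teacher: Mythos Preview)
Your assessment of which direction is hard is inverted. Necessity is a two-line degree argument, not an interlacing count: by Lemma~\ref{lem:orthonormality}, cospectrality of $\ell$ and $m$ means $|p_\ell(\theta_s)| = C|p_m(\theta_s)|$ for every $s$, so $p_{d+1}$ divides $p_\ell^2 - C^2 p_m^2$. Since $\ell<m$ this polynomial has degree exactly $2m$, whence $d+1 \le 2m$, i.e.\ $m > d/2$. Reflecting the chain (which preserves cospectrality and sends position $k$ to $d-k$) then forces $\ell < d/2$. Your proposed block-decomposition route may be salvageable, but it is needlessly delicate (note that $p_\ell$ and $p_m$ need \emph{not} be coprime when $m>\ell+1$, so ``$p_\ell \mid \hat p_{d-m}$'' does not follow immediately from $p_\ell \hat p_{d-\ell}=p_m\hat p_{d-m}$), and in any case the obstacle you flag simply is not there.

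The real work is sufficiency, and your plan there has a structural gap. Theorem~\ref{thm:construction_OPS} lets you prescribe exactly \emph{two} polynomials in the sequence; for $\ell>0$ you are implicitly trying to control $p_\ell$, $p_m$, and $p_{d+1}$ simultaneously (or equivalently the four block polynomials in $p_\ell\hat p_{d-\ell}=p_m\hat p_{d-m}$), and the theorem gives no handle on that. The paper resolves this in two steps. First, it treats $\ell=0$ directly: here $p_0\equiv 1$, so one only needs $|p_m|$ constant on the spectrum; pick any monic real-rooted $p_m$ of degree $m$, take $\epsilon>0$ small, and choose the $d+1$ eigenvalues from among the $2m$ roots of $p_m-\epsilon$ and $p_m+\epsilon$ so that strong interlacing with $p_m$ holds --- then Theorem~\ref{thm:construction_OPS} applies. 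Second, it shows (Proposition~\ref{prop:extension_cospectrality}, using the $\alpha$-function identities you mention) that any $d$--chain with $\ell,m$ cospectral can be extended to a $(d+2)$--chain with $\ell+1,m+1$ cospectral, by appending one vertex at each end; iterating $\ell$ times reduces the general case to the $\ell=0$ case on a $(d-2\ell)$--chain. Your sketch does not contain either of these ideas.
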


We proceed towards the proof of Theorem~\ref{thm:cospectral_characterization}. Our next result constrains the position of cospectral pairs of vertices in a chain and proves the necessity part of Theorem~\ref{thm:cospectral_characterization}.

\begin{proposition}\label{prop:restriction_cospectrality} Let $\ell<m$ be a cospectral pair of vertices in a $d$--chain. Then, $\ell<\dfrac{d}{2}<m$.
\end{proposition}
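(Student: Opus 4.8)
The plan is to use the characterization of cospectrality via the rational functions $\alpha^P_\ell$ and $\alpha^P_m$ from Lemma~\ref{lem:cospectral_alpha}, together with the block structure of vertex-deleted paths. Deleting a vertex at position $k$ from the path $P$ on vertices $\{0,\dots,d\}$ splits it into two sub-paths: a "left" piece $L_k$ on vertices $\{0,\dots,k-1\}$ (which has $k$ vertices, hence a characteristic polynomial of degree $k$) and a "right" piece $R_k$ on vertices $\{k+1,\dots,d\}$ (degree $d-k$). Thus $\phi^{P\setminus k} = \phi^{L_k}\phi^{R_k}$, a product of a degree-$k$ and a degree-$(d-k)$ polynomial, each monic with real simple zeros, and these two factors strongly interlace (consecutive polynomials in the orthogonal polynomial sequence read from either end). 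By Lemma~\ref{lem:cospectral_alpha}, cospectrality of $\ell$ and $m$ is equivalent to $\phi^{P\setminus \ell} = \phi^{P\setminus m}$, i.e.
\[
\phi^{L_\ell}\,\phi^{R_\ell} \;=\; \phi^{L_m}\,\phi^{R_m}.
\]

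The key structural point I would exploit is a sign/interlacing argument at the extreme eigenvalue $\theta_0$ (the largest root of $\phi^P=p_{d+1}$), or more robustly a counting argument on common zeros. Here is the counting route. Let $g=\gcd(\phi^{P\setminus\ell},\phi^P)$; by the same reasoning as in Corollary~\ref{cor:equal_gcds} (applied with the vertex $\ell$ playing the role of $m$ there, so that $\phi^{P\setminus\ell}=\phi^{L_\ell}\phi^{R_\ell}$ and the two factors are coprime), the common zeros of $\phi^{P\setminus \ell}$ and $\phi^P$ are exactly those of $\gcd(\phi^{L_\ell},\phi^{R_\ell})\cdot(\text{something})$ — more precisely one shows $\phi^{L_\ell}$ and $\phi^{R_\ell}$ are coprime, so $g$ divides neither factor individually beyond this, and $\deg g \le \min(\ell,d-\ell)$ is forced only in a weak sense. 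The cleaner statement: since $\phi^{L_\ell}$ and $\phi^{P}$ interlace (Cauchy) and likewise $\phi^{R_\ell}$ and $\phi^P$, and a vertex-deleted polynomial $\phi^{P\setminus \ell}$ always has the property that $\phi^P/\phi^{P\setminus\ell} = \sum_s \tau_s/(x-\theta_s)$ with all $\tau_s\ge 0$ (this is the Christoffel–Darboux positivity, a special case of Corollary~\ref{cor:partial_fraction} / the paragraph following it), the number of sign changes of $\phi^{P\setminus \ell}$ at the eigenvalues $\theta_0>\cdots>\theta_d$ is controlled, and in particular the sign of $\phi^{P\setminus \ell}(\theta_0)$ agrees with that of $\phi^{L_\ell}(\theta_0)\phi^{R_\ell}(\theta_0)$, both leading coefficients positive and $\theta_0$ above all zeros of both factors...

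Let me instead commit to the cleanest argument, which is a degree/interlacing parity count. From $\phi^{L_\ell}\phi^{R_\ell}=\phi^{L_m}\phi^{R_m}$ and unique factorization, the multiset of roots matches. Now $\phi^{L_\ell}$ has degree $\ell$, $\phi^{R_\ell}$ has degree $d-\ell$; similarly $\ell \mapsto m$. Suppose for contradiction $m \le d/2$, so $\ell<m\le d-m\le d-\ell$. Consider the largest root $\theta_0$ of $\phi^P$. Since $\phi^P = \phi^{L_m}\phi^{R_m} + (\text{sum of }\tau_s\text{ residues})$-type identity is not quite what I want; rather, use that $\phi^P$ and $\phi^{P\setminus m}$ interlace with $\theta_0$ being the top root of $\phi^P$, hence $\phi^{P\setminus m}(\theta_0)$ has sign $(-1)^{?}$... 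The hard part — and this is where I expect the real obstacle — is pinning down exactly which extremal-eigenvalue sign or interval-counting inequality forces $\ell < d/2 < m$; the natural candidate is: $\phi^{L_\ell}$ and $\phi^{L_m}$ both divide the same polynomial $\phi^{P\setminus\ell}$, and the interlacing of $\phi^{L_m}$ inside $\phi^P$ versus $\phi^{L_\ell}$ inside $\phi^P$ forces a containment of zero-intervals that is impossible unless the degree of the left block on the smaller side is strictly less than $(d+1)/2$. So the step I would spend the most care on is extracting the clean inequality from the equation $\phi^{L_\ell}\phi^{R_\ell}=\phi^{L_m}\phi^{R_m}$ — likely by looking at $\gcd(\phi^{L_\ell},\phi^{L_m})$ and $\gcd(\phi^{R_\ell},\phi^{R_m})$ and arguing that since $\phi^{L_\ell}\mid \phi^{L_m}\phi^{R_m}$ while $\phi^{L_\ell}$ strongly interlaces the full $\phi^P$ "from the left end," its roots must be among the smaller roots, whence $\phi^{L_\ell}\mid \phi^{L_m}$, forcing $\ell\le m$ trivially but more usefully $\phi^{R_m}\mid \phi^{R_\ell}$, forcing $d-m\le d-\ell$, and then a strictness argument at the shared endpoint eigenvalue (using that a genuine common zero would violate the strict-interlacing coprimality of consecutive OPS members, exactly as in Corollary~\ref{cor:equal_gcds}) upgrades this to $\ell < d/2 < m$. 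I would present it in that order: (1) block factorization of $\phi^{P\setminus k}$; (2) invoke Lemma~\ref{lem:cospectral_alpha}(b); (3) divisibility/interlacing to get $\phi^{L_\ell}\mid\phi^{L_m}$ and $\phi^{R_m}\mid\phi^{R_\ell}$; (4) strictness via the CD positivity of the $\tau_s$'s, concluding $\ell<d/2<m$.
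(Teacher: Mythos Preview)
Your proposal has a genuine gap at step (3): the divisibility claim $\phi^{L_\ell}\mid\phi^{L_m}$ (equivalently $p_\ell\mid p_m$) is not justified, and the heuristic that the roots of $p_\ell$ ``must be among the smaller roots'' is incorrect --- the zeros of $p_\ell$ strongly interlace $p_{d+1}$ and are spread across the whole spectrum, not concentrated at one end. From $p_\ell\hat p_{d-\ell}=p_m\hat p_{d-m}$ alone there is no mechanism forcing $p_\ell\mid p_m$, and even if you obtained it, the conclusions you draw ($\ell\le m$ and $d-m\le d-\ell$) are the same trivial inequality and say nothing about $d/2$. Your step (4) would then have to manufacture the entire content of the proposition from ``strictness,'' and you give no indication of how.

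The paper's argument bypasses all of this with a one-line degree count. By Lemma~\ref{lem:orthonormality}, cospectrality gives $|p_\ell(\theta_s)|=C|p_m(\theta_s)|$ for all $s$, hence $p_\ell^2-C^2p_m^2$ vanishes at every $\theta_s$, so $p_{d+1}\mid p_\ell^2-C^2p_m^2$. Since $\ell<m$, the right-hand side has degree exactly $2m$, forcing $d+1\le 2m$, i.e.\ $m>d/2$. Reflecting the chain (sending vertex $k$ to $d-k$) and applying the same inequality to the pair $(d-m,d-\ell)$ rules out $\ell\ge d/2$. This is the idea you are missing: work with $p_\ell$ and $p_m$ evaluated at the eigenvalues (via Lemma~\ref{lem:orthonormality}) rather than with the factorization $\phi^{P\setminus k}=p_k\hat p_{d-k}$; the squared relation immediately converts cospectrality into a divisibility by $p_{d+1}$, and the degree bound does the rest.
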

\begin{proof} We have by Lemma~\ref{lem:orthonormality} that $\ell$ and $m$ are cospectral if, and only if ${p_{d+1}\mid p_\ell^2-Cp_m^2}$, for some $C>0$. It follows that,

\[d+1=\deg(p_{d+1})\leq\deg(p_\ell^2-Cp_m^2)=2m,\]

\noindent and therefore $d/2<m$. In other words, we must have at least one vertex in the second half of the path. Now notice that we cannot have both vertices in the second half of the path, because in this case a reflection of the path would give us a path with two cospectral vertices in its first half.
\end{proof}

This result also shows that there are no triplets of cospectral vertices in chains, for in this case two of them would be on the same half of the path, which is forbidden by Proposition~\ref{prop:restriction_cospectrality}.

We proceed to prove the sufficiency part of Theorem~\ref{thm:cospectral_characterization}. Our next result is a direct consequence of the results of the previous sections.

\begin{proposition}\label{thm:construction_cospectral} Let $p_{d+1}$ be a monic polynomial with integer zeros $\theta_d<\cdots<\theta_0$, and let $\frac{d}{2}<m\leq d$. Then, the following are equivalent:
\begin{enumerate}[(a)]
    \item There exists a $d$--chain with spectrum $\{\theta_0, \dots, \theta_d\}$ in which the vertices $0$ and $m$ are cospectral.
    \item There exists a monic polynomial $p_m$ of degree $m$ with real zeros, which strongly interlaces $p_{d+1}$, such that $|p_m(\theta_s)|$ is constant for all $\theta_s$.
\end{enumerate}
\end{proposition}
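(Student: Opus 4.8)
The plan is to prove the two implications separately, leaning on the machinery already developed.

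For $(a)\Rightarrow(b)$: assume there is a $d$--chain $P$ with spectrum $\{\theta_0,\dots,\theta_d\}$ in which vertices $0$ and $m$ are cospectral. Let $(p_k)_k$ be the associated orthogonal polynomial sequence, so $p_{d+1}=\phi^P$. By Corollary~\ref{cor:strongly_interlace}, $p_m$ strongly interlaces $p_{d+1}$, and $p_m$ is monic of degree $m$ with real zeros. By Lemma~\ref{lem:orthonormality}, cospectrality of $0$ and $m$ means $|p_0(\theta_s)|=C|p_m(\theta_s)|$ for all $s$ and some $C>0$; since $p_0\equiv 1$, this says exactly $|p_m(\theta_s)|=1/C$ is constant over $s$. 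This gives (b).

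For $(b)\Rightarrow(a)$: given such a $p_m$, apply Theorem~\ref{thm:construction_OPS} with $q_m:=p_m$ and $q_{d+1}:=p_{d+1}$ --- the hypothesis that they strongly interlace is precisely what that theorem requires. This produces an orthogonal polynomial sequence $(p_k)_k$ with the prescribed $p_m$ and $p_{d+1}$, hence a $d$--chain (Jacobi matrix) with spectrum $\{\theta_0,\dots,\theta_d\}$, positive off-diagonal entries, and $0$-th polynomial $p_0\equiv 1$. Since $|p_m(\theta_s)|$ is constant, say equal to $c>0$, we have $|p_0(\theta_s)|=1=\frac{1}{c}|p_m(\theta_s)|$ for all $s$; taking $C=1/c$, Lemma~\ref{lem:orthonormality} gives that vertices $0$ and $m$ are cospectral. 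Note the constraint $\frac{d}{2}<m$ is automatically compatible here: strong interlacing forces $\deg p_m=m$ zeros to interlace $d+1$ zeros of $p_{d+1}$, which is fine for any $m\le d$, and cospectrality with vertex $0$ via $p_{d+1}\mid p_0^2-Cp_m^2$ forces $d+1\le 2m$ as in Proposition~\ref{prop:restriction_cospectrality}; we may simply assume $m$ in the stated range.

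I do not expect a serious obstacle here, since the proposition is essentially a repackaging of Lemma~\ref{lem:orthonormality} together with Theorem~\ref{thm:construction_OPS}; the only point requiring a little care is making sure the "constant modulus" condition is transcribed correctly against $p_0\equiv 1$ (rather than against a normalized eigenvector), which is exactly the content of Lemma~\ref{lem:orthonormality} and is why that lemma was proved in the strengthened form. A secondary bookkeeping point is to confirm that the $d$--chain produced by Theorem~\ref{thm:construction_OPS} indeed has the full spectrum $\{\theta_0,\dots,\theta_d\}$ --- but this is immediate since $p_{d+1}$ is its characteristic polynomial by construction.
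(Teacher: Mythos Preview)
Your proof is correct and follows essentially the same approach as the paper: for $(a)\Rightarrow(b)$ both invoke Lemma~\ref{lem:orthonormality} and Corollary~\ref{cor:strongly_interlace}, and for $(b)\Rightarrow(a)$ both apply Theorem~\ref{thm:construction_OPS} to build the orthogonal polynomial sequence (hence the Jacobi matrix with the right spectrum) and then use Lemma~\ref{lem:orthonormality} again to conclude cospectrality. Your additional remarks about the range $\frac{d}{2}<m\le d$ are accurate but not needed for the argument itself.
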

\begin{proof} The implication (a) $\implies$ (b) is clear by  Lemma~\ref{lem:orthonormality} and Corollary~\ref{cor:strongly_interlace}. Therefore assume that $p_m$ is a polynomial with the properties as in (b). Since $p_m$ and $p_{d+1}$ are monic polynomials of degrees $m$ and $d+1$ that strongly interlace, Theorem~\ref{thm:construction_OPS} implies that there is an orthogonal polynomial sequence ${(p_k(x))_{k=0,\dots, d}}$ containing both. This orthogonal polynomial sequence satisfies a three-term recurrence relation~\cite{chihara2011introduction} $p_{k+1}(x):=(x-a_k)p_k-\lambda_k^2 p_{k-1}(x)$, for $k$ in $[d]$, where $p_0\equiv 1$ and $p_1(x)=x-a_0$. Using these coefficients $a_k$ and $\lambda_k$ we can then define a Jacobi matrix $J$, as in Equation~\eqref{eq:jacobi_matrix}. In this way, we obtain a $d$--chain with spectrum $\{\theta_0,\dots, \theta_d\}$ with these polynomials $p_m$ and $p_{d+1}$. As $|p_m(\theta_s)|$ is constant as $s$ varies in $\{0,\dots, d\}$, by Lemma~\ref{lem:orthonormality}, we obtain that $0$ and $m$ are cospectral in this $d$--chain. \end{proof}

As a consequence of Theorem~\ref{thm:construction_cospectral}, we can construct a chain with cospectral vertices $0$ and 
$m$, provided the condition in Proposition~\ref{prop:restriction_cospectrality} is satisfied.

\begin{proposition}\label{prop:existence_cospectrality} Let $d/2<m\leq d$. Then, a $d$--chain exists in which $0$ and $m$ form a cospectral pair.  
\end{proposition}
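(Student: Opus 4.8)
The plan is to reduce the existence of a $d$--chain with cospectral vertices $0$ and $m$ (where $d/2 < m \leq d$) to exhibiting, via Proposition~\ref{thm:construction_cospectral}, a concrete monic polynomial $p_{d+1}$ with integer zeros together with a monic polynomial $p_m$ of degree $m$ that strongly interlaces $p_{d+1}$ and has $|p_m(\theta_s)|$ constant across the spectrum. So the whole task becomes: build such a pair $(p_m, p_{d+1})$. The most natural way to force $|p_m|$ to be constant on the zeros of $p_{d+1}$ is to arrange that $p_m^2$ and $p_{d+1}$ differ in a controlled way; concretely I would look for an identity of the shape $p_m(x)^2 - C = p_{d+1}(x)\, r(x)$ for some polynomial $r$ of degree $2m - (d+1)$ and some constant $C>0$. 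Evaluating at each $\theta_s$ then gives $p_m(\theta_s)^2 = C$, exactly the required condition, and the strong interlacing can be read off from the sign pattern of $p_m$ at consecutive zeros of $p_{d+1}$.

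The cleanest realization of this is to take a Chebyshev-type construction. Recall that the Chebyshev polynomials of the first kind satisfy $T_n(\cos\phi) = \cos(n\phi)$ and obey $T_n(x)^2 - 1 = (x^2-1)U_{n-1}(x)^2$, so $|T_n|$ equals $1$ precisely at the $n+1$ points $\cos(k\pi/n)$. After an affine rescaling of the variable so that these $n+1$ extremal points become integers $\theta_d < \cdots < \theta_0$, we obtain a monic polynomial (a scalar multiple of $T_n$ composed with a linear map) of degree $n$ whose absolute value is constant on that integer set, and whose square minus a constant is divisible by $p_{d+1} := \prod_s (x - \theta_s)$ when $n+1$ divides the relevant degree. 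To hit an arbitrary $m$ with $d/2 < m \leq d$, I would not insist on a pure Chebyshev polynomial of degree exactly $d+1-1$; instead I would pick $p_{d+1}$ to have its $d+1$ integer roots chosen as (an affine image of) the extrema of $T_n$ for a suitable $n$, allowing $p_m$ to be built by multiplying such a Chebyshev factor by additional linear factors $(x - \mu_j)$ placed in the gaps where no root of $p_{d+1}$ lies, so that degrees add up to $m$ while strong interlacing is preserved. One then verifies $|p_m(\theta_s)|$ is still constant because the extra factors are chosen to contribute equal absolute values at all $\theta_s$ — which forces being a bit more careful, so in practice the simpler route is: first handle $m$ for which a clean construction works, then bootstrap.

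Actually, a slicker uniform argument avoids Chebyshev gymnastics entirely. Fix any $m$ with $d/2 < m \leq d$. Set $k := 2m - d - 1 \geq 0$, so $0 \leq k < m$. Choose integer eigenvalues $\theta_0 > \cdots > \theta_d$ to be the images under an affine map of the $d+1$ extrema of a degree-$(d+1-?)$... — cleaner yet: let $p_{d+1}$ be the monic polynomial whose roots are an affine copy of $\{\cos(j\pi/m) : j = 0, \dots, \text{(enough points)}\}$; the key structural fact I want is an equation $p_m^2 - C = p_{d+1}\cdot g$ with $\deg g = k$ and $g$ having all real roots interlacing appropriately, which is exactly the factorization of $T_m^2 - 1$. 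So: take $p_m$ to be (a monic affine rescaling of) $T_m$, whose $m+1$ extremal abscissae we rescale to be integers; these are the roots of $p_m^2 - C$ together with... no — $p_m^2 - C$ has degree $2m$ and double roots at the $m-1$ interior extrema plus simple roots at the two endpoints, total $2(m-1)+2 = 2m$. I then let $p_{d+1}$ be the degree-$(d+1)$ monic divisor of $p_m^2 - C$ consisting of all $m-1$ interior extrema (as simple roots) plus the two endpoints plus, if needed, ... since $d + 1 \leq 2m$ and $d+1 \geq m+1$, I can select exactly $d+1$ roots of $p_m^2 - C$ forming a set with distinct values, including all of the simple ones, to serve as the integer spectrum; then $|p_m(\theta_s)|^2 = C$ automatically, and strong interlacing of $p_m$ with $p_{d+1}$ holds because the roots of $p_m$ sit strictly between consecutive extrema. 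The main obstacle I anticipate is precisely this bookkeeping: choosing which $d+1$ of the roots of $T_m^2 - 1$ to designate as the spectrum so that (i) they can simultaneously be made integers by a single affine map, (ii) $p_m$ strongly interlaces the resulting $p_{d+1}$, and (iii) $|p_m|$ is genuinely constant (not merely $\leq$) on all of them. Once that combinatorial selection is pinned down, invoking Proposition~\ref{thm:construction_cospectral} closes the argument; I would therefore spend most of the write-up carefully specifying the integer eigenvalue set and verifying the interlacing inequalities at each consecutive pair.
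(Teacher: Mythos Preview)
Your overall plan is the same as the paper's: produce a monic $p_m$ and a monic $p_{d+1}$ with distinct real zeros so that $|p_m|$ is constant on the zero set of $p_{d+1}$ and the two polynomials strongly interlace, then invoke Proposition~\ref{thm:construction_cospectral}. That reduction is correct.

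The execution, however, has a genuine gap. You take $p_m$ to be (an affine copy of) the monic Chebyshev polynomial and set $C$ equal to the square of its extremal value, so that $p_m^2 - C$ factors with \emph{double} roots at the $m-1$ interior extrema and simple roots at the two endpoints. You then write ``since $d+1\le 2m$ and $d+1\ge m+1$, I can select exactly $d+1$ roots of $p_m^2-C$ forming a set with distinct values''. But $p_m^2-C$ has only $m+1$ \emph{distinct} roots; the bound $d+1\le 2m$ is a multiplicity count, not a distinct-value count. As soon as $m<d$ (which is the typical case), you cannot extract $d+1$ distinct eigenvalues from this set, so no $p_{d+1}$ with simple spectrum arises.

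The fix is exactly what the paper does, and it is a one-line change to your idea: instead of the extremal constant, pick any $\epsilon>0$ strictly smaller than the minimum of $|p_m|$ over its critical points. Then $p_m-\epsilon$ and $p_m+\epsilon$ are each real-rooted with $m$ simple zeros, and these $2m$ zeros are pairwise distinct. The paper takes an \emph{arbitrary} monic $p_m$ with simple real zeros (Chebyshev is unnecessary), lists the $2m$ roots of $(p_m-\epsilon)(p_m+\epsilon)$ as $\rho_{2m-1}<\cdots<\rho_0$, and selects $S=\{\rho_0,\rho_{2m-1}\}\cup\{\rho_2,\rho_4,\dots,\rho_{2m-2}\}$ together with any $d-m$ of the odd-indexed $\rho$'s. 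This $S$ has $d+1$ distinct elements, $|p_m|\equiv\epsilon$ on $S$, and the strong interlacing of $p_m$ with $p_{d+1}:=\prod_{\theta\in S}(x-\theta)$ is immediate from the way the $\rho_i$ interlace the zeros of $p_m$.

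One further remark: you spend effort trying to force the eigenvalues to be integers (via an affine map sending Chebyshev extrema to integers, which in general cannot be done). This is unnecessary. Although Proposition~\ref{thm:construction_cospectral} is stated with integer zeros, its proof uses only Theorem~\ref{thm:construction_OPS} and Lemma~\ref{lem:orthonormality}, neither of which needs integrality; the integer hypothesis is there for the later periodicity discussion, not for cospectrality. Dropping it removes the obstacle you anticipated in item~(i) of your final paragraph.
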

\begin{proof} Let $p_m$ be any monic polynomial with $m$ distinct real zeros. Fix an $\epsilon>0$ sufficiently small such that $p_m\pm \epsilon$ has distinct real zeros. Let $\rho_{2m-1}<\cdots <\rho_0$ be the union of the zero sets of $p_m-\epsilon$ and $p_m+\epsilon$. Note that the zeros of $p_m$ are in the open interval $(\rho_{2m-1},\rho_0)$ and that between two consecutive zeros of $p_m$ there is always an element of $\{\rho_2,\rho_4,\dots, \rho_{2m-2}\}$. For a given $m\leq d<2m$ let $S$ be the union of $\{\rho_0, \rho_{2m-1}\}\sqcup \{\rho_2,\rho_4,\dots, \rho_{2m-2}\}$ with a subset of $\{\rho_1,\rho_3,\dots, \rho_{2m-3}\}$ with $d-m$ elements. Note that $S$ has $d+1$ elements. Let $p_{d+1}$ be the monic polynomial of degree $d+1$ with zero set $S$, and observe that $p_{d+1}$ and $p_m$ strongly interlace and that $|p_m|$ is constant on $S$. Using these polynomials $p_m$ and $p_{d+1}$, by Theorem~\ref{thm:construction_cospectral}, we can obtain a $d$--chain with vertices $0$ and $m$ cospectral.
\end{proof}

The next result shows that given a chain with a cospectral pair of vertices, we can always construct a larger chain with a cospectral pair of vertices. The proof of this result uses techniques and results developed in~\cite{coutinho2023strong, coutinho2024no, coutinho2025spectrum}. Although the results in these works are stated for unweighted graphs, they also apply to weighted graphs with simple modifications.

\begin{proposition}\label{prop:extension_cospectrality} Let $\ell<m$ be a cospectral pair in a $d$--chain, and $k$ be a natural number. Then, there exists a $(d+2k)$--chain in which $\ell+k<m+k$ form a cospectral pair.
\end{proposition}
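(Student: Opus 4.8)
The plan is to build the larger chain by attaching a path of length $k$ to each end of the given $d$--chain, and to show that cospectrality of $\ell$ and $m$ survives this symmetric ``padding''. Concretely, if $P$ is the weighted path realizing the $d$--chain, I would form $P'$ by appending $k$ new vertices (with newly chosen positive edge weights, say all equal to $1$, and zero loops) before vertex $0$ and another $k$ new vertices after vertex $d$; the old vertex in position $j$ now sits in position $j+k$, so $\ell+k$ and $m+k$ are the candidate cospectral pair. The natural tool is Lemma~\ref{lem:cospectral_alpha}: it suffices to prove $\alpha^{P'}_{\ell+k}=\alpha^{P'}_{m+k}$, or equivalently $\phi^{P'\setminus(\ell+k)}=\phi^{P'\setminus(m+k)}$.

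The key computation is a path-surgery identity for characteristic polynomials. Deleting vertex $\ell+k$ from $P'$ splits $P'$ into three pieces: the appended left path $L_k$ (on $k$ vertices), the ``interior'' graph $P\setminus\ell$ together with the appended right path, etc.; more cleanly, I would use the standard decomposition that for a path $P'$ obtained by gluing $L_k$ -- $P$ -- $R_k$ at the endpoints, one has a two-term expansion of $\phi^{P'}$ and of $\phi^{P'\setminus v}$ (for $v$ an interior vertex of the $P$-part) in terms of $\phi^P$, $\phi^{P\setminus v}$, and the characteristic polynomials $\phi^{L_k},\phi^{L_{k-1}},\phi^{R_k},\phi^{R_{k-1}}$ of the pendant paths and their truncations. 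Because the same pendant paths are attached regardless of whether we delete $\ell+k$ or $m+k$, every factor involving $L$'s and $R$'s is identical in $\phi^{P'\setminus(\ell+k)}$ and $\phi^{P'\setminus(m+k)}$; the only place the two expressions can differ is through $\phi^P$, $\phi^{P\setminus\ell}$ versus $\phi^P$, $\phi^{P\setminus m}$, and the ``bridge'' polynomials $\phi^{P\setminus\{0,\ldots,\ell\}}$ etc. Here I would invoke the hypothesis: $\ell$ and $m$ cospectral in $P$ gives $\phi^{P\setminus\ell}=\phi^{P\setminus m}$, and -- using that $\ell<d/2<m$ from Proposition~\ref{prop:restriction_cospectrality}, plus the $\alpha$-function identities of Lemma~\ref{lem:cospectral_alpha}(d) -- the remaining bridge terms match as well. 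Assembling these equalities yields $\phi^{P'\setminus(\ell+k)}=\phi^{P'\setminus(m+k)}$, hence cospectrality.

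Alternatively, and perhaps more in the spirit of the paper, I would argue spectrally via $\alpha$-functions directly: $\alpha^{P'}_{v}$ for an interior vertex $v$ is obtained from $\alpha^{P}_{v}$ by the continued-fraction operation of appending $k$ unit edges on each side (this is the resolvent/Schur-complement interpretation of $\alpha$), and this operation is the \emph{same} rational transformation for $v=\ell+k$ and $v=m+k$; since $\alpha^{P}_\ell=\alpha^{P}_m$ by hypothesis, applying the same transformation preserves the equality. The techniques of \cite{coutinho2023strong, coutinho2024no, coutinho2025spectrum} cited in the statement presumably package exactly this ``glue a pendant path'' operation on $\alpha$-functions, so the bulk of the work is checking that those lemmas, stated for unweighted graphs, transfer verbatim to the weighted tridiagonal setting (the modification being only bookkeeping of the edge weights in the continued fraction).

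The main obstacle I anticipate is not the existence of the surgery formula but verifying that the resulting $(d+2k)$--chain still has \emph{positive} off-diagonal entries and genuinely simple spectrum -- i.e., that the construction stays inside the class of $d$--chains. Appending unit edges keeps weights positive automatically, so the real subtlety is simplicity of the eigenvalues of the enlarged Jacobi matrix; this, however, is automatic since any Jacobi matrix with positive off-diagonal entries has simple spectrum (as recalled after Equation~\eqref{eq:jacobi_matrix}). A secondary nuisance is handling the degenerate overlaps when $\ell=0$ (so there is no left bridge to speak of) or when $\ell$ or $m$ is adjacent to the junction; these can be absorbed by adopting the conventions $\phi^{\emptyset}=1$, $\phi^{K_1 \text{ with loop } a}=x-a$ consistently throughout the pendant-path recursion.
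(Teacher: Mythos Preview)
Your approach has a genuine gap: appending identical pendant paths to both ends of $P$ does \emph{not} preserve cospectrality of interior vertices, and neither of your two sketches actually closes. In the surgery expansion, deleting $\ell+k$ from $P'$ factors $\phi^{P'\setminus(\ell+k)}$ as the product of the characteristic polynomial of $L_k$ glued to $P[0,\ell-1]$ and that of $P[\ell+1,d]$ glued to $R_k$. Expanding each gluing by the standard edge-deletion identity brings in not only $p_\ell$ and $\hat p_{d-\ell}$ but also the ``one-vertex-shorter'' pieces $\phi^{P[1,\ell-1]}$ and $\phi^{P[\ell+1,d-1]}$. The hypothesis $\phi^{P\setminus\ell}=\phi^{P\setminus m}$ gives only $p_\ell\hat p_{d-\ell}=p_m\hat p_{d-m}$; it says nothing about $\phi^{P[1,\ell-1]}$ versus $\phi^{P[1,m-1]}$, and Lemma~\ref{lem:cospectral_alpha}(d) does not supply this either. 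Your alternative $\alpha$-function argument has the same defect: appending a pendant at an endpoint is \emph{not} a rational transformation of $\alpha^P_v$ alone --- its effect on $\alpha^{P'}_v$ depends on the entire stretch of $P$ between $v$ and that endpoint (equivalently, on how closed walks from $v$ reach the endpoint), and this stretch is different for $v=\ell$ and $v=m$. A quick sanity check: for a non-symmetric chain with $\ell=0$ and $m<d$ cospectral (as produced by Proposition~\ref{prop:existence_cospectrality}), a pendant attached at $0$ is adjacent to $\ell$ but $m$ steps away from $m$, so the ``extra'' closed walks contributed to $(\ell+k,\ell+k)$ and $(m+k,m+k)$ are manifestly unequal.

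The paper's proof is structurally different. It keeps only the \emph{middle} segment $P[\ell,m]$ intact (shifted to positions $\ell+1,\dots,m+1$ in the new chain) and completely \emph{rebuilds} the two outer segments as fresh orthogonal polynomial sequences. The rebuilding is engineered so that, in the new chain $\tilde P$, one has
\[
\alpha^{\tilde P\setminus(m+1)}_{\ell+1}=\alpha^{P\setminus m}_\ell-\frac{1}{x-u}
\quad\text{and}\quad
\alpha^{\tilde P\setminus(\ell+1)}_{m+1}=\alpha^{P\setminus\ell}_m-\frac{1}{x-u},
\]
achieved by adding a single extra pole $1/(x-u)$ to the partial-fraction expansions of $\lambda_\ell^2/\alpha^{P\setminus\ell}_{\ell-1}$ and $\lambda_{m+1}^2/\alpha^{P\setminus m}_{m+1}$ and then reconstructing the outer Jacobi blocks from the resulting interlacing pairs. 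Since $\alpha^{P\setminus m}_\ell=\alpha^{P\setminus\ell}_m$ by Lemma~\ref{lem:cospectral_alpha}(d), the two displayed quantities coincide, and cospectrality follows. The point is that the freedom to rebuild the outer portions (rather than merely append to them) is exactly what your construction lacks.
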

\begin{proof} It suffices to prove the statement for $k=1$, since iterating this case yields the general result. Let $\ell < m$ be a cospectral pair of vertices in a $d$--chain, which we view as a weighted path $P$. By Lemma~\ref{lem:cospectral_alpha}, we have $\alpha^{P\setminus m}_\ell = \alpha^{P\setminus \ell}_m$. We are going to construct a $(d+2)$--chain $\tilde{P}$ such that
\[
\alpha^{\tilde{P}\setminus m+1}_{\ell+1} = \alpha^{P\setminus m}_\ell - \frac{1}{x-u} = \alpha^{P\setminus \ell}_m - \frac{1}{x-u} = \alpha^{\tilde{P}\setminus \ell+1}_{m+1},
\]
for some appropriately chosen real number $u$. By Lemma~\ref{lem:cospectral_alpha} this will establish the result.

Assume that $P$ has vertex weights $(a_s)_{s=0,\dots,d}$ and edge weights $(\lambda_s)_{s=1,\dots,d}$, with associated orthogonal polynomial sequence $(p_s)_{s=0,\dots,d+1}$. By Equation~(4) in~\cite{coutinho2023strong}, we have
\[
\alpha^{P\setminus m}_\ell = \alpha^{P\setminus \{\ell-1, m\}}_\ell - \frac{\lambda_\ell^2}{\alpha^{P\setminus \ell}_{\ell-1}}, 
\; \text{and} \;
\alpha^{P\setminus \ell}_m = \alpha^{P\setminus \{m+1, \ell\}}_\ell - \frac{\lambda_{m+1}^2}{\alpha^{P\setminus m}_{m+1}},
\]
where we set $\lambda_0 = 0 = \lambda_{d+1}$ if $\ell=0$ or $m=d$. Moreover, we have
\[
\alpha^{P\setminus \ell}_{\ell-1} = \frac{\phi^{P\setminus \ell}}{\phi^{P\setminus \{\ell, \ell-1\}}} = \frac{p_\ell \hat{p}_{d-\ell}}{p_{\ell-1} \hat{p}_{d-\ell}} = \frac{p_\ell}{p_{\ell-1}}, \, \text{and similarly} \,
\alpha^{P\setminus m}_{m+1} = \frac{\hat{p}_{d-m}}{\hat{p}_{d-m-1}}.
\]

Since $p_{\ell}$ and $p_{\ell-1}$, and $\hat{p}_{d-m}$ and $\hat{p}_{d-m-1}$ strongly interlace, by Corollary~\ref{cor:partial_fraction} we have
\[
\frac{\lambda_\ell^2}{\alpha^{P\setminus \ell}_{\ell-1}} = \sum_{s=1}^\ell \frac{\mu_s}{x-r_s}, 
\quad \text{and} \quad
\frac{\lambda_{m+1}^2}{\alpha^{P\setminus m}_{m+1}} = \sum_{s=1}^{d-m} \frac{\tau_s}{x-t_s},
\]
where $r_1,\dots,r_\ell$ and $t_1,\dots,t_{d-m}$ are distinct real numbers, and $\mu_1,\dots,\mu_\ell$, $\tau_1,\dots,\tau_{d-m}$ are positive. 

Let $u$ be a real number different from $r_1,\dots,r_\ell$ and $t_1,\dots,t_{d-m}$. Define the monic polynomials $q_\ell$, $q_{\ell+1}$, $\hat{q}_{d-m}$, and $\hat{q}_{d+1-m}$ by
\[
\tilde{\lambda}_{\ell+1}^2 \frac{q_{\ell+1}}{q_\ell} = \sum_{s=1}^\ell \frac{\mu_s}{x-r_s} + \frac{1}{x-u}, 
\quad \text{and} \quad
\tilde{\lambda}_{m+2}^2 \frac{\hat{q}_{d+1-m}}{\hat{q}_{d-m}} = \sum_{s=1}^{d-m} \frac{\tau_s}{x-t_s} + \frac{1}{x-u},
\]
where $\tilde{\lambda}_{\ell+1}$ and $\tilde{\lambda}_{m+2}$ are positive constants.

As $q_{\ell+1}$ and $q_{\ell}$, and $\hat{q}_{d+1-m}$ and $\hat{q}_{d-m}$ strongly interlace (by Lemma~4 in~\cite{coutinho2024no}), there exist orthogonal polynomial sequences $(q_s)_{s=0,\dots,\ell+1}$ and $(\hat{q}_s)_{s=0,\dots,d+1-m}$ containing these polynomials and satisfying the three-term recurrences
\[
q_1(x) = x - \tilde{a}_0, \; 
q_{s+1}(x) = (x - \tilde{a}_s) q_s(x) - \tilde{\lambda}_s^2 q_{s-1}(x), \quad s \in [\ell],
\]
and
\[
\hat{q}_1(x) = x - \tilde{a}_{d+2}, \; 
\hat{q}_{s+1}(x) = (x - \tilde{a}_{d+2-s}) \hat{q}_s(x) - \tilde{\lambda}_{d+2-(s-1)}^2 \hat{q}_{s-1}(x), \quad s \in [d-m].
\]

For $s$ in $\{\ell +1, \dots, m+1\}$ define $\tilde{a}_s\defeq a_{s-1}$, and for $s$ in $\{\ell+2,\dots, m+1\}$ define $\tilde{\lambda}_s\defeq \lambda_{s-1}$. Let $\tilde{P}$ have vertex weights $(\tilde{a}_s)_{s=0,\dots,d+2}$ and edge weights $(\tilde{\lambda}_s)_{s=1,\dots,d+2}$ with associated orthogonal polynomial sequence $(\tilde{p}_k)_{k=0,\dots, d+3}$.

By Equation~(4) in~\cite{coutinho2023strong}, 
\[
\alpha^{\tilde{P}\setminus m+1}_{\ell+1} = \alpha^{\tilde{P}\setminus \{\ell, m+1\}}_{\ell+1} - \frac{\tilde{\lambda}_{\ell+1}^2}{\alpha^{\tilde{P}\setminus \ell+1}_{\ell}}.
\]

Now, note that,
\[
\alpha^{\tilde{P}\setminus \{\ell, m+1\}}_{\ell+1} 
= \frac{\phi^{\tilde{P}\setminus \{\ell, m+1\}}}{\phi^{\tilde{P}\setminus \{\ell, \ell+1, m+1\}}} 
= \frac{\phi^{P\setminus \{\ell-1, m\}}}{\phi^{P\setminus \{\ell-1, \ell, m\}}} 
= \alpha^{P\setminus \{\ell-1,m\}}_{\ell},
\]
where in the second equality we used that $\tilde{a}_s = a_{s-1}$ for $s \in \{\ell+1, \dots, m+1\}$ and $\tilde{\lambda}_s = \lambda_{s-1}$ for $s \in \{\ell+2, \dots, m+1\}$. Moreover,
\[
\frac{\tilde{\lambda}_{\ell+1}^2}{\alpha^{\tilde{P}\setminus \ell+1}_{\ell}}
= \tilde{\lambda}_{\ell+1}^2 \frac{\tilde{p}_{\ell}}{\tilde{p}_{\ell+1}} 
= \tilde{\lambda}_{\ell+1}^2 \frac{q_\ell}{q_{\ell+1}} = \sum_{s=1}^\ell \frac{\mu_s}{x - r_s} + \frac{1}{x-u} 
= \frac{\lambda_\ell^2}{\alpha^{P\setminus \ell}_{\ell-1}} + \frac{1}{x-u}.
\]

Therefore,
\[
\alpha^{\tilde{P}\setminus m+1}_{\ell+1} = \alpha^{P\setminus m}_\ell - \frac{1}{x-u}.
\] 
The proof that 
\[
\alpha^{\tilde{P}\setminus \ell+1}_{m+1} = \alpha^{P\setminus \ell}_m - \frac{1}{x-u}
\] 
is analogous.
\end{proof}

We are now ready for the proof of the main result of this section.

\begin{proof}[Proof of Theorem~\ref{thm:cospectral_characterization}] By Proposition~\ref{prop:restriction_cospectrality} we know that if $\ell$ and $m$ are cospectral in a $d$--chain, then $\ell<\dfrac{d}{2}<m$. For the other direction, consider $\ell<\dfrac{d}{2}<m\leq d$. Note that if we do a reflection on the $d$--chain, then we may deal with the pair $d-m<d/2<d-l$. Therefore, we can also assume without loss, by possibly doing a reflection, that $m+\ell\leq d$. In this case, we have $0< \frac{d-2\ell}{2}<m-\ell\leq d-2l$, and thus, by Proposition~\ref{prop:existence_cospectrality}, we know that there is a $(d-2\ell)$--chain with $0$ and $m-\ell$ cospectral. We conclude the result by using Proposition~\ref{prop:extension_cospectrality} with this chain for $k=\ell$.
\end{proof}


\section{Periodic cospectral vertices}\label{sec:periodic_cospectral_vertices}

Having fully characterized the positions of cospectral vertices in $d$--chains in the previous section, we turn to the study of \emph{periodic} cospectral vertices. We say vertex at $\ell$ is periodic if Equation~\eqref{eq:pst} holds with $\ell = m$, which, as before, is equivalent to having, without loss of generality,
\begin{align}
    \exp(\ii (2\pi) J) \ket \ell = \ket \ell. \label{eq:periodic}
\end{align}
It is straightforward to notice that if perfect state transfer happens between $\ell$ and $m$ at time $\tau$, then both are periodic at time $2\tau$, so in order to study vertices involved in state transfer, we may address the necessary condition of both vertices being cospectral and periodic.

We consider $d$--chains with $\ell=0$ and $\tfrac{d}{2}<m\leq d$ cospectral and periodic. By Lemma~\ref{lem:orthonormality}, this is equivalent to requiring $|p_m(\theta_s)| = C|p_0(\theta_s)|=C$ and $\theta_s \in \mathbb{Z}$ for every $s$.

For such a pair of periodic cospectral vertices, denote by $S_+$ and $S_-$ the sets of eigenvalues $\theta_s$ with $p_m(\theta_s) > 0$ and $p_m(\theta_s) < 0$, respectively, so that $S_+ \sqcup S_- = \{\theta_s\}_{s=0,\dots,d}$. Note that $p_m$ is uniquely determined by $S_+$ and $S_-$. Indeed, by Lagrange interpolation there exists a unique polynomial of degree at most $d$ that takes the value $1$ on $S_+$ and $-1$ on $S_-$, and the monic normalization of this polynomial is precisely $p_m$.

Since $S_+$ and $S_-$ uniquely determine $p_m$, if we fix the spectrum $\{\theta_s\}_{s=0,\dots,d}$, it suffices to study the admissible configurations of these two sets to determine the possible periodic cospectral vertices positions. The next result provides a restriction on such configurations. In particular, it shows that, in the ordering $\theta_d < \cdots < \theta_0$, there are never three consecutive elements all belonging to $S_+$ or $S_-$.

\begin{proposition}\label{prop:zeros_position} Let $\tau_{m-1}<\cdots <\tau_0$ and $\theta_d<\cdots <\theta_0$ be the zeros of $p_m$ and $p_{d+1}$, respectively. Then, for each $i$ in $\{0,\dots, m-2\}$, the interval $(\tau_{i+1}, \tau_i)$ contains either one or two zeros $\theta_j$ of $p_{d+1}$, which are in $S_+$ or $S_-$ depending on whether $i$ is odd or even, respectively. Moreover, the intervals $(-\infty, \tau_{m-1})$ and $(\tau_0, \infty)$ contain only $\theta_d$ and $\theta_0$, respectively. In particular, $\theta_0\in S_+$ and $\theta_d\in S_+$ if and only if $m$ is even.
\end{proposition}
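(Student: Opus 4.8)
The plan is to exploit the fact that $p_m$ and $p_{d+1}$ strongly interlace (this is forced: a pair of periodic cospectral vertices produces an OPS containing both polynomials by Lemma~\ref{lem:orthonormality} and Theorem~\ref{thm:construction_OPS}, so Corollary~\ref{cor:strongly_interlace} applies), together with the sign pattern of $p_m$ on its own zeros. Since $p_m$ is monic of degree $m$ with simple real zeros $\tau_{m-1}<\cdots<\tau_0$, its sign on the interval $(\tau_{i+1},\tau_i)$ is $(-1)^i$ (counting from the rightmost gap, which is positive since $p_m$ is eventually positive; more precisely $p_m>0$ on $(\tau_0,\infty)$, negative on $(\tau_1,\tau_0)$, and so on), and $p_m>0$ on $(\tau_0,\infty)$, with sign $(-1)^m$ on $(-\infty,\tau_{m-1})$. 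Because every $\theta_j$ satisfies $p_m(\theta_j)=\pm C\neq 0$, no $\theta_j$ coincides with a $\tau_i$, so each $\theta_j$ lies strictly inside one of these $m+1$ open intervals, and its membership in $S_+$ versus $S_-$ is dictated entirely by which interval it is in: $\theta_j\in S_+$ iff $\theta_j\in(\tau_{i+1},\tau_i)$ with $i$ even (interpreting the two unbounded gaps appropriately), which is exactly the claimed correspondence with the parity of $i$.

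Next I would count. There are $d+1$ eigenvalues $\theta_j$ distributed among the $m+1$ intervals $(-\infty,\tau_{m-1}),(\tau_{m-1},\tau_{m-2}),\dots,(\tau_1,\tau_0),(\tau_0,\infty)$. Strong interlacing gives two constraints: (i) all the $\tau_i$ lie in $(\theta_d,\theta_0)$, so the two unbounded gaps $(-\infty,\tau_{m-1})$ and $(\tau_0,\infty)$ each contain at least one $\theta_j$ — forcing $\theta_d$ to be the unique occupant of $(-\infty,\tau_{m-1})$ and $\theta_0$ the unique occupant of $(\tau_0,\infty)$ once we also bound the total; (ii) each of the $m-1$ bounded gaps $(\tau_{i+1},\tau_i)$ contains at least one $\theta_j$ (this is the ``root of $p_{d+1}$ in each closed interval of $p_m$'' clause of strong interlacing, upgraded to open intervals because the endpoints are not roots of $p_{d+1}$). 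So of the $d+1$ eigenvalues, at least $2$ are in the unbounded gaps and at least $m-1$ are in the bounded gaps, accounting for $\geq m+1$; the remaining $d+1-(m+1)=d-m$ ``free'' eigenvalues can only be placed in the $m-1$ bounded gaps (the unbounded gaps are already saturated at one each, since a second eigenvalue there would violate clause (i) of strong interlacing — there would be a $\tau$ outside $(\theta_d,\theta_0)$... wait, no; I need to argue the unbounded gaps hold exactly one). Here is the clean argument: if $(\tau_0,\infty)$ contained two eigenvalues $\theta_0>\theta_1$, then between them lies no $\tau_i$, but strong interlacing of $p_m$ with $p_{d+1}$ in the reverse direction — $p_{d+1}$ has degree exceeding $p_m$, so it is $p_m$ that must have a root strictly between any two consecutive roots... no. The right tool: since $d-m$ could be as large as... hmm, actually $d-m$ can exceed $m-1$. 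Let me reconsider — I would instead invoke that $p_m$ and $p_{d+1}$ strongly interlace means precisely that $p_m\mid$-adjacent structure forces each bounded gap of $p_m$ to have \emph{either one or two} roots of $p_{d+1}$: with $m-1$ bounded gaps holding $d+1-2 = d-1$ roots and each holding $\geq 1$, and the upper bound of $2$ per bounded gap coming from the interlacing in the other direction (two consecutive roots $\theta_j>\theta_{j+1}$ of $p_{d+1}$ in the same gap would need a root of... ).

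The actual mechanism for the ``at most two'' bound is: $p_m$ and $p_{d+1}$ strongly interlace, and moreover the zero set of $p_{d+1}$ was (in the construction, Proposition~\ref{prop:existence_cospectrality}) built from $\{\rho_0,\rho_{2m-1}\}\cup\{\rho_2,\dots,\rho_{2m-2}\}$ plus $d-m$ of the odd-indexed $\rho$'s — but here we are proving a general restriction, so I should not assume that construction. Instead: between three consecutive roots $\theta_{j-1}>\theta_j>\theta_{j+1}$ of $p_{d+1}$ all lying in one gap $(\tau_{i+1},\tau_i)$ of $p_m$, the polynomial $p_m$ has no root, yet $p_m$ takes the \emph{same} sign at all three; this is consistent, so sign alone does not forbid it. The genuine obstruction is strong interlacing applied as: in the closed interval $[\theta_{j+1},\theta_{j-1}]$ determined by two roots of $p_{d+1}$ that are \emph{not} consecutive among all roots — no. Cleanest: strong interlacing of the pair $(p_m, p_{d+1})$ is equivalent (Corollary~\ref{cor:strongly_interlace} and the discussion in Section~\ref{sec:necessary_conditions}) to $p_m\hat p_{d-m}$ interlacing $p_{d+1}$ for the complementary polynomial $\hat p_{d-m}$ of degree $d-m$ from the OPS; then $p_m\hat p_{d-m}$ has $d$ simple roots interlacing the $d+1$ roots of $p_{d+1}$ in the usual sense, so each gap $(\theta_{j+1},\theta_j)$ of $p_{d+1}$ contains exactly one root of $p_m\hat p_{d-m}$. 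Hence a gap $(\tau_{i+1},\tau_i)$ of $p_m$ containing $k$ roots of $p_{d+1}$ contains $k-1$ gaps of $p_{d+1}$, each needing a root of $\hat p_{d-m}$, so $\hat p_{d-m}$ has $\geq \sum_i (k_i - 1) = (d+1-2) - (m-1) - (\text{excess in unbounded gaps})$ roots in the bounded gaps of $p_m$; since $\deg \hat p_{d-m} = d-m$ this pins things down, and in particular each bounded gap of $p_m$ has at most $1 + (\text{something})$ — and since additionally $\hat p_{d-m}$ has no root outside $(\theta_d,\theta_0)$ and at most one per gap of $p_{d+1}$... The main obstacle is precisely nailing down this ``at most two per gap'' bound rigorously; I expect the slick route is: $k$ roots of $p_{d+1}$ in a gap of $p_m$ force $k-1$ roots of $\hat p_{d-m}$ there, \emph{and} the unbounded gaps of $p_m$ each contain exactly one root of $p_{d+1}$ (else $\hat p_{d-m}$ would need a root in an unbounded gap of $p_m$, but all roots of $\hat p_{d-m}$ interlace those of $p_{d+1}$, hence lie in $(\theta_d,\theta_0)\subsetneq$ the union of bounded region plus... — one checks the two extreme roots $\theta_0,\theta_d$ are the extreme roots of $p_{d+1}$ and $p_m$'s extreme roots are interior, so the outermost root of $\hat p_{d-m}$, lying in $(\theta_d,\theta_{d-1})$ or symmetric, lies to the... ). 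Granting exactly one eigenvalue in each unbounded gap, the bounded gaps hold $d-1$ eigenvalues in $m-1$ slots with each $\geq 1$; the surplus $d-1-(m-1)=d-m\leq m-1$ (since $m>d/2$) can be distributed at most one extra per slot, giving one or two per bounded gap. Finally, reading off: the rightmost gap $(\tau_0,\infty)$ has $i=0$-analogue sign $+$, so $\theta_0\in S_+$; the leftmost gap $(-\infty,\tau_{m-1})$ has sign $(-1)^m$, so $\theta_d\in S_+\iff m$ even; and in $(\tau_{i+1},\tau_i)$ the sign is $(-1)^i$, matching ``$S_+$ if $i$ even, $S_-$ if $i$ odd.''

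\textbf{Main obstacle.} The crux is the upper bound of two eigenvalues per bounded interval (equivalently, no three consecutive $\theta_j$ with constant sign of $p_m$). I would handle it via the auxiliary polynomial $\hat p_{d-m}$ of degree $d-m$ from the orthogonal polynomial sequence: $p_m\hat p_{d-m}$ has degree $d$ and interlaces $p_{d+1}$ in the ordinary sense, so between consecutive roots of $p_{d+1}$ sits exactly one root of $p_m\hat p_{d-m}$; a bounded gap of $p_m$ holding $k$ eigenvalues then forces $k-1$ roots of $\hat p_{d-m}$ inside it, and the degree constraint $\deg\hat p_{d-m}=d-m$ combined with $m>d/2$ (so $d-m<m$, fewer than the number of bounded gaps plus slack) yields $k\le 2$ everywhere. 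Verifying that the unbounded gaps of $p_m$ contain exactly one eigenvalue each — needed to make the count exact — follows because $\hat p_{d-m}$, interlacing $p_{d+1}$, has all its roots in $(\theta_d,\theta_0)$, while $\tau_{m-1},\tau_0\in(\theta_d,\theta_0)$ too, forcing $\theta_0$ and $\theta_d$ to be the lone eigenvalues beyond $\tau_0$ and below $\tau_{m-1}$ respectively.
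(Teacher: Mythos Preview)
Your sign analysis and the lower bound (at least one $\theta_j$ per bounded gap, via strong interlacing) are fine, modulo a parity slip: on $(\tau_1,\tau_0)$, which is $i=0$, the monic polynomial $p_m$ is \emph{negative}, so eigenvalues there lie in $S_-$; the statement is ``$S_+$ if $i$ odd, $S_-$ if $i$ even,'' opposite to what you wrote at the end.

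The real problem is the upper bound ``at most two $\theta_j$ per bounded gap'' (and ``exactly one in each unbounded gap''). Your $\hat p_{d-m}$ counting argument does not establish it. You correctly observe that $k$ eigenvalues in a gap of $p_m$ force $k-1$ roots of $\hat p_{d-m}$ there, but summing this over all gaps just recovers the identity $\sum (k_i-1)=d-m=\deg\hat p_{d-m}$, a global constraint that puts no local cap on any individual $k_i$. Your sentence ``the surplus $d-m\le m-1$ can be distributed at most one extra per slot'' is a non sequitur: having at most $m-1$ extra items to place in $m-1$ slots does not prevent putting three extras in one slot and none elsewhere. Concretely, if $m=4$ and $d=7$, strong interlacing permits a $4$--$1$--$1$ distribution of the six interior eigenvalues among the three bounded gaps of $p_4$, with $\hat p_3$ having all three roots in the crowded gap; nothing in your argument rules this out. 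The same objection defeats your unbounded-gap argument: $\hat p_{d-m}$ having all roots in $(\theta_d,\theta_0)$ is perfectly compatible with $\theta_1>\tau_0$.

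The point you are missing is that the ``at most two'' bound is \emph{not} a consequence of the OPS structure alone --- it genuinely requires the hypothesis $|p_m(\theta_s)|=C$. The paper's argument uses exactly this: if three eigenvalues $\theta_{j+1}<\theta_j<\theta_{j-1}$ lie in one gap $(\tau_{i+1},\tau_i)$, then $p_m$ takes the \emph{same value} $\pm C$ at all three, so Rolle's theorem produces two distinct roots $\rho_-<\rho_+$ of $p_m'$ in $(\theta_{j+1},\theta_{j-1})$. Since $p_m$ is real-rooted, $p_m'$ interlaces $p_m$, hence some $\tau$ lies in $[\rho_-,\rho_+]\subset(\tau_{i+1},\tau_i)$, a contradiction. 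The same Rolle argument (two eigenvalues in $(\tau_0,\infty)$ would force a root of $p_m'$ beyond $\tau_0$, impossible since the largest root of $p_m'$ lies in $(\tau_1,\tau_0)$) handles the unbounded gaps.
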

\begin{proof} Consider $i$ in $\{0, \dots, m-2\}$ and the corresponding interval $(\tau_{i+1}, \tau_i)$. Since $p_m$ is monic, its sign in this interval depends on the parity of $i$. Thus, any zero $\theta_j$ of $p_{d+1}$ in $(\tau_{i+1}, \tau_i)$ belongs to $S_+$ or $S_-$ depending on whether $i$ is odd or even, respectively.

Because $p_m$ and $p_{d+1}$ strongly interlace, there is at least one zero $\theta_j$ in $(\tau_{i+1}, \tau_i)$. We claim that there are at most two. Suppose, for contradiction, that $\theta_{j+1} < \theta_j < \theta_{j-1}$ are all contained in $(\tau_{i+1}, \tau_i)$. Then $p_m(\theta_{j+1}) = p_m(\theta_j) = p_m(\theta_{j-1}) = \pm C$, so by Rolle’s Theorem, there exist distinct zeros $\rho_-$ and $\rho_+$ of $p_m'$ in $(\theta_{j+1}, \theta_j)$ and $(\theta_j, \theta_{j-1})$, respectively. Since $p_m$ is real-rooted, $p_m$ and $p_m'$ interlace, there must be a zero $\tau$ of $p_m$ in $[\rho_-, \rho_+]$, and hence in $(\tau_{i+1}, \tau_i)$, which is a contradiction. This proves our claim.

A similar argument shows that the intervals $(-\infty, \tau_{m-1})$ and $(\tau_0, \infty)$ contain only $\theta_d$ and $\theta_0$, respectively.
\end{proof} 

By combining Proposition~\ref{thm:construction_cospectral} with Proposition~\ref{prop:zeros_position}, we can now show that the most challenging case for constructing periodic cospectral vertices occurs when the vertices are at positions $0$ and $\lceil \frac{d+1}{2} \rceil$.

\begin{theorem}\label{thm:cospectrality_ordering_cases} Let $\frac{d}{2}<m\leq d$ be such that there exists a $d$--chain with vertices $0$ and $m$ cospectral and periodic. Then, for every $m\leq d'\leq d$ there exists a $d'$--chain with vertices $0$ and $m$ cospectral and periodic.
\end{theorem}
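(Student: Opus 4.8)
The plan is to keep the polynomial $p_m$ fixed throughout and to construct the shorter chains by \emph{discarding} eigenvalues from the given $d$-chain rather than by adjoining new ones. Starting from the hypothesized $d$-chain in which $0$ and $m$ are cospectral and periodic, Lemma~\ref{lem:orthonormality} and Corollary~\ref{cor:strongly_interlace} supply a monic $p_m$ of degree $m$ with real zeros $\tau_{m-1} < \cdots < \tau_0$, together with a monic $p_{d+1}$ of degree $d+1$ whose zeros $\theta_d < \cdots < \theta_0$ are integers, such that the two polynomials strongly interlace and $|p_m(\theta_s)| = C$ for every $s$. The crucial structural input is Proposition~\ref{prop:zeros_position}: the two unbounded intervals $(-\infty,\tau_{m-1})$ and $(\tau_0,\infty)$ each contain exactly one eigenvalue, namely $\theta_d$ and $\theta_0$, while each of the $m-1$ bounded intervals $(\tau_{i+1},\tau_i)$ contains either one or two eigenvalues.

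A short count then pins down how much room there is to delete. Writing $k$ for the number of bounded intervals holding two eigenvalues, the total count of eigenvalues is $2 + (m-1) + k = d+1$, so $k = d-m$. Since $m \le d' \le d$, we have $d-d' \le d-m = k$, hence there are always at least $d-d'$ ``doubly occupied'' intervals available to thin out.

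Given $d'$ with $m \le d' \le d$, I would form a set $S'$ by deleting exactly one eigenvalue from each of $d-d'$ distinct doubly occupied intervals and retaining everything else. Then $|S'| = d'+1$, every element of $S'$ is an integer on which $|p_m| = C$, and I claim $p_m$ strongly interlaces $p_{d'+1} \defeq \prod_{\theta \in S'}(x-\theta)$: each bounded interval $(\tau_{i+1},\tau_i)$ still contains at least one point of $S'$, and the two extreme eigenvalues $\theta_d < \tau_{m-1}$ and $\theta_0 > \tau_0$ are retained, so all zeros of $p_m$ lie strictly between $\min S'$ and $\max S'$. As $\deg p_{d'+1} = d'+1 > m = \deg p_m$, this is precisely strong interlacing.

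Finally, because $d' \le d < 2m$ yields $\tfrac{d'}{2} < m \le d'$, Proposition~\ref{thm:construction_cospectral} applies to the pair $(p_m, p_{d'+1})$ and produces a $d'$-chain with spectrum $S'$ in which $0$ and $m$ are cospectral; the eigenvalues being integers makes both vertices periodic, as required. The point that must be checked with care is that the only freedom to remove an eigenvalue lies in the doubly occupied intervals — the forced singleton eigenvalues and the two extreme eigenvalues cannot be deleted without violating the interlacing hypotheses of Proposition~\ref{thm:construction_cospectral} — and that Proposition~\ref{prop:zeros_position} guarantees exactly $d-m$ such intervals, which is precisely enough to reach every value of $d'$ down to $m$.
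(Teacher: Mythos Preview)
Your proposal is correct and follows essentially the same approach as the paper's own proof: fix $p_m$, use Proposition~\ref{prop:zeros_position} to locate exactly $d-m$ doubly occupied intervals among the $\tau_i$, delete one eigenvalue from each of $d-d'$ of them, verify that strong interlacing with $p_m$ is preserved because the extreme eigenvalues and at least one eigenvalue per bounded interval survive, and then invoke Proposition~\ref{thm:construction_cospectral} to rebuild the shorter chain. The counting argument and the check that $\tfrac{d'}{2}<m\le d'$ are identical to what the paper does.
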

\begin{proof} By Proposition~\ref{thm:construction_cospectral}, let $p_m$ and $p_{d+1}$ be the monic polynomials associated with a $d$--chain where the vertices $0$ and $m$ are cospectral and periodic, and denote by $S\defeq S_+\sqcup S_-$ the set of zeros of $p_{d+1}$. Observe that $S$ consists of integers.

We claim that, for any integer $d'$ with $m\leq d'\leq d$, it is possible to remove $d-d'$ elements from $S$ to obtain a subset $S'\subseteq S$ such that the monic polynomial $q_{d'+1}$, whose zero set is $S'$, strongly interlaces $p_m$.

Let $\tau_{m-1}<\cdots <\tau_0$ be the zeros of $p_m$. By Proposition~\ref{prop:zeros_position}, there exists a subset $E\subseteq\{0,\dots,m-2\}$ consisting of the indices $i$ for which the interval $(\tau_{i+1},\tau_i)$ contains exactly two zeros of $p_{d+1}$. Note that $E$ has exactly $d-m$ elements. Thus we may select a subset $R\subseteq E$ of size $d-d'$. For each $i$ in $R$, we choose one $\theta_{k_i}$ of the two zeros of $p_{d+1}$ in $(\tau_{i+1},\tau_i)$, and define $S'\defeq S\setminus\{\theta_{k_i}\}_{i\in R}$. By construction, $S'$ has exactly $d'+1$ distinct elements. We now show that the monic polynomial $q_{d'+1}$, whose zero set is $S'$, strongly interlaces $p_m$. Indeed, Proposition~\ref{prop:zeros_position} guarantees that both $\theta_d$ and $\theta_0$ are in $S'$, so all zeros of $p_m$ lie in the open interval determined by the smallest and largest zeros of $q_{d'+1}$. Moreover, for each $i$ in $\{0, \dots, m-2\}$, the interval $(\tau_{i+1}, \tau_i)$ contains at least one zero of $q_{d'+1}$ by construction. Hence, $p_m$ and $q_{d'+1}$ strongly interlace.

Finally, by Proposition~\ref{thm:construction_cospectral}, there exists a $d'$--chain in which $0$ and $m$ are cospectral, corresponding to the polynomials $p_m$ and $q_{d'+1}$. Since the set $S'$ consists of integers, it follows that $0$ and $m$ are also periodic in this $d'$--chain.
\end{proof}

Note that, by Theorem~\ref{thm:cospectrality_ordering_cases}, the hardest case is constructing chains with periodic cospectral vertices at positions $0$ and $\lceil \frac{d+1}{2}\rceil$. One of our paper highlights is to show that constructing these chains is related to a classical problem in number theory.

\subsection{Prouhet--Tarry--Escott problem}\label{sec:PTE_problem}

We denote by $PTE_n$ the problem of finding two distinct multisets of integers $E=\{e_1, \dots, e_n\}$ and $F=\{f_1, \dots, f_n\}$ such that
\[
e_1^k+\cdots +e_n^k=f_1^k+\cdots +f_n^k
\]
for every $k$ in $\{1,\dots, n-1\}$. These are known as the ideal solutions to the Prouhet--Tarry--Escott problem, a classical problem with a history dating back over two centuries. Solutions for $PTE_n$ are known only for $n \leq 10$ and $n = 12$. For further information on this problem and its history, we recommend~\cite{borwein2002prouhet,caley2013prouhet,coppersmith2024ideal}.

As we will see later in Corollary~\ref{cor:PTE_interlacing}, the multisets $E$ and $F$ are disjoint and each element in $E$ or $F$ appears at most twice. We define the problem $PTE_n^0$ just the same as $PTE_n$ but with the additional requirement that no element is repeated in $E\cup F$, that is, $E$ and $F$ are disjoint sets. We also define $PTE_n^1$ as the variant of $PTE_n$ where at most one element in $E \cup F$ appears twice, and all others are distinct. Clearly, a solution to $PTE_n^0$ is also a solution to $PTE_n^1$, which in turn is a solution to $PTE_n$.

In this section we prove the following result.

\begin{theorem}\label{thm:PTE_equivalence}Let $d$ be a natural number. If $d$ is odd, then there exists a $d$--chain with periodic cospectral vertices at positions $0$ and $\left\lceil\frac{d+1}{2}\right\rceil$ if and only if there exists a solution to $PTE_{\left\lceil\frac{d+1}{2}\right\rceil}^0$. If $d$ is even, then such a $d$--chain exists if and only if there exists a solution to $PTE_{\left\lceil\frac{d+1}{2}\right\rceil}^1$.
\end{theorem}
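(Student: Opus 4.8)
I want to translate the existence of a $d$-chain with periodic cospectral vertices at $0$ and $n = \lceil (d+1)/2\rceil$ into an interpolation problem, and then recognize that interpolation problem as $PTE_n$. By Proposition~\ref{thm:construction_cospectral}, such a chain exists iff there is a monic polynomial $p_m$ of degree $m=n$ with real zeros, strongly interlacing a monic $p_{d+1}$ with integer zeros $\theta_d<\cdots<\theta_0$, such that $|p_n(\theta_s)|$ is constant (and the $\theta_s$ are integers, giving periodicity). Normalizing, I may assume $p_n(\theta_s) = \pm 1$. Writing $p_+(x) = \prod_{\theta_s\in S_+}(x-\theta_s)$ and $p_-(x)=\prod_{\theta_s\in S_-}(x-\theta_s)$, the condition $p_n - 1 = 0$ on $S_+$ means $p_+ \mid p_n - 1$, and $p_- \mid p_n + 1$; since $\deg p_+ + \deg p_- = d+1 > \deg(p_n-1) = n$ unless there is cancellation, counting degrees forces $p_n - 1 = p_+ \cdot c_+$ and $p_n + 1 = p_- \cdot c_-$ with $\deg c_+ = n - |S_+|$, $\deg c_- = n - |S_-|$, and $|S_+| + |S_-| = d+1$. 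Subtracting, $p_+ c_+ - p_- c_- = -2$, so $p_+ c_+ - p_- c_-$ is a nonzero constant. The key degree bookkeeping: when $d$ is odd, $d+1 = 2n$ and $m = n$, forcing $|S_+| = |S_-| = n$, so $c_\pm$ are constants, $p_n - 1 = p_+$ (both monic degree $n$) and $p_n + 1 = p_-$; hence $p_- - p_+ = 2$, i.e. $p_+$ and $p_-$ agree in every coefficient except the constant term. When $d$ is even, $d+1 = 2n-1$, so $\{|S_+|,|S_-|\} = \{n, n-1\}$, say $|S_+| = n$, $|S_-| = n-1$; then $p_n - 1 = p_+$ and $p_n + 1 = p_-\cdot(x - u)$ for some $u$, and again subtracting gives $p_+ - p_-(x-u)$ is the constant $-2$.

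**From polynomials to PTE.** The condition "$p_+$ and $p_-$ are monic of the same degree $n$ and differ only in the constant term" is, by Newton's identities / the standard equivalence between power sums and elementary symmetric functions, exactly the statement that the multiset of roots $E$ of $p_+$ and the multiset of roots $F$ of $p_-$ satisfy $\sum e_i^k = \sum f_i^k$ for $k = 1,\dots,n-1$ — that is, $(E,F)$ is an ideal solution to $PTE_n$. Strong interlacing of $p_n$ with $p_{d+1}$ (whose roots are $S_+\sqcup S_- = E \sqcup F$) forces these roots to be distinct in the odd case — $E$ and $F$ are disjoint sets — giving a solution to $PTE_n^0$; in the even case the extra root $u$ of $p_-(x-u)$ is the only element that may coincide with one already present (a root of $p_+$), giving $PTE_n^1$. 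This is essentially the content foreshadowed by Corollary~\ref{cor:PTE_interlacing} about multiplicities being at most two. Conversely, given a solution $(E,F)$ to $PTE_n^0$ (resp. $PTE_n^1$), I set $p_+ = \prod(x-e_i)$, $p_- = \prod(x-f_i)$ (resp. $p_- = \prod(x-f_i)$ with one repeated root factored as $(x-u)\cdot(\text{rest})$), verify via Newton's identities that $p_- - p_+$ is a nonzero constant which I can normalize to $2$ by translating all integers (the PTE property is translation-invariant, and scaling handles the leading normalization), define $p_n := p_+ + 1$, and check $p_n(\theta_s) = \pm 1$ on $\theta_s \in E \sqcup F$. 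Then I need $p_n$ to strongly interlace $p_{d+1} = $ (monic poly with roots $E\sqcup F$): this requires showing the roots of $p_n$ are real and interlace appropriately, which follows from the sign pattern — $p_n$ alternates between $+1$ and $-1$ as $\theta_s$ runs through the ordered common roots (since consecutive roots of $p_{d+1}$ can't both lie strictly between consecutive roots of the would-be $p_n$ by the same Rolle argument as Proposition~\ref{prop:zeros_position}, once I know such $p_n$ exists with the right degree) — together with Proposition~\ref{prop:zeros_position}'s structural constraints, or more directly by invoking Proposition~\ref{thm:construction_cospectral} in the reverse direction once strong interlacing is established.

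**The main obstacle.** The delicate point is the reverse direction: given the PTE solution, I get $p_n$ of degree $n$ with the correct values on $d+1$ points, but I must certify that $p_n$ is \emph{real-rooted} and \emph{strongly interlaces} $p_{d+1}$, since Proposition~\ref{thm:construction_cospectral} demands strong interlacing, not merely the constant-modulus condition. Real-rootedness should come from the fact that $p_n = p_+ + 1$ takes the value $-1$ at each of the $\geq n-1$ (or $n$) roots of $p_+$... wait, that is not immediate since $p_+ = 0$ there gives $p_n = 1$, not $-1$; rather $p_n + 1 = p_- (x-u)$ vanishes at the roots of $p_-$, and $p_n - 1 = p_+$ vanishes at roots of $p_+$, so $p_n$ hits $+1$ at $n$ points ($E$) and $-1$ at $n$ or $n-1$ points ($F$), and between a $+1$-point and an adjacent $-1$-point (in the combined ordering) $p_n$ must cross every intermediate value, yielding a root of $p_n$ in each such gap — counting gaps gives exactly $n$ roots, all real, and interlacing with $E \sqcup F$ follows. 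I'd make this rigorous by ordering $E \sqcup F$ and using that the values $\pm 1$ are not always equal on consecutive elements (if three consecutive combined roots had the same sign, a Rolle-type argument as in Proposition~\ref{prop:zeros_position} would contradict $\deg p_n = n$); ruling out such triples is where I expect to spend the most care, and it is exactly the place where the distinction between $PTE_n^0$, $PTE_n^1$ and plain $PTE_n$ — i.e. the allowed multiplicities — enters and must be tracked against the parity of $d$.
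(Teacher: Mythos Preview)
Your approach is the paper's: reduce via Proposition~\ref{thm:construction_cospectral} to finding a monic $p_n$ with $|p_n|\equiv C$ on an integer spectrum, factor $p_n\mp C$ through $p_{S_\pm}$, and recognise the resulting identity $p_{S_+}-p_{S_-}=\mathrm{const}$ as Theorem~\ref{thm:PTE_polynomials}. Your forward direction is in fact cleaner than the paper's in one spot: you obtain $|S_+|=|S_-|=n$ (odd case) directly from the divisibilities $p_{S_\pm}\mid p_n\mp C$ and $\deg(p_n\mp C)=n$, whereas the paper grinds through the root-counting of Proposition~\ref{prop:zeros_position}. One omission: in the even case you must check that the extra root $u$ is an integer (compare the degree-$(n-1)$ coefficients of $p_{S_+}$ and $(x-u)p_{S_-}$, as the paper does).

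Two problems remain. First, your normalization ``$p_-{-}p_+=2$ by translating and scaling'' is wrong: translation leaves the constant $p_E-p_F$ unchanged, and integer scaling multiplies it by $k^n$, so you cannot force it to be $2$. This is harmless, since the argument only needs the constant to be nonzero; just carry $C$ through as the paper does with $p_n=\tfrac12(p_E+p_F)$. Second, and more substantively, your reverse direction leaves real-rootedness and strong interlacing of $p_n$ with $p_{d+1}$ as a sketch. Invoking Proposition~\ref{prop:zeros_position} here is circular, since that proposition already assumes an orthogonal polynomial sequence. The paper closes this gap by appealing to Corollary~\ref{cor:PTE_interlacing} (the PTE interlacing theorem): it gives the exact alternating pattern of $E$ and $F$ in the combined ordering, from which the $n$ sign changes of $p_n$ on $E\sqcup F$, hence its $n$ real roots and the required strong interlacing, follow by the intermediate value theorem. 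Your Rolle idea can be made to work, but it amounts to reproving Corollary~\ref{cor:PTE_interlacing}: since $p_E=p_n+C$ and $p_F=p_n-C$ are each real-rooted of degree $n$, the $n-1$ critical points of $p_n$ must strictly interlace both $E$ and $F$, forcing exactly one $e_k$ and one $f_k$ between consecutive critical points.
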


By combining Theorems~\ref{thm:cospectrality_ordering_cases} and~\ref{thm:PTE_equivalence}, we obtain the following result, which shows that determining all possible positions of pairs of periodic cospectral vertices $\ell = 0$ and $\frac{d}{2}<m\leq d$ in $d$--chains is nearly equivalent to solving the Prouhet--Tarry--Escott problem.

\begin{corollary}\label{cor:PTE_almost_equivalent} Determining all possible positions of pairs of periodic cospectral vertices $0$ and $\frac{d}{2}<m\leq d$ in $d$--chains requires solving the problem $PTE_n^0$ for every natural number $n$. Conversely, a positive solution to $PTE_n^0$ for each $n$ yields a complete description of such positions.
\end{corollary}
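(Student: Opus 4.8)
The plan is to derive Corollary~\ref{cor:PTE_almost_equivalent} by stitching together Theorem~\ref{thm:cospectrality_ordering_cases} and Theorem~\ref{thm:PTE_equivalence}. First I would establish the forward direction: suppose we have a complete description of which positions $0$ and $m$ with $\frac{d}{2}<m\leq d$ can host periodic cospectral vertices in $d$--chains. Fix an arbitrary $n$. Taking $d=2n-1$ (odd), we have $\lceil\frac{d+1}{2}\rceil = n$, so the existence of a $(2n-1)$--chain with periodic cospectral vertices at $0$ and $n$ is, by Theorem~\ref{thm:PTE_equivalence}, equivalent to the existence of a solution to $PTE_n^0$. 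Hence a full classification of positions in particular decides, for each $n$, whether $PTE_n^0$ is solvable; in fact, by Theorem~\ref{thm:cospectrality_ordering_cases} the position $(0,n)$ is realizable in \emph{some} $d$--chain with $d\geq 2n-2$ if and only if it is realizable in the $(2n-2)$--chain or $(2n-1)$--chain, and the minimal such case is governed by $PTE_n^1$ or $PTE_n^0$. Thus knowing all positions forces knowledge of solvability of $PTE_n^0$ for every $n$.

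Next I would argue the converse: assume $PTE_n^0$ has a solution for every natural number $n$. Given a target pair $0$ and $m$ with $\frac{d}{2}<m\leq d$, I want to decide whether a $d$--chain with periodic cospectral vertices there exists. By Theorem~\ref{thm:cospectrality_ordering_cases}, such a $d$--chain exists if and only if the smallest admissible chain --- the one with $d' = m$, i.e. a $2m$-ish minimal length, more precisely $d'$ ranging down to $m$ --- admits such vertices; combined with the reflection symmetry used in Section~\ref{sec:cospectral_vertices_positions}, realizability of $(0,m)$ reduces to realizability in the chain where $\lceil\frac{d'+1}{2}\rceil = m$. That borderline chain is exactly the one addressed by Theorem~\ref{thm:PTE_equivalence}, so its existence is equivalent to solvability of $PTE_m^0$ (when $d'$ is odd) or $PTE_m^1$ (when $d'$ is even); since a $PTE_n^0$ solution is automatically a $PTE_n^1$ solution, having $PTE_m^0$ for all $m$ settles every case. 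Running over all $d$ and all valid $m$ then produces the complete list of positions.

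The step I expect to be the main obstacle is the precise bookkeeping in the converse direction: Theorem~\ref{thm:cospectrality_ordering_cases} says that if $(0,m)$ is realizable in a $d$--chain then it is realizable in every $d'$--chain with $m\leq d'\leq d$, which pushes us \emph{downward} toward the hard case $d' = 2m-1$ or $2m-2$; I need to make sure the logic runs in the direction ``realizable somewhere $\Rightarrow$ realizable in the minimal hard case,'' and conversely ``solvable PTE $\Rightarrow$ realizable in the hard case $\Rightarrow$ (by extending, Proposition~\ref{prop:extension_cospectrality}) realizable in all larger chains.'' Pinning down whether the minimal hard instance for a given $m$ is an odd or even $d$, and hence whether it is $PTE_m^0$ or $PTE_m^1$ that is needed, is the delicate point; once that is clear, the statement follows, using only Theorems~\ref{thm:cospectrality_ordering_cases} and~\ref{thm:PTE_equivalence} together with the observation that $PTE_n^0 \subseteq PTE_n^1 \subseteq PTE_n$.
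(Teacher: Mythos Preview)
Your forward direction is correct and is exactly the paper's: specializing to odd $d = 2n-1$ gives $\lceil\tfrac{d+1}{2}\rceil = n$, and Theorem~\ref{thm:PTE_equivalence} then makes the existence question for the pair $(0,n)$ in a $(2n-1)$--chain equivalent to solvability of $PTE_n^0$.

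In the converse you have the right two ingredients but the orientation is reversed. Note that $\tfrac{d}{2} < m \leq d$ forces $m \leq d \leq 2m-1$, so the ``borderline'' chain with $\lceil\tfrac{D+1}{2}\rceil = m$, namely $D = 2m-1$ (or $D = 2m-2$), is the \emph{largest} admissible length, not the smallest; the smallest, $d' = m$, is the easy end-to-end case. The paper's argument is accordingly: a solution to $PTE_m^0$ (hence also to $PTE_m^1$) yields, via Theorem~\ref{thm:PTE_equivalence}, a $(2m-1)$--chain with $(0,m)$ periodic and cospectral; then Theorem~\ref{thm:cospectrality_ordering_cases} pushes this \emph{down} to every $d'$ with $m \leq d' \leq 2m-1$, in particular to the target $d$. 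No extension to larger chains is ever needed. Proposition~\ref{prop:extension_cospectrality} is in any case the wrong tool here: it shifts both endpoints (so $\ell = 0$ would become $\ell = 1$), and since the new parameter $u$ in its proof is an arbitrary real number, the construction does not preserve integrality of the spectrum and hence not periodicity. Once you straighten out this direction, your proof coincides with the paper's.
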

\begin{proof} By Theorem~\ref{thm:PTE_equivalence}, determining whether there exists a $d$--chain with $d$ odd such that $0$ and $\lceil\frac{d+1}{2}\rceil$ are periodic and cospectral requires solving $PTE_{\lceil\frac{d+1}{2}\rceil}^0$. Therefore, by varying $d$ over all odd natural numbers, we must solve $PTE_n^0$ for every natural number $n$. This proves one direction of the statement.

Conversely, assume that there is a positive solution to $PTE_n^0$ for every natural number $n$. In particular, we also obtain a positive solution to $PTE_n^1$ for every $n$. By Theorem~\ref{thm:PTE_equivalence}, for every natural number $d$ we can then obtain a $d$--chain with $0$ and $\lceil\frac{d+1}{2}\rceil$ periodic and cospectral. But then, by Theorem~~\ref{thm:cospectrality_ordering_cases}, for every $d$ and $\frac{d}{2}<m\leq d$ we can construct a $d$--chain with vertices $0$ and $m$ periodic and cospectral.
\end{proof}

Moreover, Theorems~\ref{thm:cospectrality_ordering_cases} and~\ref{thm:PTE_equivalence}, together with the results of Section~\ref{sec:construction_OPS}, provide an algorithm to convert solutions of $PTE_n^0$ or $PTE_n^1$ into chains with pairs of periodic cospectral vertices. Note that further number theoretic requirements are needed in order to guarantee these pairs admit perfect state transfer between them.

To prove Theorem~\ref{thm:PTE_equivalence} we use the following reformulation of $PTE_n$ in terms of polynomials. For a multiset $E=\{e_1,\dots, e_n\}$, let $p_E$ be the polynomial $p_E(x)\defeq \prod_{j=1}^n (x-e_j)$.

\begin{theorem}[Theorem 1 in~\cite{borwein2002prouhet}]\label{thm:PTE_polynomials} Let $E$ and $F$ be two distinct multisets of integers with $n$ elements. Then, $E$ and $F$ form a solution to $PTE_n$ if, and only if, $p_E-p_F$ is a constant.
\end{theorem}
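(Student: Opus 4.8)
The plan is to prove Theorem~\ref{thm:PTE_polynomials} in both directions using the Newton--Girard identities, which link the power sums $\sum e_j^k$ to the elementary symmetric functions of the $e_j$, i.e. to the coefficients of $p_E$. Write $p_E(x) = x^n - \sigma_1(E)x^{n-1} + \sigma_2(E)x^{n-2} - \cdots$ and similarly for $p_F$, where $\sigma_i$ denotes the $i$-th elementary symmetric polynomial in the listed elements. The key fact is that, for each $k$, the $k$-th power sum $s_k(E) := \sum_{j=1}^n e_j^k$ is determined by $\sigma_1(E), \dots, \sigma_k(E)$ via the recursion $s_k = \sigma_1 s_{k-1} - \sigma_2 s_{k-2} + \cdots + (-1)^{k-1} k\,\sigma_k$, and conversely $\sigma_k$ is determined by $s_1, \dots, s_k$. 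Hence, for $1 \le k \le n-1$, the identities $s_k(E) = s_k(F)$ for all such $k$ hold if and only if $\sigma_k(E) = \sigma_k(F)$ for all $k \in \{1, \dots, n-1\}$.

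First I would record the elementary observation that $p_E - p_F$ is always a polynomial of degree at most $n-1$, since both are monic of degree $n$; in fact its coefficient of $x^{n-i}$ is $\pm(\sigma_i(E) - \sigma_i(F))$ for $i = 1, \dots, n$. Therefore $p_E - p_F$ is a constant if and only if $\sigma_i(E) = \sigma_i(F)$ for every $i \in \{1, \dots, n-1\}$ (the degree-$0$ coefficient, $\pm(\sigma_n(E) - \sigma_n(F))$, is exactly the freedom allowed by ``constant''). Combining this with the Newton--Girard equivalence from the previous paragraph gives: $p_E - p_F$ is constant $\iff$ $s_k(E) = s_k(F)$ for all $k \in \{1,\dots,n-1\}$ $\iff$ $E, F$ solve $PTE_n$. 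The only remaining subtlety is the ``distinct multisets'' hypothesis: if $E = F$ then $p_E - p_F = 0$, which is constant, so one must note that the theorem's equivalence is between being a (possibly trivial) solution and $p_E - p_F$ being constant; the hypothesis that $E \ne F$ simply rules out the degenerate solution and forces the constant to be nonzero (since if $p_E = p_F$ as polynomials then $E = F$ as multisets).

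The main obstacle, such as it is, is purely expository: one must state the Newton--Girard identities carefully enough to justify the bidirectional passage between power sums $\{s_k\}_{k=1}^{n-1}$ and elementary symmetric functions $\{\sigma_k\}_{k=1}^{n-1}$. The forward direction of the recursion expresses $s_k$ as an integer polynomial in $\sigma_1, \dots, \sigma_k$; the inverse direction expresses $k\,\sigma_k$ as an integer polynomial in $s_1, \dots, s_k$, so over $\mathbb{Q}$ (or $\mathbb{R}$, which is all we need here) the two families generate each other. A clean way to package this is: by induction on $k$, the statement ``$s_j(E) = s_j(F)$ for all $j \le k$'' is equivalent to ``$\sigma_j(E) = \sigma_j(F)$ for all $j \le k$,'' the inductive step in each direction being immediate from the Newton identity relating $s_k$, $\sigma_k$, and the lower-index quantities that are already equal by hypothesis. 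Since $PTE_n$ concerns exactly the range $k \in \{1,\dots,n-1\}$ and ``$p_E - p_F$ constant'' concerns exactly the coefficients of $x^{n-1}, \dots, x^1$, the two conditions match up term by term, completing the proof.

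Although this theorem is quoted from~\cite{borwein2002prouhet}, for self-containedness the argument above can be reproduced in a couple of lines; I would cite the reference and, if desired, include the short Newton--Girard computation as a remark rather than a full proof, since the paper uses Theorem~\ref{thm:PTE_polynomials} only as an input.
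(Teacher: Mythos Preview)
Your proof is correct, and in fact the paper does not supply its own proof of this statement at all: it is quoted directly as Theorem~1 from~\cite{borwein2002prouhet} and used as a black box. Your Newton--Girard argument is exactly the standard one (and is essentially what appears in the cited reference), so there is nothing to compare---you have simply filled in what the paper chose to cite rather than reprove.
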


The following result is a direct consequence of Theorem~\ref{thm:PTE_polynomials} and is referred to as the interlacing theorem for the Prouhet--Tarry--Escott problem (see~\cite[Thm. 2.1.3]{caley2013prouhet}, or Exercise E2 in~\cite{borwein2002prouhet}). In our setting, this result is an analogue of Proposition~\ref{prop:zeros_position}.

\begin{corollary}[Thm. 2.1.3 in~\cite{caley2013prouhet}]\label{cor:PTE_interlacing} If $E=\{e_1,\dots, e_n\}$ and $F=\{f_1,\dots, f_n\}$ are multisets that form a solution to $PTE_n$ with $e_1\leq e_2\leq \cdots \leq e_n$ and $f_1\leq f_2\leq \cdots \leq f_n$, then 
\[
e_1<f_1\leq f_2<e_2\leq e_3< \cdots <e_{n-1}\leq e_n<f_n,\text{if $n$ is odd,}
\]
\[
e_1<f_1\leq f_2<e_2\leq e_3< \cdots <e_{n-2}\leq e_{n-1}<f_{n-1}\leq f_n<e_n,\text{if $n$ is even,}
\]
\noindent where we assume without loss of generality that $e_1<f_1$.    
\end{corollary}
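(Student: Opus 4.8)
The plan is to derive everything from Theorem~\ref{thm:PTE_polynomials}. Since $E\neq F$, that theorem gives $p_E-p_F=c$ for a nonzero constant $c$, where $p_E,p_F$ are monic of degree $n$ and, because $E,F\subseteq\mathbb{Z}$, real-rooted. Normalize so that $e_1:=\min(E\cup F)\in E$, i.e. $e_1<f_1$ (this is the stated WLOG). Two consequences are immediate: evaluating the identity at any $f_j$ gives $p_E(f_j)=c\neq 0$, and at any $e_i$ gives $p_F(e_i)=-c\neq 0$; hence no $f_j$ is a root of $p_E$ and no $e_i$ a root of $p_F$, so $E\cap F=\varnothing$ as sets, which already supplies every strict inequality that occurs at an $e$--$f$ junction in the claim. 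Moreover $p_F(e_1)=\prod_k(e_1-f_k)$ has all factors negative, so $\operatorname{sign}(c)=(-1)^{n+1}$; thus the level $-c$ lies below the axis when $n$ is odd and above it when $n$ is even.

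The core local estimate is the exact analogue of Proposition~\ref{prop:zeros_position}, obtained by viewing the roots of $p_E$ as the solutions of $p_F(x)=-c$. I claim that between two consecutive distinct roots of $p_F$ there are at most two roots of $p_E$, counted with multiplicity. Indeed, three roots of $p_E$ in such a gap would force $p_F$ to take the common value $-c$ at three points, so by Rolle $p_F'=(p_F+c)'$ would have two roots strictly inside the gap; but $p_F$ is real-rooted, so $p_F'$ has exactly one (simple) root strictly between consecutive distinct roots of $p_F$, a contradiction. Roots of multiplicity three are excluded as well: at any critical point $z\notin F$ of $p_F$ the identity $(p_F'/p_F)'=-\sum_k(x-f_k)^{-2}$ yields $p_F''(z)/p_F(z)=-\sum_k(z-f_k)^{-2}<0$, so $p_F''(z)\neq 0$ and $p_F$ has no off-root critical point of higher order; symmetrically for $p_F$. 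This gives the ``each value appears at most twice'' content, i.e. the $\leq$ within the paired blocks, and the same identity labels every off-root critical point of $p_F$ as a local maximum or minimum according to the sign of $p_F$ there.

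To locate the roots exactly, I count solutions of $p_F(x)=-c$ globally. Outside $[\min F,\max F]$ the polynomial $p_F$ has no critical points, hence is strictly monotone, so each unbounded tail contributes at most one root of $p_E$; comparing the sign of $-c$ with the one-sided limits of $p_F$ shows the left tail always contributes exactly one root (the global minimum $e_1<f_1$), the right tail contributes none when $n$ is odd and exactly one (the global maximum $e_n>f_n$) when $n$ is even. Since $p_E$ is real-rooted it has exactly $n$ roots with multiplicity; subtracting the tail contributions and invoking the at-most-two bound per interior gap forces each interior local extremum on the $-c$ side of the axis to be crossed exactly twice and each extremum on the opposite side not at all. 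Because the local extrema of $p_F$ alternate between the two sides of the axis (the critical values of a real-rooted polynomial alternate in sign), the gaps receiving two roots of $p_E$ alternate with those receiving none, and the roots assemble into precisely the advertised blocks: a lone $e_1$, then pairs $f_{2i-1}\leq f_{2i}$ separated by pairs $e_{2i}\leq e_{2i+1}$, ending in a lone $f_n$ when $n$ is odd and in a lone $e_n$ after the block $f_{n-1}\leq f_n$ when $n$ is even. The strict inequalities between blocks are exactly the disjointness $E\cap F=\varnothing$, and the equalities within blocks record the tangency (double-root) cases.

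The hard part will be making the global tally of the third paragraph fully rigorous when $p_F$ itself has repeated roots: there the interior extrema degenerate onto the axis, the sign of $p_F$ no longer flips across an even-multiplicity root, and the naive ``one critical point per gap'' bookkeeping must be restated in terms of distinct roots and their multiplicities. The log-derivative identity of the second paragraph is the tool that tames this, since it pins the sign of $p_F''$ at every off-root critical point and caps all multiplicities at two; what remains is the purely combinatorial accounting of how the distinct roots of $p_F$ together with the two tails partition the $n$ roots of $p_E$, which I expect to be routine once the degenerate extrema are correctly interpreted.
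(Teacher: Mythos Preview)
The paper does not prove this corollary; it cites the literature and says only that the result is a direct consequence of Theorem~\ref{thm:PTE_polynomials} and an analogue of Proposition~\ref{prop:zeros_position}. Your outline follows precisely that route, so in approach you match what the paper indicates.

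Your first two paragraphs are correct and already contain the essential ingredients: disjointness of $E$ and $F$, the sign of $c$, the at-most-two bound per gap via Rolle, and the multiplicity cap from the logarithmic-derivative identity. The third paragraph is correct when $p_F$ has simple roots, and you rightly flag the repeated-root case as the outstanding issue: across a double root of $p_F$ the sign of $p_F$ does not flip, so the claim that ``the critical values of a real-rooted polynomial alternate in sign'' fails there, and your alternating-gap count does not go through as written.

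The accounting you anticipate is doable but less routine than you suggest; a cleaner way to finish avoids the gap count entirely. Since $p_E'=p_F'$, your multiplicity-at-most-two lemma forces this common derivative to have \emph{simple} roots $z_1<\cdots<z_{n-1}$, and standard Rolle interlacing then puts both $e_j$ and $f_j$ in $[z_{j-1},z_j]$. On that interval $p_E$ and $p_F$ are strictly monotone in the same direction, so comparing $p_E(e_j)=0$ with $p_E(f_j)=c$ determines the order of $e_j$ and $f_j$; this order flips with the monotonicity as $j\mapsto j+1$, and the endpoint cases (where $e_j$ or $f_j$ equals some $z_k$, i.e.\ a double root occurs) are settled by $E\cap F=\varnothing$ and seen to flip as well. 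Starting from $e_1<f_1$ this gives $e_j<f_j$ exactly for $j$ odd, and the stated chain of inequalities follows immediately from sortedness.
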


We are now ready for the proof of Theorem~\ref{thm:PTE_equivalence}.

\begin{proof}[Proof of Theorem~\ref{thm:PTE_equivalence}] \

$(\Longrightarrow)$

Assume that $d \geq 1$, the case $d = 1$ is trivial. Consider a $d$--chain in which the vertices $0$ and $\left\lceil\frac{d+1}{2}\right\rceil$ are cospectral and periodic, and the corresponding sets $S_+$ and $S_-$. Note that $S_+$ and $S_-$ are disjoint sets of integers. 

We first claim that both $|S_+|$ and $|S_-|$ equal $\left\lceil\frac{d+1}{2}\right\rceil$ when $d$ is odd, and that $\{|S_+|, |S_-|\} = \left\{\left\lceil\frac{d+1}{2}\right\rceil, \left\lfloor\frac{d+1}{2}\right\rfloor\right\}$ when $d$ is even. 

\begin{itemize}
    \item[] Since $|S_+|+|S_-|=d+1$, it suffices to show that both $|S_+|$ and $|S_-|$ are at most $\left\lceil\frac{d+1}{2}\right\rceil$ to establish the claim. Let $\tau_{\lceil \frac{d-1}{2} \rceil} < \cdots < \tau_0$ be the zeros of $p_{\lceil \frac{d+1}{2} \rceil}$. By Proposition~\ref{prop:zeros_position}, all the zeros $\theta_{d-1} < \cdots < \theta_1$ lie in the interval $(\tau_{\lceil \frac{d-1}{2} \rceil}, \tau_0)$, and those that are in $S_+$ lie in the intervals $(\tau_{i+1}, \tau_i)$ for odd indices $i$ in the set $\{0, 1, \dots, \lceil \frac{d+1}{2} \rceil - 2\}$. Again, by Proposition~\ref{prop:zeros_position}, each such interval contains at most two elements from $S_+$. There are exactly $\lfloor \frac{1}{2} (\lceil \frac{d+1}{2} \rceil - 2) \rfloor$ such odd indices $i$, so the number of elements in $S_+ \cap (\tau_{\lceil \frac{d-1}{2} \rceil}, \tau_0)$ is at most $2 \lfloor \frac{1}{2} (\lceil \frac{d+1}{2} \rceil - 2) \rfloor = 2 \lfloor \frac{d}{4} \rfloor$. Note that, by Proposition~\ref{prop:zeros_position}, $\theta_0 \in S_+$, and $\theta_d \in S_+$ if and only if $\lceil \frac{d+1}{2} \rceil$ is even, which occurs precisely when $d \equiv 2$ or $3 \pmod{4}$. Thus, $|S_+|$ is at most $2 \lfloor \frac{d}{4} \rfloor + 1$ if $d \equiv 0$ or $1 \pmod{4}$, and at most $2 \lfloor \frac{d}{4} \rfloor + 2$ if $d \equiv 2$ or $3 \pmod{4}$. In any case, we conclude that $|S_+| \leq \lceil \frac{d+1}{2} \rceil$. Moreover, by Proposition~\ref{prop:zeros_position}, among the zeros $\theta_{d-1} < \cdots < \theta_1$, at most $\lceil \frac{d-1}{2} \rceil$ are in $S_-$. Since $\theta_0 \notin S_-$ and possibly $\theta_d\in S_-$, we obtain that $|S_-| \leq \lceil \frac{d-1}{2} \rceil + 1 = \lceil \frac{d+1}{2} \rceil$. This proves our claim.
\end{itemize}

Now consider the polynomials $p_{S_+}$ and $p_{S_-}$, and note that $p_{d+1}=p_{S_+}p_{S_-}$. From our earlier discussion, we know that if $d$ is odd, then both $p_{S_+}$ and $p_{S_-}$ have degree $\lceil\frac{d+1}{2}\rceil$, while if $d$ is even, their degrees are $\lceil\frac{d+1}{2} \rceil$ and $\lfloor\frac{d+1}{2} \rfloor$ in some order. Since $|p_{\lceil\frac{d+1}{2}\rceil}|$ is constant on the zeros of $p_{d+1}$, there exists a constant $C$ such that $p_{S_+}$ divides $p_{\lceil\frac{d+1}{2}\rceil}-C$ and $p_{S_-}$ divides $p_{\lceil\frac{d+1}{2}\rceil}+C$. We now analyze the two cases: $d$ odd and $d$ even.

Assume that $d$ is odd. Since $p_{S_+}$, $p_{S_-}$, and $p_{\lceil\frac{d+1}{2}\rceil} \pm C$ are all monic of the same degree $\lceil\frac{d+1}{2}\rceil$, it follows that $p_{S_+} = p_{\lceil\frac{d+1}{2}\rceil} - C$ and $p_{S_-} = p_{\lceil\frac{d+1}{2}\rceil} + C$. Therefore, we have $p_{S_+}-p_{S_-} = -2C$. Hence by Theorem~\ref{thm:PTE_polynomials}, since $S_+$ and $S_-$ are sets of integers, they form a solution to $PTE_{\lceil\frac{d+1}{2}\rceil}$. Furthermore, because $S_+$ and $S_-$ are disjoint sets, this gives a solution to $PTE_{\lceil\frac{d+1}{2}\rceil}^0$.

Now, assume that $d$ is even. Without loss of generality, suppose that $|S_+|=\left\lceil\frac{d+1}{2}\right\rceil$ and $|S_-|=\left\lfloor\frac{d+1}{2}\right\rfloor$,  the proof for the other case being analogous. In this case, we have $p_{S_+}=p_{\lceil\frac{d+1}{2}\rceil}-C$ and $(x-\xi)p_{S_-}=p_{\lceil\frac{d+1}{2}\rceil}+C$ for some real number $\xi$. Therefore, $p_{S_+}-(x-\xi)p_{S_-}=-2C$. From this expression, it follows that the sum of the elements in $S_+$ equals $\xi$ plus the sum of the elements in $S_-$. Since $S_+$ and $S_-$ are integer sets, it follows that $\xi$ is an integer. By Theorem~\ref{thm:PTE_polynomials}, it then follows that the multisets $S_+$ and $S_- \cup \{\xi\}$ form a solution to $PTE_{\lceil\frac{d+1}{2}\rceil}$. Moreover, as $S_+$ and $S_-$ are disjoint sets, the only possible repeated element in the multiset $S_+\cup S_-\cup \{\xi\}$ is $\xi$. Therefore, $S_+$ and $S_- \cup \{\xi\}$ form a solution to $PTE_{\lceil \frac{d+1}{2} \rceil}^1$. This completes the proof of one direction of the statement.

$(\Longleftarrow)$

Let $d$ be odd and assume that there exists a solution formed by sets $E=\{e_1,\dots, e_{\lceil \frac{d+1}{2} \rceil}\}$ and $F=\{f_1,\dots, f_{\lceil \frac{d+1}{2} \rceil}\}$ for $PTE_{\lceil \frac{d+1}{2} \rceil}^0$. Consider the polynomials $p_E$ and $p_F$. By Theorem~\ref{thm:PTE_polynomials}, the difference $p_E - p_F$ is a non-zero constant. Define  $p_{d+1} \defeq p_E p_F$ and  $p_{\lceil\frac{d+1}{2}\rceil}\defeq \frac{p_E + p_F}{2}$. Observe that $p_{d+1}$ and $p_{\lceil\frac{d+1}{2}\rceil}$ are monic polynomials of degrees $d+1$ and $\lceil\frac{d+1}{2}\rceil$, respectively, and that $p_{d+1}$ has distinct integer zeros $E\sqcup F$. Also note that $|p_{\lceil\frac{d+1}{2}\rceil}|$ is constant on the zeros of $p_{d+1}$. 

We claim that $p_{\lceil\frac{d+1}{2}\rceil}$ has distinct real zeros and strongly interlaces $p_{d+1}$. To see this, consider Corollary~\ref{cor:PTE_interlacing} and assume without loss of generality that $e_1 < f_1$. By Corollary~\ref{cor:PTE_interlacing}, for each odd index $i$ in $\{1, \dots, \lceil\frac{d+1}{2}\rceil\}$, we have $e_i < f_i$, and by the definition of $p_{\lceil\frac{d+1}{2}\rceil}$, it follows that $p_{\lceil\frac{d+1}{2}\rceil}(e_i) = -p_{\lceil\frac{d+1}{2}\rceil}(f_i) \neq 0$. Therefore, by continuity, $p_{\lceil\frac{d+1}{2}\rceil}$ has a zero in $(e_i, f_i)$. Similarly, for each even $j$ in $\{1, \dots, \lceil\frac{d+1}{2}\rceil\}$, $p_{\lceil\frac{d+1}{2}\rceil}$ has a zero in $(f_j, e_j)$. Hence, $p_{\lceil\frac{d+1}{2}\rceil}$ has at least $\lceil\frac{d+1}{2}\rceil$ distinct real zeros. Since its degree is also $\lceil\frac{d+1}{2}\rceil$, these are all its zeros. Note that this argument also shows that all the zeros of $p_{\lceil\frac{d+1}{2}\rceil}$ lie strictly between the smallest and largest zeros of $p_{d+1}$. Moreover, between any two consecutive zeros of $p_{\lceil\frac{d+1}{2}\rceil}$, there are two consecutive elements from either $E$ or $F$, i.e. at least two zeros of $p_{d+1}$. Therefore, $p_{\lceil\frac{d+1}{2}\rceil}$ and $p_{d+1}$ strongly interlace. By applying Proposition~\ref{thm:construction_cospectral} to $p_{d+1}$ and $p_{\lceil\frac{d+1}{2}\rceil}$, we conclude that there exists a $d$–chain in which $0$ and $\lceil\frac{d+1}{2}\rceil$ are periodic and cospectral.

Now, assume that $d$ is even and that there exists a solution given by multisets $E$ and $F$ for $PTE_n^1$. As before, consider the polynomials $p_E$ and $p_F$ and define $q_{d+2} \defeq p_E p_F$ and $p_{\lceil\frac{d+1}{2}\rceil} \defeq \frac{p_E + p_F}{2}$. Observe that $q_{d+2}$ and $p_{\lceil\frac{d+1}{2}\rceil}$ are monic polynomials of degrees $d+2$ and $\lceil\frac{d+1}{2}\rceil$, respectively, and that $q_{d+2}$ has zero set $E \cup F$. Also note that $|p_{\lceil\frac{d+1}{2}\rceil}|$ is constant on the zeros of $q_{d+2}$. Using the same reasoning as before, together with Corollary~\ref{cor:PTE_interlacing}, we can show that $p_{\lceil\frac{d+1}{2}\rceil}$ has real zeros and strongly interlaces $q_{d+2}$. Note that the multiset $E \cup F$ may contain an element $\xi$ that appears twice. In this case, we define $p_{d+1}(x) \defeq \frac{q_{d+2}(x)}{x - \xi}$. If no element appears twice in $E \cup F$, we choose $\xi$ to be any element in $E \cup F$ that is not in $\{e_1, f_1, e_{\lceil\frac{d+1}{2}\rceil}, f_{\lceil\frac{d+1}{2}\rceil}\}$ and define $p_{d+1}(x) \defeq \frac{q_{d+2}(x)}{x - \xi}$. In either case, it follows from Corollary~\ref{cor:PTE_interlacing} and our previous argument that $p_{d+1}$ has distinct integer zeros and strongly interlaces $p_{\lceil\frac{d+1}{2}\rceil}$. Finally, by applying Proposition~\ref{thm:construction_cospectral} to $p_{d+1}$ and $p_{\lceil\frac{d+1}{2}\rceil}$, we conclude that there exists a $d$–chain in which $0$ and $\lceil\frac{d+1}{2}\rceil$ are periodic and cospectral.
\end{proof}


\section{Perfect State Transfer}\label{sec:PST}

In this section, we apply the theory developed in the previous sections to study $d$--chains with perfect state transfer between vertices $\ell = 0$ and $\tfrac{d}{2} < m \leq d$. We prove that, apart from trivial cases, there is no perfect state transfer between the first position and the one immediately after the middle of the chain. In Section~\ref{sec:PST_examples}, we also present new examples of positions admitting perfect state transfer.

The next result is the analogue of Proposition~\ref{thm:construction_cospectral} for perfect state transfer.

\begin{theorem}\label{thm:construction_pst} Let $p_{d+1}$ be a monic polynomial with integer zeros $\theta_d<\cdots<\theta_0$, and let $\frac{d}{2} < m \leq d$. Then, the following are equivalent:
\begin{enumerate}[(a)]
    \item There exists a $d$--chain with spectrum $\{\theta_0, \dots, \theta_d\}$ that admits perfect state transfer between vertices $0$ and $m$.
    \item There exists a monic polynomial $p_m$ of degree $m$ with real zeros, which strongly interlaces $p_{d+1}$, such that $p_m(\theta_s)=(-1)^{\theta_0-\theta_s} C$ for some constant $C>0$ and every $\theta_s$.
\end{enumerate}
\end{theorem}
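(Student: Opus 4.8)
The plan is to derive this from Theorem~\ref{thm:construction_OPS} (construction of an OPS from a strongly interlacing pair) together with the characterization of perfect state transfer in Theorem~\ref{thm:PST_characterization_paths} and Lemma~\ref{lem:orthonormality}, mirroring the proof of Proposition~\ref{thm:construction_cospectral} but carrying the sign information through.

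For the implication $(a)\Rightarrow(b)$: if a $d$--chain admits perfect state transfer between $0$ and $m$, then after a rescaling and translation we may assume (by Theorem~\ref{thm:PST_characterization_paths}) that the spectrum is integral and that $p_0(\theta_s)=p_\ell(\theta_s)=(-1)^{\theta_0-\theta_s}\,C\,p_m(\theta_s)$ for all $s$ and some $C>0$. Since $p_0\equiv 1$, this says $p_m(\theta_s)=(-1)^{\theta_0-\theta_s}C^{-1}$, so $p_m$ takes the prescribed alternating values up to a positive constant; and $p_m$ strongly interlaces $p_{d+1}=\phi^P$ by Corollary~\ref{cor:strongly_interlace}. (One should note here that the rescaling/translation needed to make the spectrum $\{\theta_0,\dots,\theta_d\}$ is already built into the hypothesis of the theorem, so no loss of generality is incurred; alternatively state $(a)$ and $(b)$ relative to a fixed integral spectrum as in Proposition~\ref{thm:construction_cospectral}.)

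For the converse $(b)\Rightarrow(a)$: given such a $p_m$, since $p_m$ and $p_{d+1}$ are monic of degrees $m$ and $d+1$ that strongly interlace, Theorem~\ref{thm:construction_OPS} yields an orthogonal polynomial sequence $(p_k)_{k=0,\dots,d+1}$ containing both $p_m$ and $p_{d+1}$. Reading off the three-term recurrence coefficients $a_k,\lambda_k$ (with $\lambda_k>0$) gives a Jacobi matrix $J$, i.e. a $d$--chain with spectrum $\{\theta_0,\dots,\theta_d\}$. By hypothesis $|p_m(\theta_s)|=C=C\,|p_0(\theta_s)|$ is constant, so by Lemma~\ref{lem:orthonormality} vertices $0$ and $m$ are cospectral; and the sign pattern $p_m(\theta_s)=(-1)^{\theta_0-\theta_s}C = (-1)^{\theta_0-\theta_s}C\,p_0(\theta_s)$ is exactly the condition in Theorem~\ref{thm:PST_characterization_paths} (with $\ell=0$). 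Hence this $d$--chain, after rescaling time so that $\tau=\pi$, admits perfect state transfer between $0$ and $m$.

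The only genuinely delicate point is bookkeeping the sign convention and the normalization: one must check that the ``$C>0$'' in (b) is consistent with the positivity of $C$ appearing in Theorem~\ref{thm:PST_characterization_paths}, and that the parity exponent $\theta_0-\theta_s$ is well-defined (which it is, since all $\theta_s$ are integers). Everything else is a direct citation of the machinery already assembled: strong interlacing $\Leftrightarrow$ membership in an OPS (Theorem~\ref{thm:construction_OPS}), cospectrality $\Leftrightarrow$ constant $|p_m|$ on the spectrum (Lemma~\ref{lem:orthonormality}), and PST $\Leftrightarrow$ cospectrality plus the alternating-sign condition (Theorem~\ref{thm:PST_characterization_paths}). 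I expect no real obstacle; the proof is essentially the same two-paragraph argument as Proposition~\ref{thm:construction_cospectral} with the absolute values upgraded to signed equalities.
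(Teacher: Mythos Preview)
Your proposal is correct and matches the paper's approach exactly: the paper's proof is a one-line citation of Theorem~\ref{thm:PST_characterization_paths} and Proposition~\ref{thm:construction_cospectral}, and you have simply unpacked the latter into its ingredients (Theorem~\ref{thm:construction_OPS}, Lemma~\ref{lem:orthonormality}, Corollary~\ref{cor:strongly_interlace}) while carrying the sign information through. There is nothing to add.
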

\begin{proof} This result follows directly from Theorem~\ref{thm:PST_characterization_paths} and Proposition~\ref{thm:construction_cospectral}.
\end{proof}

We remark that if in the ordering $\theta_d < \cdots < \theta_0$ the eigenvalues occur in alternating blocks of even or odd integers of length one or two (as it should be by Proposition~\ref{prop:zeros_position}), then the strong interlacing of the polynomials $p_m$ and $p_{d+1}$ in item $(b)$ of Theorem~\ref{thm:construction_pst} follows automatically.

The next result is an analogue of Theorem~\ref{thm:cospectrality_ordering_cases} and suggests that the most challenging case is constructing $d$--chains with perfect state transfer between $0$ and $\lceil \frac{d+1}{2}\rceil$, precisely the situation in Theorem~\ref{thm:PTE_equivalence}.

\begin{theorem}\label{thm:pst_ordering_cases} Let $m\leq d$ be such that there exists a $d$--chain with perfect state transfer between  vertices $0$ and $m$. Then, for any $m\leq d'\leq d$, there exists a $d'$--chain with perfect state transfer between $0$ and $m$.
\end{theorem}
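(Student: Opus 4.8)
The plan is to follow the proof of Theorem~\ref{thm:cospectrality_ordering_cases} almost verbatim, using Theorem~\ref{thm:construction_pst} in place of Proposition~\ref{thm:construction_cospectral} and keeping track of the additional sign constraint. First I would apply Theorem~\ref{thm:construction_pst} to the given $d$--chain with perfect state transfer between $0$ and $m$: this produces a monic polynomial $p_{d+1}$ with integer zeros $\theta_d<\cdots<\theta_0$ and a monic polynomial $p_m$ of degree $m$ with real zeros that strongly interlaces $p_{d+1}$ and satisfies $p_m(\theta_s)=(-1)^{\theta_0-\theta_s}C$ for some $C>0$ and every $s$. Let $S=\{\theta_0,\dots,\theta_d\}$ be the (integer) zero set of $p_{d+1}$.

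Since $|p_m(\theta_s)|=C$ for all $s$, Proposition~\ref{prop:zeros_position} applies verbatim. Writing $\tau_{m-1}<\cdots<\tau_0$ for the zeros of $p_m$, each interval $(\tau_{i+1},\tau_i)$ with $i\in\{0,\dots,m-2\}$ contains one or two elements of $S$, while $(\tau_0,\infty)$ and $(-\infty,\tau_{m-1})$ contain only $\theta_0$ and $\theta_d$. A count of the $d+1$ zeros shows that exactly $d-m$ of these intervals contain two elements of $S$; let $E$ be that set of indices. Given $m\le d'\le d$, fix $R\subseteq E$ with $|R|=d-d'$ (possible since $d-d'\le d-m$), discard one of the two elements of $S$ lying in $(\tau_{i+1},\tau_i)$ for each $i\in R$, and let $S'$ be the resulting set of $d'+1$ integers, with $q_{d'+1}$ the monic polynomial whose zero set is $S'$.

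It then remains to check the hypotheses of part (b) of Theorem~\ref{thm:construction_pst} for the polynomials $q_{d'+1}$ and $p_m$. First, $p_m$ strongly interlaces $q_{d'+1}$: the extreme zeros $\theta_0$ and $\theta_d$ of $p_{d+1}$ survive in $S'$ (only zeros strictly interior to $(\tau_{m-1},\tau_0)$ were discarded), so all zeros of $p_m$ lie strictly between the extreme zeros of $q_{d'+1}$, and every interval $(\tau_{i+1},\tau_i)$ still contains an element of $S'$ --- exactly the argument used in Theorem~\ref{thm:cospectrality_ordering_cases}. Second, the largest element of $S'$ is still $\theta_0$, so the identity $p_m(\theta_s)=(-1)^{\theta_0-\theta_s}C$, which held on all of $S$, holds a fortiori on $S'$. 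Finally $\tfrac{d'}{2}\le\tfrac{d}{2}<m\le d'$, so Theorem~\ref{thm:construction_pst} yields a $d'$--chain with perfect state transfer between $0$ and $m$, completing the proof.

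The only delicate point, and the step I would take care over, is the bookkeeping in the last paragraph: one must be certain that every discarded zero is strictly interior, so that the reference eigenvalue $\theta_0$ governing the parity pattern is never removed --- this is precisely what makes the sign condition in Theorem~\ref{thm:construction_pst}(b) pass automatically to the subset $S'$. Everything else is a direct transcription of the cospectral argument.
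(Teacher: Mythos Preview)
Your proof is correct and is exactly the argument the paper has in mind: the paper's own proof simply states that the result follows by the same argument as Theorem~\ref{thm:cospectrality_ordering_cases}, using Theorem~\ref{thm:construction_pst} in place of Proposition~\ref{thm:construction_cospectral}. Your write-up carries out that analogy in full, including the one extra observation needed here---that the reference eigenvalue $\theta_0$ is never discarded, so the parity condition $p_m(\theta_s)=(-1)^{\theta_0-\theta_s}C$ restricts to $S'$ unchanged.
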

\begin{proof} The proof of this result is analogous to the proof of Theorem~\ref{thm:cospectrality_ordering_cases} and uses Theorem~\ref{thm:construction_pst}.
\end{proof}

Unlike the case of periodic and cospectral vertices in Theorem~\ref{thm:PTE_equivalence}, where the existence of $d$--chains corresponds to certain solutions of the Prouhet--Tarry--Escott problem, perfect state transfer between vertices $0$ and $\lceil \frac{d+1}{2} \rceil$ in a $d$--chain can actually be ruled out for $d \geq 4$. This shows that even for weighted $d$--chains perfect state transfer imposes stricter conditions than periodicity and cospectrality alone.

\begin{theorem}\label{thm:no_pst_half} If $d \geq 4$, then there is no perfect state transfer between vertices $0$ and $\lceil\frac{d+1}{2}\rceil$ in a $d$--chain.
\end{theorem}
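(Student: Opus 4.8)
The plan is to combine the characterization in Theorem~\ref{thm:construction_pst} with the parity constraints forced by Proposition~\ref{prop:zeros_position}, and derive a contradiction from a degree count. Set $n = \lceil\frac{d+1}{2}\rceil$. Suppose there is perfect state transfer between $0$ and $n$ in a $d$--chain; by Theorem~\ref{thm:construction_pst} there is a monic $p_{d+1}$ with integer zeros $\theta_d < \cdots < \theta_0$ and a monic $p_n$ of degree $n$ strongly interlacing it with $p_n(\theta_s) = (-1)^{\theta_0 - \theta_s} C$ for all $s$ and some $C > 0$. I would first record that the sign of $p_n(\theta_s)$ is governed simultaneously by two things: by the parity of $\theta_0 - \theta_s$ (the PST condition), and, via Proposition~\ref{prop:zeros_position}, by which gap $(\tau_{i+1},\tau_i)$ between consecutive roots of $p_n$ the eigenvalue $\theta_s$ falls into, namely the sign is $+$ or $-$ according to the parity of $i$. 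Reconciling these two descriptions forces a rigid structure on how the $\theta_s$ are distributed among the $n$ sign-gaps and on their parities.

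Next I would push this rigidity into an arithmetic contradiction. Since $p_n$ takes only the values $\pm C$ on the $d+1$ integers $\theta_s$, the polynomials $p_n - C$ and $p_n + C$ are monic of degree $n$ whose product is divisible by $p_{d+1}$; writing $S_+$ and $S_-$ for the eigenvalues where $p_n$ is $+C$ and $-C$, one gets $p_{S_+}\mid p_n - C$ and $p_{S_-}\mid p_n+C$, exactly as in the proof of Theorem~\ref{thm:PTE_equivalence}. Because $|S_+|,|S_-|\le n$ and $|S_+|+|S_-| = d+1 \ge 2n-1$, the degrees are essentially forced ($n$ and $n$ when $d$ is odd, $n$ and $n-1$ when $d$ is even), so $p_n - C = p_{S_+}$ and $p_n + C = p_{S_-}$ (up to the single extra linear factor $(x-\xi)$ in the even case). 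Thus $S_+$ and $S_-$ (with $\xi$ adjoined in the even case) form an ideal PTE solution. The new ingredient beyond Theorem~\ref{thm:PTE_equivalence} is that the PST sign pattern $(-1)^{\theta_0-\theta_s}$ ties the set $S_+$ versus $S_-$ to the parity of each $\theta_s$: $\theta_s \in S_+$ iff $\theta_s \equiv \theta_0 \pmod 2$. Combined with the interlacing pattern of Corollary~\ref{cor:PTE_interlacing} — which in an ideal solution forces the sorted union $E\sqcup F$ to alternate in the pattern $e,f,f,e,e,f,f,\dots$ — one sees the members of $S_+=E$ and $S_-=F$ must alternate parity in a way incompatible with the $e,f,f,e,e,\dots$ blocking, since two consecutive elements of the \emph{same} set sit in the same sign-gap yet would be forced to have the same parity while being consecutive integers-apart in a strictly increasing integer sequence is impossible once a gap of size two occurs. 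Concretely, a gap $(\tau_{i+1},\tau_i)$ containing two eigenvalues $\theta_j > \theta_{j+1}$ of the same sign forces $\theta_j \equiv \theta_{j+1} \pmod 2$, hence $\theta_j - \theta_{j+1} \ge 2$; tracking this across all $n$ gaps and using that the total spread is fixed by $d+1$ distinct integers yields the contradiction for $d \ge 4$.

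The main obstacle, I expect, is making the parity bookkeeping airtight: one must carefully handle the endpoint gaps $(-\infty,\tau_{n-1})$ and $(\tau_0,\infty)$ (which contain exactly one eigenvalue each, namely $\theta_d$ and $\theta_0$), the distinction between $d$ odd and $d$ even (the latter carrying the stray root $\xi$), and the small residue cases $d \equiv 0,1,2,3 \pmod 4$ that already appeared in the proof of Theorem~\ref{thm:PTE_equivalence}. A clean way to organize it is to show directly that the PST condition forces $S_+$ to consist entirely of integers of one parity class and $S_-$ of the other, then observe that inside any sign-gap containing two eigenvalues those two are same-parity, hence differ by at least $2$, whereas Corollary~\ref{cor:PTE_interlacing} places the two members of $\{e_i, e_{i+1}\}$ (or $\{f_i,f_{i+1}\}$) as the "outer" pair straddling a single opposite-set pair, and a length/parity count of the resulting strictly increasing integer sequence $\theta_d < \cdots < \theta_0$ overdetermines $\theta_0 - \theta_d$ unless $d \le 3$. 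For $d = 4$ one can alternatively just finish by hand: $n = 3$, so $|S_+| = 3$, $|S_-| = 2$, and $p_3 - C, p_3 + C$ would give an ideal $PTE_3^1$ solution whose unique (up to affine maps) shape $\{0,3,3\},\{1,1,4\}$-type pattern is checked to violate the parity alignment required by PST, ruling out the base case and thereby, via Theorem~\ref{thm:pst_ordering_cases}, settling all $d \ge 4$ at positions $0$ and $\lceil\frac{d+1}{2}\rceil$.
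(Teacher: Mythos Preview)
Your setup is correct and matches the paper: you reduce to the situation where $S_+$ and $S_-$ (the eigenvalues where $p_n$ takes the value $+C$ resp.\ $-C$) form an ideal PTE solution, and the PST sign condition $p_n(\theta_s)=(-1)^{\theta_0-\theta_s}C$ forces $S_+$ to consist entirely of integers of one parity and $S_-$ of the other. This much is exactly what the paper does.

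The gap is in your proposed contradiction. You write that tracking the parity gaps ``across all $n$ gaps and using that the total spread is fixed by $d+1$ distinct integers yields the contradiction for $d\ge 4$'', but the spread $\theta_0-\theta_d$ is not fixed or bounded above by anything: the $\theta_s$ are arbitrary distinct integers, so $\theta_0-\theta_d$ can be as large as you like. Observing that same-parity consecutive eigenvalues differ by at least $2$ only produces a \emph{lower} bound on the spread, which is never violated. Nothing in Corollary~\ref{cor:PTE_interlacing} gives an upper bound to contradict, so no contradiction is reached. Your alternative route --- check $d=4$ by hand and invoke Theorem~\ref{thm:pst_ordering_cases} --- is also flawed: that theorem fixes $m$ and lets $d'$ range over $m\le d'\le d$, but here $m=\lceil\frac{d+1}{2}\rceil$ moves with $d$. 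Ruling out PST between $0$ and $3$ in a $4$--chain says nothing about PST between $0$ and $4$ in a $6$--chain, since Theorem~\ref{thm:pst_ordering_cases} applied there would only descend to a $4$--chain with PST between $0$ and $4$, which is end-to-end and certainly exists.

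The paper's actual contradiction is number-theoretic and does not rely on interlacing or spread at all. Having obtained the PTE solution $(E,F)$ with $E$ all even (up to one element) and $F$ all odd (up to one element), it invokes a result of Kleiman that for any ideal $PTE_n$ solution $(G,H)$ one has $(n-1)!\mid p_G(0)-p_H(0)$. For $n=\lceil\frac{d+1}{2}\rceil\ge 3$ this forces $2\mid C$, hence $p_E\equiv p_F\pmod 2$. But over $\mathbb{F}_2[x]$ the polynomial $p_E$ has $0$ as a root of multiplicity at least $n-1\ge 2$ (all but possibly one $e_i$ are even), while $p_F$ has $0$ as a root of multiplicity at most $1$; this is the contradiction. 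The missing ingredient in your attempt is precisely this divisibility input --- without it, the parity separation of $S_+$ and $S_-$ alone does not obstruct an ideal PTE solution.
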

\begin{proof} Assume that $d$ is such that there exists a $d$–chain with perfect state transfer between vertices $0$ and $\lceil\frac{d+1}{2}\rceil$. By Theorem~\ref{thm:construction_pst}, $S_+$ and $S_-$ are the sets of even and odd zeros of $p_{d+1}$, respectively.

From the proof of Theorem~\ref{thm:PTE_equivalence}, we know that: if $d$ is odd, then $S_+$ and $S_-$ form a solution to $PTE_{\lceil \frac{d+1}{2}\rceil}^0$; if $d$ is even, then there exists $\xi \in \mathbb{Z}$ such that either ${\xi} \cup S_+$ and $S_-$, or ${\xi} \cup S_-$ and $S_+$ form a solution to $PTE_{\lceil \frac{d+1}{2}\rceil}^1$.

In either case, we obtain a solution with multisets $E$ and $F$ to $PTE_{\lceil \frac{d+1}{2}\rceil}^1$ where we may assume, without loss of generality, that either: $E$ consists entirely of even integers, and $F$ contains at most one even number; $F$ consists entirely of odd integers, and $E$ contains at most one odd number. Consider the polynomials $p_E$ and $p_F$, and note that, by Theorem~\ref{thm:PTE_polynomials}, we have $p_E(x)-p_F(x)=C$ for some nonzero integer $C$.

Now, by a result of Kleiman~\cite{kleiman278note} (see also~\cite[p.~4]{coppersmith2024ideal}), if $G=\{g_1,\dots,g_n\}$ and $H =\{h_1, \dots, h_n\}$ form a solution to $PTE_n$, then 
\[(n-1)! \text{ divides } \prod_{i\in [n]}g_i - \prod_{j\in[n]}h_j = p_G(0)-p_H(0).\] 
Applying this to our setting, it follows that if $\lceil\frac{d+1}{2}\rceil -1\geq 2$, i.e., if $d\geq 4$, then $2$ divides $C$. 

Hence, if $d\geq 4$, $p_E(x) \equiv p_F(x)\pmod{2}$, and so $p_E$ and $p_F$ have the same factorization in $\mathbb{F}_2[x]$. But $p_E(x)$ has $0$ as a root with multiplicity at least $\lceil\frac{d+1}{2}\rceil -1\geq 2$ while $p_F(x)$ has $0$ as a root with multiplicity at most $1$ in $\Fds_2[x]$. Therefore, $p_E$ and $p_F$ cannot have the same factorization in $\Fds_2[x]$. We conclude that $d<4$.
\end{proof}

\subsection{Perfect State Transfer Examples}\label{sec:PST_examples}

In this section, we present examples of $d$--chains that exhibit perfect state transfer between vertices $0$ and $\frac{d}{2} < m \leq d$. We begin by providing examples that illustrate a converse to Theorem~\ref{thm:no_pst_half}.

Observe that for $d = 1$ and $d = 2$, the paths $P_2$ and $P_3$, with vertex weights equal to $0$ and edge weights equal to $1$, are examples of $d$--chains with perfect state transfer between vertices $0$ and $\lceil \frac{d+1}{2} \rceil$. For $d=3$, we have the following example.

\begin{example}\label{ex:pst_3} Let $p_4\defeq (x^2-1)(x^2-4)$ be the monic polynomial with zero set $\{\pm 1, \pm 2\}$. If we define $p_2 \defeq x^2 - \frac{5}{2}$, then $p_2(\theta_s) = (-1)^{2-\theta_s} \frac{3}{2}$ for all $\theta_s\in \{\pm 1, \pm 2\}$ . Observe that $p_2$ has real roots and strongly interlaces $p_4$. By Theorem~\ref{thm:construction_pst}, the pair $p_4$ and $p_2$ provides an example of a $3$--chain with perfect state transfer between vertices $0$ and $2$.
\end{example}

Since Theorem~\ref{thm:no_pst_half} rules out perfect state transfer between $0$ and $\lceil \tfrac{d+1}{2} \rceil$ for all $d \geq 4$, the next possible pair is $0$ and $\lceil\frac{d+3}{2}\rceil$. When $d$ is odd, a sufficient (though not necessary) condition for the existence of $d$--chains with perfect state transfer between these positions is the existence of a solution to $PTE_{\frac{d+3}{2}}^0$ in which both sets contain exactly one odd number. The following example illustrates this construction.

\begin{example} A non-exhaustive search among the known solutions to $PTE_5$ yields the following two solutions to $PTE_5^0$, each with exactly one odd number in each set:
\[
E_1\defeq\{-8 , -4 , 0 , 8 , 9 \}, \quad F_1\defeq\{-7 , -6 , 2 , 6 , 10\},
\]
\[
E_2\defeq\{-55 , -24 , -6 , 32 , 58\}, \quad F_2\defeq\{-52 , -34 , 9 , 22 , 60\}.
\]
Let $\tilde{E}_1$ and $\tilde{F}_1$ be $E_1$ and $F_1$ divided by two. Define the monic polynomial 
\[
p_8(x) \defeq x(x-1)(x+2)(x-3)(x+3)(x-4)(x+4)(x-5),
\] 
whose zero set is exactly the integers in $\tilde{E}_1 \sqcup \tilde{F}_1$, and set 
\[
p_5(x) \defeq \frac{1}{2}\bigl(p_{\tilde{E}_1}(x) + p_{\tilde{F}_1}(x)\bigr) = x^5 - \frac{5}{2}x^4 - 25x^3 + 40x^2 + 144x - \frac{315}{4}.
\] 
Then $p_8$ and $p_5$ strongly interlace, and for every root $\theta_s$ of $p_8$ we have
\[
p_5(\theta_s) = (-1)^{5-\theta_s} \frac{315}{4}.
\] 
By Theorem~\ref{thm:construction_pst}, this gives a $7$--chain with perfect state transfer between $0$ and $5$. Moreover, by Theorem~\ref{thm:pst_ordering_cases}, it also gives a $6$--chain with perfect state transfer between $0$ and $5$.
\end{example}


\section{Questions}\

Our first question asks for a complete characterization of the positions of pairs of periodic cospectral vertices (or admitting perfect state transfer) in $d$--chains.

\begin{question} Consider a $d$--chain. What are the possible positions of pairs of periodic cospectral vertices $\ell<m$? For which of these pairs is perfect state transfer possible?
\end{question}

We observe that Proposition~\ref{thm:construction_cospectral} and Theorem~\ref{thm:construction_pst} address only the case $\ell=0$. This is because Theorem~\ref{thm:construction_OPS} characterizes when two polynomials belong to an orthogonal polynomial sequence, whereas no analogous characterization is known for three polynomials. Moreover, even in this restricted setting with $\ell=0$, number-theoretic obstructions arise, as shown by Theorems~\ref{thm:PTE_equivalence} and~\ref{thm:no_pst_half}. This motivates the following problem.

\begin{problem} For every sufficiently large $d$, construct $d$--chains such that the vertices $0$ and $m_d$, with $\frac{d}{2} < m_d \leq d$, are cospectral and periodic (or exhibit perfect state transfer), and such that $\lim_{d \to \infty} \frac{m_d}{d}$ is as small as possible.
\end{problem}


\section*{Acknowledgements}

The authors thank Alastair Kay for conversations about the topic of this paper a while ago. We acknowledge the financial support of FAPEMIG, CNPq and CAPES.

\bibliographystyle{plain}
\bibliography{references.bib}

	
\end{document}